\documentclass[11pt,letterpaper]{article}
\usepackage{amsthm,amsmath,amssymb,amsfonts} 
\usepackage{epsfig} 
\usepackage{latexsym,nicefrac,bbm}
\usepackage{xspace}
\usepackage{color,fancybox,graphicx,subfigure,fullpage}
\usepackage[top=1in, bottom=1in, left=1in, right=1in]{geometry}
\usepackage{tabularx}
\usepackage{hyperref} 
\usepackage{pdfsync}
\usepackage{multicol}
\usepackage{cite,cleveref}

\usepackage{float}
\renewcommand{\epsilon}{\varepsilon}

\usepackage{epsfig}
\usepackage{verbatim}

\theoremstyle{definition}
\newtheorem{problem}{Problem}
\newtheorem{assumption}{Assumption}

\usepackage{algorithmicx}
\usepackage{algorithm,caption}
\usepackage{algpseudocode}

\clubpenalty=10000
\widowpenalty = 10000

\usepackage{mhequ}
\def \be{\begin{equs}}
\def \ee{\end{equs}}

\newtheorem{theorem}{Theorem}[section]
\newtheorem{lemma}[theorem]{Lemma}
\newtheorem{definition}[theorem]{Definition}

\newtheorem{proposition}[theorem]{Proposition}

\newtheorem*{theorem*}{Theorem}

\crefname{theorem}{Theorem}{Theorems}
\crefname{observation}{Observation}{Observations}
\crefname{proposition}{Proposition}{Propositions}
\crefname{claim}{Claim}{Claims}
\crefname{condition}{Condition}{Conditions}
\crefname{example}{Example}{Examples}
\crefname{fact}{Fact}{Facts}
\crefname{lemma}{Lemma}{Lemmas}
\crefname{corollary}{Corollary}{Corollaries}
\crefname{definition}{Definition}{Definitions}
\crefname{remark}{Remark}{Remarks}

\title{\bf Convex Optimization with Unbounded Nonconvex Oracles using Simulated Annealing}


\usepackage{authblk}

\author[1]{ Oren Mangoubi}
\author[2]{Nisheeth K. Vishnoi}
\affil[1,2]{\small \'{E}cole Polytechnique F\'{e}d\'{e}rale de Lausanne (EPFL), Switzerland}

\begin{document}

\maketitle

\begin{abstract}
We consider the problem of minimizing a convex objective function $F$ when one can only evaluate its noisy approximation $\hat{F}$.
Unless one assumes some structure on the noise, $\hat{F}$ may be an arbitrary nonconvex function, making the task of minimizing $F$ intractable.
To overcome this, prior work has often focused on the case when $F(x)-\hat{F}(x)$ is uniformly-bounded.
In this paper we study the more general case when the noise has magnitude $\alpha F(x) + \beta$ for some $\alpha, \beta > 0$, and present a polynomial time algorithm that finds an approximate minimizer of $F$ for this noise model.
Previously, Markov chains, such as the stochastic gradient Langevin dynamics, have been used to arrive at approximate solutions to these optimization problems.
However, for the noise model considered in this paper, no single temperature allows such a Markov chain to both mix quickly and concentrate near the global minimizer.
We bypass this by combining ``simulated annealing" with the stochastic gradient Langevin dynamics, and gradually decreasing the temperature of the chain in order to approach the global minimizer.
 As a corollary one can approximately minimize a nonconvex function that is close to a convex function; however, the closeness can deteriorate as one moves away from the optimum.
\end{abstract}

\section{Introduction}
A general problem that arises in machine learning, computational mathematics and optimization is that of minimizing a convex objective function $F:\mathcal{K} \rightarrow \mathbb{R}$, where $\mathcal{K} \subseteq \mathbb{R}^d$ is convex, and one can only evaluate $F$ {\em approximately}.
Let $\hat{F}$ denote this ``noisy'' approximation to $F$.
In this setting, even though the function $F$ is convex, we can no longer assume that $\hat{F}$ is convex.
However, if one does not make any assumption on the noise function, the problem of minimizing $F$ can be shown to be arbitrarily hard.
Thus,  having some restrictions on the noise function is necessary.

A well studied setting is that of  ``additively'' bounded noise \cite{applegate_kannan,singer2015information,risteski2016algorithms, hitting_times}.
Here, the noise $N(x):=F(x)-\hat{F}(x)$ is assumed to have  a uniform bound on $\mathcal{K}$: $\sup_{x \in \mathcal{K}} |N(x)| \leq \beta$ for some $\beta \geq 0$.
In practice, however, the strongest bound we might have for the noise may not be uniform on $\mathcal{K}$.     
One such noise model is that of ``multiplicative''  noise  where one assumes that  $|\hat{F}(x) - F(x)| \leq \alpha F(x),$ for all $x\in  \mathcal{K}$  and some $\alpha \geq 0$.
{In other words, $|N(x)| = |\hat{F}(x) - F(x)| = |F(x)|\times \alpha$, which motivates the name.    %
 One situation where multiplicative noise arises is when $F$ decomposes into a sum of  functions that are easier to compute, but these component functions are computed via Monte Carlo integration and the stopping criteria of these integration methods  depend on the computed value of the component function 
 \cite{chen_thesis}.
For other natural settings where multiplicative noise arises see \cite{chen2015stochastic,jebalia2008multiplicative,jebalia2011log}.
{More generally, one can model the noisy function $\hat{F}$ by decomposing it into additive and multiplicative components, in the following sense:
 \begin{definition} \label{def:noise_both2}
We say that $\hat{F}$ has additive and multiplicative noise  levels $(\beta, \alpha)$ if
\be \label{eq:model_add_mult2}
|\hat{F}(x)- F(x)| \leq \alpha(F(x) -F(x^{*})) + \beta.
\ee
\end{definition}}
\noindent
Note that this noise model has the natural property that the noise level does not change if we replace $F(x)$ and $\hat{F}(x)$ with a new objective function $F(x) + C$ and a new oracle $\hat{F}(x) + C$ for the same number $C>0$.
To motivate this definition, we consider a situation where the error in computing the objective function depends on the amount of time spent on the computation.  For instance, to compute the objective function one may be required to solve a complicated system of partial differential equations, where a finer discretization leads to a greater accuracy but also to a longer computation time \cite{conrad2018parallel, cliffe2011multilevel}.  In this case, one can start by using a short computation time for each evaluation and gradually increase the computation time as one approaches the minimum value.  Whereas a purely additive noise model would require one to have a uniform computational cost at each step, the multiplicative noise model allows one to analyze methods where one has the flexibility to use a different cost at each step.

As another application, we consider the problem of solving a system of noisy linear or nonlinear black-box equations where one wishes to find a value of $x$ such that $h_i(x)=0$ for each component function $h_i$ \cite{chen2015stochastic}. 
 Since each equation $h_i(x)=0$  must be satisfied simultaneously for a single value of $x$, it is not enough to solve each equation individually.  
One way in which we may solve this system of equations is by minimizing an objective function of the form
$F(x) = \frac{1}{n} \sum_{i=1}^n(h_i(x))^2$
since any value of $x$ that minimizes $F(x)$ also solves the system of equations $h_i(x)=0$ for every $i$, provided that such a solution exists.  While it is true that one may instead minimize the objective function  $\frac{1}{n} \sum_{i=1}^n |h_i(x)|$ to solve the same system of equations, it is oftentimes preferable to use the quadratic objective function  $F(x) = \frac{1}{n} \sum_{i=1}^n(h_i(x))^2$ since it is much smoother and can lead to faster convergence in practice \cite{chen2015stochastic}.
Rather than having access to an exact computation oracle for $h_i(x)$ one may instead only have access to a perturbed function $\hat{h}_i(x) = h_i(x) + N_i(x).$
  Here $N_i(x)$ is a noise term that may have additive or multiplicative noise (or both), that is, $|N_i(x)| \leq b + ah_i(x)$ for some $a,b \geq 0$. 
Hence, instead of minimizing the objective function $F$, one must try to minimize a noisy function of the form 
$\hat{F}(x) =\frac{1}{n} \sum_{i=1}^n(\hat{h}_i(x))^2.$
A straightforward calculation shows that the fact that $|N_i(x)| \leq b + ah_i(x)$ for all $i$ implies that 
\be
|\hat{F}(x) - F(x)| 
 \leq  (2a + a^2 + 2b+2ab)(F(x) - F(x^{*})) + \frac{1}{2}(b+ab) + b^2,
\ee
where $F(x^{*}) = 0$.  Thus, $\hat{F}$ can be modeled as having additive noise  level $\beta = \frac{1}{2}(b+ab) + b^2$ together with multiplicative noise  level $\alpha = 2a + a^2 + 2b+2ab$. 
  In particular, even if each component function only has additive noise (that is, if $a=0$), $\hat{F}$ will still have nonzero multiplicative noise $\alpha = 2b$.
Thus we arrive at  the following  general problem.
\begin{problem}\label{problem1}
Let $\mathcal{K} \subseteq \mathbb{R}^d$ be a convex body and $F: \mathcal{K} \rightarrow \mathbb{R}$ be a  convex function, where $F(x^{*})=0$
and $x^\star$ is a minimizer of $F$ in $\mathcal{K}$.
Given access to a noisy oracle $\hat{F}$ for $F$ that has additive and multiplicative noise  levels $(\beta,\alpha)$.  
The problem is to find an approximate minimizer $\hat{x}$ for $F$ such that $F(\hat{x}) \leq \hat{\varepsilon}$ for a given $\hat{\varepsilon} >0$.
\end{problem}

\noindent
One of the first papers to study this problem was  by  \cite{applegate_kannan} in the special case of additive noise (where $\alpha=0$).  
Specifically, they studied the related problem of sampling from the canonical distribution $ \frac{1}{\int_{\mathcal{K}}e^{-\xi \hat{F}(y)}\mathrm{d}y}e^{-\xi \hat{F}(x)}$ when $\hat{F}$ is an additively noisy version of a convex function. 
Roughly, their algorithm discretized $\mathcal{K}$ with a grid and ran a simple random walk on this grid.  
Using their Markov chain one can solve Problem \ref{problem1} in the special case of $\alpha = 0$ for some error $\hat{\varepsilon} =\tilde{O}(d\beta)$ with running time that is polynomial in $d$  and various other parameters as well.

In \cite{Simulated_Annealing_Nonassymptotic},  Problem \ref{problem1} was studied in a  special case where the noise decreases to zero near the global minimum\footnote{\cite{Simulated_Annealing_Nonassymptotic} also study separately the special case of purely additive noise, but not simultaneously in the presence of a non-uniformly bounded noise component.} and $F$ is $\mathfrak{m}$-strongly convex.
Specifically, they study the situation where the noise is bounded by $|N(x)| \leq c\|x-x^{\star}\|^p$, for some $0<p<2$ and some $c>0$.
Roughly speaking, in this regime they show that one can obtain an approximate minimizer $\hat{x}$ such that $F(\hat{x}) - F(x^\star) \leq O((\frac{d}{\mathfrak{m}})^{\frac{1}{2-p}})$ in polynomial time. 
 To find an approximate minimizer $\hat{x}$, they repeatedly run a simulated annealing Markov chain based on the ``hit-and-run'' algorithm.
 They state that they are ``not aware of optimization methods for such a problem" outside of their work, and that ``it is rather surprising that one may obtain provable guarantees through simulated annealing" under noise with non-uniform bounds even in the special case of strong convexity.

Problem \ref{problem1} was also studied by \cite{hitting_times}  in the special case of  additive noise (where $\alpha=0$ but $\beta \geq 0$).
The main component of their algorithm is the  stochastic gradient Langevin dynamics (SGLD) Markov chain that runs at a fixed ``temperature" parameter to find an approximate minimizer $\hat{x}$.  
In particular, they show that one can solve Problem \ref{problem1} in the special case of $\alpha = 0$ for some error $\hat{\varepsilon} =\tilde{O}(d\beta)$ with running time that is polynomial in $d$ and $\beta$ and various smoothness parameters.
 Other related works that have studied various aspects of optimization under additive noise  include  \cite{singer2015information,hazan2016graduated, risteski2016algorithms}.

The difficulty of extending these results to the general case when both $\alpha,\beta >0$, and $F$ is not necessarily strongly convex arises from the fact that, in this setting, the noise can become unbounded and the prior Markov chain  approaches do not seem to work. 
Roughly, the Markov chains of \cite{applegate_kannan,hitting_times} run at a fixed temperature and, due to the fact that the noise can be very different at different levels of $F$, would either get stuck in  a local minimum or never come close to the minimzer; see Figure \ref{fig:annealing} for an illustration.
The Markov chain of \cite{Simulated_Annealing_Nonassymptotic} on the other hand varies the temperature but the strong convexity of $F$ makes the task of estimating progress significantly simpler.

\subsection{Our contributions}
The main result of this paper is the first polynomial time algorithm that solves Problem \ref{problem1} when $\alpha,\beta>0$ without assuming that $F$ is strongly convex.
Our algorithm combines simulated annealing (as in \cite{Simulated_Annealing_Nonassymptotic}) with the stochastic gradient Langevin dynamics (as in \cite{hitting_times}).
We assume  that $\|\nabla F\| \leq \lambda$ and  that $\mathcal{K}$ is contained in a bounding ball of radius $R>0$, and that $\mathcal{K}= \mathcal{K}' + B(0,r')$ for some $r'>0$, where ``+" denotes the Minkowski sum.
Note that, given bounds  $\lambda$ and $R$, one can deduce an upper bound of $\lambda R$ on the value of  $F$ in $\mathcal{K}$. 
Also note that while the Lipschitz gradient assumption helps us prove running time bounds for our algorithm, it is likely not needed to solve the problem.

\begin{theorem} \label{thm:summary}{\bf [Informal; see Section \ref{sec:theorem} for a formal description]}
For any desired accuracy level $\hat{\varepsilon}$, additive noise level  $\beta= O(\hat{\epsilon})$, and a multiplicative noise level $ \alpha$ that is a sufficiently small constant, 
there exists an algorithm that solves Problem \ref{problem1} and outputs $\hat{x}$ with high probability such that
  $F(\hat{x})-F(x^{\star}) \leq \hat{\varepsilon}$. 
The  running time of the algorithm is polynomial in $d$, $R$, $1/r'$, and $\lambda$, whenever $\alpha \leq \tilde{O}(\frac{1}{d})$  and $\beta \leq \tilde{O}(\frac{\hat{\varepsilon}}{d})$.
\end{theorem}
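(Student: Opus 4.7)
The plan is to design and analyze a simulated-annealing algorithm built on top of the stochastic gradient Langevin dynamics (SGLD). Specifically, I would run SGLD with the noisy oracle $\hat{F}$ at a geometrically increasing sequence of inverse temperatures $\xi_0 < \xi_1 < \cdots < \xi_K$, where the starting temperature $1/\xi_0$ is large enough that the Gibbs measure $\pi_{\xi_0}\propto e^{-\xi_0 F}\mathbf{1}_{\mathcal{K}}$ is comparable to the uniform distribution on $\mathcal{K}$ (so we can initialize easily), and the final inverse temperature $\xi_K = \widetilde{\Theta}(d/\hat{\varepsilon})$ is chosen so that a typical draw from $\pi_{\xi_K}$ has $F$-value at most $\hat{\varepsilon}$ by the standard concentration of $F$ under log-concave measures.

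First, I would record two facts that pin down the parameter regime. (i) For the Gibbs distribution of a convex $F$ on a convex body, $\mathbb{E}_{\pi_\xi}[F(X) - F(x^{\star})] = O(d/\xi)$, which fixes the choice of $\xi_K$ and gives the output guarantee via Markov's inequality. (ii) Replacing $F$ with $\hat{F}$ in the SGLD drift is, by a Girsanov-type computation, equivalent to running the exact chain on a tilted potential that differs from $\xi F$ pointwise by at most $\xi \lvert \hat F - F\rvert \leq \xi(\alpha(F - F(x^{\star})) + \beta)$. Integrating against $\pi_\xi$, the expected tilt is $O(\alpha d) + O(\xi \beta)$; requiring this to stay bounded by a small constant at every temperature up to $\xi_K = \widetilde{\Theta}(d/\hat{\varepsilon})$ yields exactly the hypotheses $\alpha = \widetilde{O}(1/d)$ and $\beta = \widetilde{O}(\hat{\varepsilon}/d)$.

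Second, I would implement the annealing and mixing steps. Using a geometric schedule $\xi_{k+1} = (1+c/\sqrt{d})\,\xi_k$ (the standard schedule used in simulated-annealing volume computation), $\pi_{\xi_k}$ is an $O(1)$-warm start for $\pi_{\xi_{k+1}}$, so only $K = \widetilde{O}(\sqrt{d}\log(d/\hat{\varepsilon}))$ temperature levels are needed. At each level, I would invoke a standard warm-start mixing bound for SGLD on a convex body, controlled by the isoperimetric constant of log-concave measures. The condition $\mathcal{K} = \mathcal{K}' + B(0,r')$, together with the bounding radius $R$ and the gradient bound $\|\nabla F\|\le \lambda$, is used to bound the discretization and reflection errors at the boundary, giving a polynomial-in-$(d,R,1/r',\lambda)$ per-step cost. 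The output $\hat{x}$ is the last iterate at inverse temperature $\xi_K$.

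The main obstacle, and the technical heart of the argument, is handling the multiplicative component of the noise across the annealing schedule. Since $\alpha F(x)$ is unbounded on $\mathcal{K}$, any attempt to view the noisy chain as a uniform pointwise perturbation of the exact chain fails; this is precisely why prior fixed-temperature methods like \cite{applegate_kannan,hitting_times} do not extend to this setting. The key observation I would exploit is that only the \emph{average} perturbation under $\pi_{\xi_k}$ needs to be small: a Girsanov-type KL bound gives a per-temperature error of order $\xi_k^2\, \mathbb{E}_{\pi_{\xi_k}}[(\alpha F + \beta)^2]$, which the sub-exponential concentration of $F$ under $\pi_{\xi_k}$ reduces to $(\alpha d)^2 + (\xi_k \beta)^2$. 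Summing these errors across the $K$ temperatures and combining with the warm-start mixing bound yields a total-variation bound between the algorithm's output and $\pi_{\xi_K}$ that is a small constant under the stated hypotheses; combining with fact (i) then produces $\hat{x}$ with $F(\hat{x}) - F(x^{\star}) \leq \hat{\varepsilon}$ with high probability, completing the proof.
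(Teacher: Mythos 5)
Your proposal shares the high-level idea of simulated annealing with SGLD and the correct parameter regime $\alpha=\widetilde O(1/d)$, $\beta=\widetilde O(\hat\varepsilon/d)$, but the core analytical step is incorrect, and the overall mechanism is genuinely different from (and weaker than) what the paper does.

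The central gap is in your step (ii) and its elaboration in the final paragraph. You invoke a ``Girsanov-type'' comparison between the noisy chain (driven by a stochastic gradient of $\hat F$, or of its smoothing) and the exact chain, and claim the relevant quantity is $\xi|\hat F - F|$, integrated against $\pi_\xi$. This conflates two different objects: a Girsanov or KL comparison between two Langevin-type Markov chains is controlled by the \emph{drift} difference, i.e.\ by $\|\nabla\hat F - \nabla F\|$ (or its square along trajectories), not by the potential difference $|\hat F - F|$. The noise model in Definition~\ref{def:noise_both2} only bounds the latter; it places no constraint on $\nabla\hat F - \nabla F$, which can be arbitrarily large even when $|\hat F - F|$ is uniformly tiny. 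So the premise that the noisy dynamics are a small perturbation of the exact dynamics in any path-space or KL sense does not follow from the hypotheses, and the subsequent summation of per-temperature errors is not grounded. Relatedly, your claim that ``only the \emph{average} perturbation under $\pi_{\xi_k}$ needs to be small'' is exactly the claim the paper is engineered to avoid having to prove: the paper's comparison between the Cheeger constants of $e^{-\xi F}$ and $e^{-\xi\hat F}$ (Lemma~\ref{lemma:cheeger2} plus the stability property used in the proof of Theorem~\ref{thm:error}) requires a \emph{pointwise} bound $|\hat F - F|\leq N_k$ on the relevant set, which is why the argument painstakingly restricts the chain to a sub-level set $U'_k$ via the escape-probability coupling (Propositions~\ref{thm:drift}--\ref{lemma:drift} and Lemma~\ref{lemma:drift2}) and the coupled toy chains of Definition~\ref{def:coupling}. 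Nothing in your proposal replaces this restriction, and without it the multiplicative noise $\alpha F(x)$ is genuinely unbounded on $\mathcal{K}$, so no single pointwise bound exists.

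There are also secondary issues worth flagging. First, you propose the classical $(1+c/\sqrt d)$ annealing schedule with $K=\widetilde O(\sqrt d\log(d/\hat\varepsilon))$ levels and $O(1)$ warm starts, and then want to apply ``a standard warm-start mixing bound for SGLD on a convex body, controlled by the isoperimetric constant of log-concave measures.'' But the chain's target is $\propto e^{-\xi\hat F}$, which is \emph{not} log-concave, so the log-concave isoperimetric bound does not apply directly; the paper spends most of Sections~\ref{sec:conductance}--\ref{sec:result_smooth} establishing a Cheeger bound for the nonconvex $\hat F$ precisely because this is not standard. Second, the paper in fact uses a geometric schedule with a \emph{constant} factor per epoch (so $k_{\max}=\widetilde O(\log(\cdot/\hat\varepsilon))$, not $\sqrt d\log$) and bounds the \emph{hitting time} to a good sub-level set, which is a strictly weaker requirement than mixing to stationarity; it deliberately outputs $\arg\min_i \hat F(X_i)$ over the trajectory rather than the last iterate, because one cannot expect the last iterate of a chain far from stationarity to be good. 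Your plan to output the last iterate and argue a TV bound to $\pi_{\xi_K}$ is therefore both harder to establish and not what the paper's machinery delivers.

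In short, the proposal identifies the right ingredients (annealing, SGLD, the parameter regime), but the central claim that an averaged KL/Girsanov perturbation bound suffices is unsupported and likely false under the stated noise model, and the missing idea is exactly the paper's: confine the chain to a shrinking sequence of sub-level sets $U'_k$ so that the multiplicative noise is \emph{pointwise} small there, and then run a hitting-time (not mixing-time) argument via the Cheeger constant stability property.
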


\noindent
When the multiplicative noise coefficient satisfies $\alpha \leq \tilde{O}(\frac{1}{d})$, Theorem \ref{thm:summary} guarantees that one can obtain an approximate minimizer $\hat{x}$ such that $F(\hat{x})-F(x^{\star}) \leq \hat{\varepsilon}$ for arbitrarily small $\hat{\varepsilon}$ in polynomial  time.  
Also note that related work \cite{applegate_kannan} for additive noise does not require a Lipschitz gradient or a bound on the diameter of $\mathcal{K}$, although they still require a bound on the range of the objective function.
The requirement that $\beta \leq \tilde{O}(\frac{\hat{\varepsilon}}{d})$ in order to get a polynomial running time can be shown to be  necessary using results from the work of \cite{blum1989training} (as done by \cite{hitting_times}).
  If the additive noise $\beta$ was required to be any lower than $\Omega(\frac{\hat{\varepsilon}}{d})$, the algorithm would take an exponentially long time to escape the local minima (Figure \ref{fig:level_sets}).
  We believe that the requirement that $\alpha \leq \tilde{O}(\frac{1}{d})$ in order to get a polynomial running time is also tight for a similar reason.  
    This is because a sub-level set $U$ of $F$ of height $\hat{\varepsilon}$, i.e., $U=\{x \in \mathcal{K}: F(x)\leq\hat{\varepsilon}\}$, will have a uniform bound on the noise of size $\sup_{x\in U}\alpha F(x) \leq \alpha \hat{\varepsilon}$ in the presence of multiplicative noise  level $\alpha$.  
  This is equivalent to having additive noise  level $\tilde{O}(\frac{\hat{\varepsilon}}{d})$, which is required for the Markov chain to quickly escape the local minima of that sub-level set.
  Establishing this formally is an interesting open problem.
  While our algorithm's running time is polynomial in various parameters, we believe that it is not tight and can be improved with a more thorough analysis of the underlying Markov chain.
The results of \cite{hitting_times} for the additive noise  is more general; their algorithm works  for a class of nonconvex functions $F$ with a certain saddle-point property.
 It would therefore be interesting to see if we can solve Problem \ref{problem1} for this class of  nonconvex functions $F$ but under the more general noise model where we have both additive and multiplicative noise.
We note the following obvious but important corollary of our main result for nonconvex functions: Suppose we are given oracle access to a nonconvex function $\hat{F}$ with a guarantee that there is a convex function $F$ such that $|\hat{F}(x) - F(x)|\leq \alpha (F(x) - F(x^{\star})) + \beta$ (as in Definition \ref{def:noise_both2}), then there is an algorithm to minimize $\hat{F}$.

\subsection{On the assumption that $F(x^\star)=0.$}\label{rem:binarysearch}

Suppose that we are given a function $\mathcal{F}$ with noisy oracle $\hat{\mathcal{F}}$ with additive and multiplicative noise level $\alpha, \beta$, but $\mathcal{F}(x^{*}) \neq 0$.  Then, if we know the value of $\mathfrak{m}=\mathcal{F}(x^{*})$, we can put this function in the form of Problem \ref{problem1} by defining a ``shifted" objective function $F(x) := \mathcal{F}(x) - \mathcal{F}(x^{*})$ and ``shifted" oracle $\hat{F}(x) := \hat{\mathcal{F}}(x) - \mathfrak{m}$.  In practice, we do not know the minimizing value $\mathfrak{m}$, but we can still obtain a noisy oracle $\hat{F}$ for $F$ by guessing a value for $\mathfrak{m}'$ and setting $\hat{F}'(x) =\hat{\mathcal{F}}(x) - \mathfrak{m}'$, although $\hat{F}'$ will have a larger additive noise level $|\mathfrak{m}' - \mathcal{F}(x^{*})| + \beta$  depending on the accuracy $|\mathfrak{m}' - \mathcal{F}(x^{*})|$ of our guess.   In practice, if we know $\beta$, then we can get around this problem by performing a binary search, by repeatedly running our algorithm using a sequence of noisy oracles $\hat{F}_1, \hat{F}_2, \ldots$ obtained with different guesses $\mathfrak{m}'_1, \mathfrak{m}'_2, \ldots$.  If we make a guess $\mathfrak{m}'_j$ and our algorithm returns a value $\hat{\mathcal{F}}(\hat{x}) \leq \mathfrak{m}'_j + \beta$, then our next guess $\mathfrak{m}'_{j+1}$ should be lower; otherwise it should be higher.  The number of times we must run our algorithm is therefore only logarithmic in the desired accuracy $\hat{\epsilon}^{-1}$.  If we do not know $\beta$, then the number of times we must run our algorithm will instead be polynomial in $\hat{\epsilon}^{-1}$, $\lambda$ and $R$.

Since our framework (Problem \ref{problem1}) assumes $F(x^{\star}) = 0$, the value of $\hat{F}(x)$ gives us a good estimate for the amount of (multiplicative) noise near a point $x$.  Therefore, $\hat{F}$ can help us choose the temperature parameter at each step in our algorithm: a larger value of $\hat{F}$ means that there may be more noise present and we require a higher temperature, while a lower value of $\hat{F}$ means that we can lower temperature.  (See Section \ref{rem:binarysearch} for how our framework can be generalized to $F(x^{*}) \neq 0$).

\begin{figure}[H]
\begin{center}
\includegraphics[trim={0 4cm 0 0}, clip, scale=0.3]{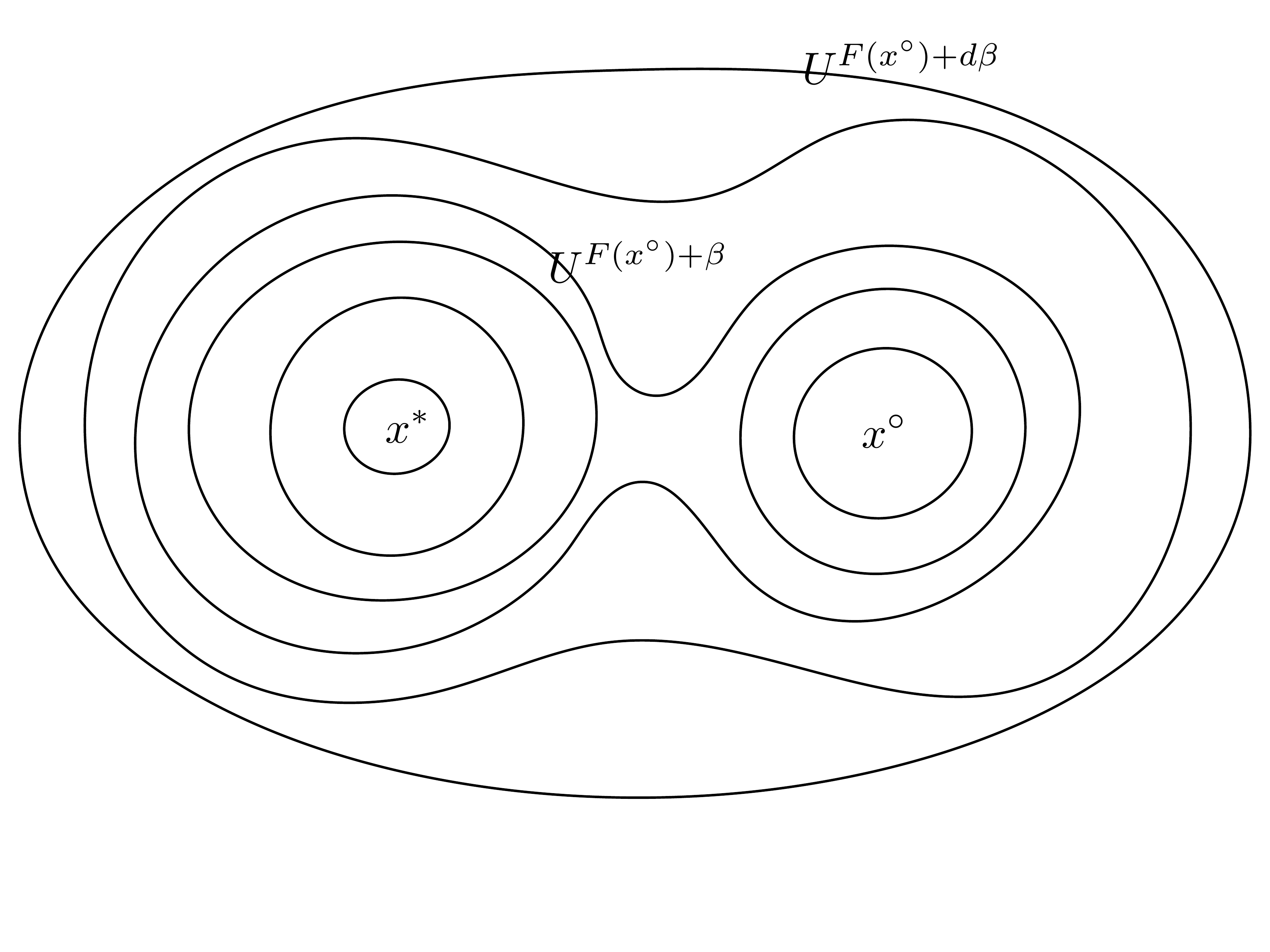}
\end{center}
\caption{To quickly escape a local minimizer $x^\circ$ of ``depth" $\beta$, a Markov chain must run at a temperature $\beta$.  At this temperature, the Markov chain will concentrate in a sub-level set of height $d\beta$. This sub-level set does not have a narrow bottleneck, so a Markov chain running at temperature $\beta$ will quickly escape the local minimum at $x^\circ$.}\label{fig:level_sets}
\end{figure}

\subsection{Short summary of techniques}

To find an approximate global minimum of the objective function $F$, we must try to find an approximate global minimum of the noisy approximation $\hat{F}$. 
 One method of optimizing a nonconvex or approximately convex function $\hat{F}$ is to generate a Markov chain with stationary distribution approximating the canonical distribution $\hat{\pi}^{(\xi)}(x):= \frac{1}{\int_{\mathcal{K}}e^{-\xi \hat{F}(y)}\mathrm{d}y}e^{-\xi \hat{F}(x)},$ where $\xi$ is thought of as an ``inverse temperature'' parameter. 
  If the ``temperature" $\xi^{-1}$ is small, then $\hat{\pi}^{(\xi)}$ concentrates near the global minima of $\hat{F}$. 
   On the other hand, to escape local minima of ``depth" $\beta>0$ in polynomial time, one requires the temperature $\xi^{-1}$ to be at least $\Omega(\beta)$ (see Figure \ref{fig:level_sets}). 
    Now consider the random variable $Z \sim N(0, \xi^{-1} I_d)$ with $\pi^{(\xi)}(x):= \frac{1}{\int_{\mathbb{R}^d}e^{-\xi \frac{1}{2} \|y\|^2}\mathrm{d}y}e^{-\xi \frac{1}{2} \|x\|^2}$.
     Then  $F(Z)$ concentrates near $d\xi^{-1}$ with high probability. 
      This suggests that for a noisy function $\hat{F}$ where we are given a bound on the additive noise  level $\beta>0$, the best we can hope to achieve in polynomial time is to find a point $\hat{x}$ such that $|F(\hat{x}) - F(x^\star)| \leq \tilde{O}(d\beta),$ since there may be sub-optimal local minima in the vicinity of $x^\star$ that have depth $O(\beta)$, requiring the temperature $\xi^{-1}$ to be at least $\Omega(\beta)$ (Figure \ref{fig:level_sets}).

As mentioned earlier, optimization of a noisy function under additive noise is studied by \cite{hitting_times}, who analyze the stochastic gradient Langevin dynamics (SGLD) Markov chain. 
 The SGLD chain approximates the Langevin diffusion, which has stationary distribution $\hat{\pi}^{(\xi)}$.  
 They show that by running SGLD at a single fixed temperature $\xi$ one can obtain an approximate global minimizer $\hat{x}$ of $F$ such that  $|F(\hat{x}) - F(x^\star)| < \tilde{O}(\hat{\varepsilon})$ with high probability with running time that is polynomial in $d$, $e^{d\beta/\hat{\varepsilon}}$, and various smoothness bounds on $F$.
   In particular, for the algorithm to get a polynomial running time in $d$ and $\beta$ one must choose $\hat{\varepsilon} = \Omega(d \beta)$.  
   Thus, the SGLD algorithm returns an approximate minimizer such that $|F(\hat{x}) - F(x^\star)| \leq \tilde{O}(d\beta)$ in polynomial time in the additive case.

More generally, if multiplicative noise is present one may have many local minima of very different sizes, so our bound on the ``depth" of the local minima is not uniform over $\mathcal{K}$.
  In this case the approach by \cite{hitting_times} of using a single fixed temperature will lead to either a very long running time or a very large error $\hat{\varepsilon}$:  If the temperature is hot enough to escape even the deepest the local minima, then the Markov chain will not concentrate near the global minimum and the error $\hat{\varepsilon}$ will be large  (Figure \ref{fig:annealing}(b)). 
   If the temperature is chosen to be too cold, then the algorithm will take a very long time to escape the deeper local minima (Figure \ref{fig:annealing}(c)).
 Instead of using a fixed temperature, we  search for the global minimum by starting the Markov chain at a high temperature and then slowly lowering the temperature at each successive step of the chain (Figure \ref{fig:annealing}(d)). This approach is referred to as ``simulated annealing" in the literature \cite{kirkpatrick1983optimization}.

The only non-asymptotic analysis we are aware of where the bound on the noise is not uniform involves a simulated annealing technique based on the hit-and-run algorithm \cite{Simulated_Annealing_Nonassymptotic}.  
  Specifically,  \cite{Simulated_Annealing_Nonassymptotic} show that if $F$ is $\mathfrak{m}$-strongly convex, then one can compute an approximate global minimizer $\hat{x}$ such that $|F(\hat{x}) - F(x^\star)| < (\frac{d}{\mathfrak{m}})^{\frac{1}{2-p}}$ with running time $\tilde{O}(d^{4.5})$, as long as $N(x) \leq c \|x\|^p$ for some $0<p<2$ and some $c>0$.
 The algorithm used by \cite{Simulated_Annealing_Nonassymptotic} runs a sequence of subroutine Markov chains.
 Each of these subroutine Markov chains is restricted to a ball $B(y_k,r_k)$ centered at the point $y_k$ returned by the subroutine chain from the last epoch.  
 Crucially, for this algorithm to work, $r_k$ must be chosen such that $B(y_k,r_k)$ contains the minimizer $x^\star$ at each epoch $k$.
Towards this end, \cite{Simulated_Annealing_Nonassymptotic} show that since the temperature is decreased at each epoch, $F(y_k)$ is much smaller than $F(y_{k-1})$ at each epoch $k$.
Since $F$ is assumed to be strongly convex,  \cite{Simulated_Annealing_Nonassymptotic} show that this decrease in $F$ implies a contraction in the distance $\|y_k-x^\star\|$ at each epoch $k$, allowing one to choose a sequence of radii $r_k$ that contract as well at each step but still have the property that $x^\star \in B(y_k,r_k)$.

One obstacle in generalizing the results by \cite{Simulated_Annealing_Nonassymptotic}  to the non-strongly convex case is that we do not have an oracle for the sub-level sets of $F$, but only for $\hat{F}$, whose sub-level sets may not even be connected.
Instead, we show that the SGLD Markov chain concentrates inside increasingly smaller {sub-level sets} of $F$ as the temperature parameter is decreased.
  To analyze the behavior of the SGLD Markov chain at each temperature, we build several new tools and use some from  the past work.
Our results make important contributions  to the growing body of work on non-asymptotic analysis of simulated annealing, Langevin dynamics and their various combinations \cite{raginsky2017non,BubeckEL15,welling2011bayesian,lee2017convergence, mangoubi_smith, mangoubi2018dimensionally}.

\subsection{Organization of the rest of the paper}
In the main body of the paper we present  a detailed but informal primer of the algorithm followed by the key steps and ideas involved in the proof of Theorem \ref{thm:summary} in Section \ref{sec:overview}.
The precise description of the algorithm and the full proofs are quite technical and are deferred to Sections \ref{sec:preliminaries} to \ref{sec:proofs}.
We present the notation and other preliminaries in Section \ref{sec:preliminaries}.
This is followed by a formal presentation of the algorithm and the statement of the main results in Sections \ref{sec:algorithm} and \ref{sec:theorem}.
Section \ref{sec:proofs} contains the detailed mathematical proof of our main theorem.

 \begin{figure}[H]
 \centering
	\begin{tabular}{cc}
		\includegraphics[scale=0.20]{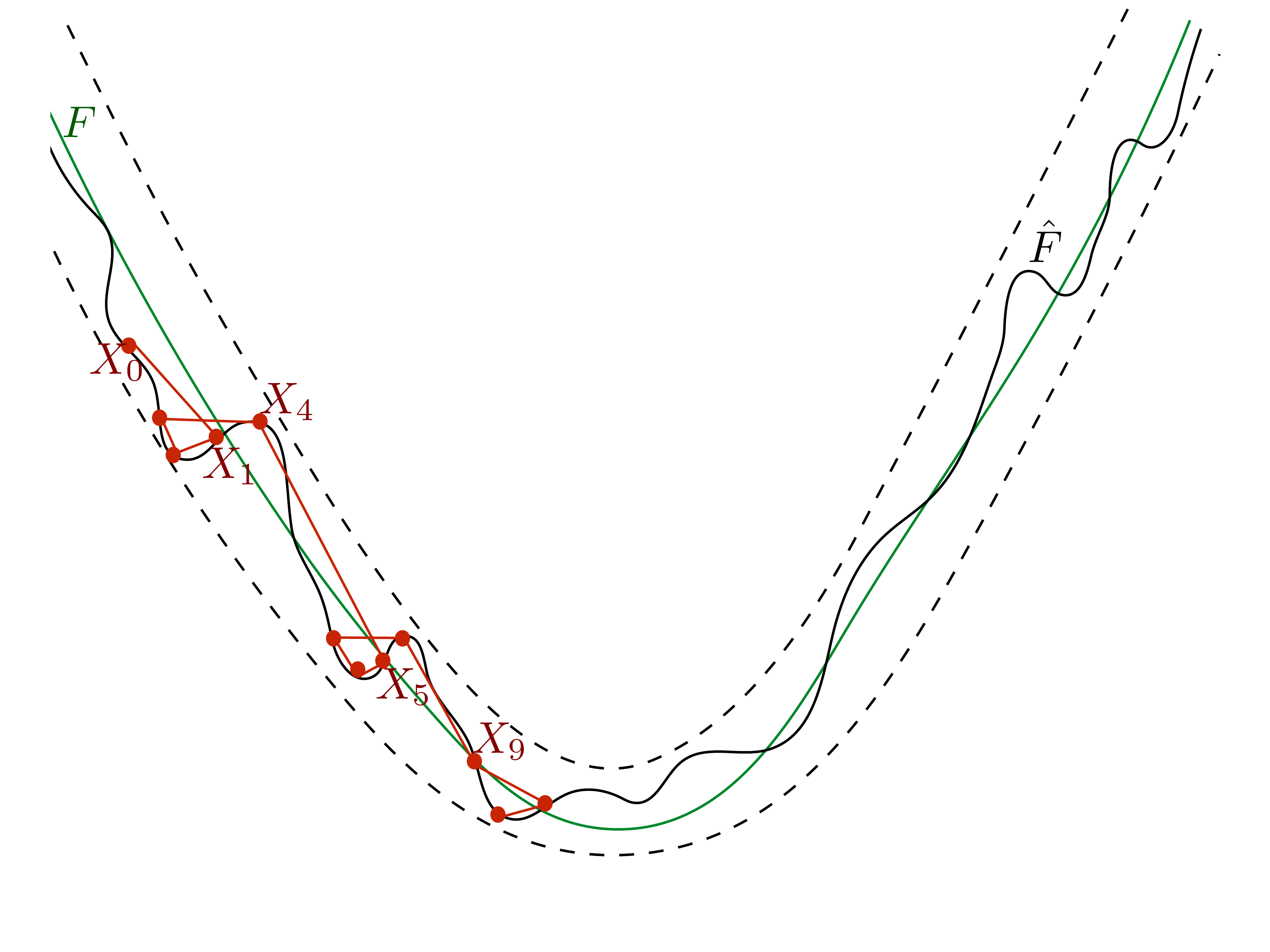}&		\includegraphics[scale=0.20]{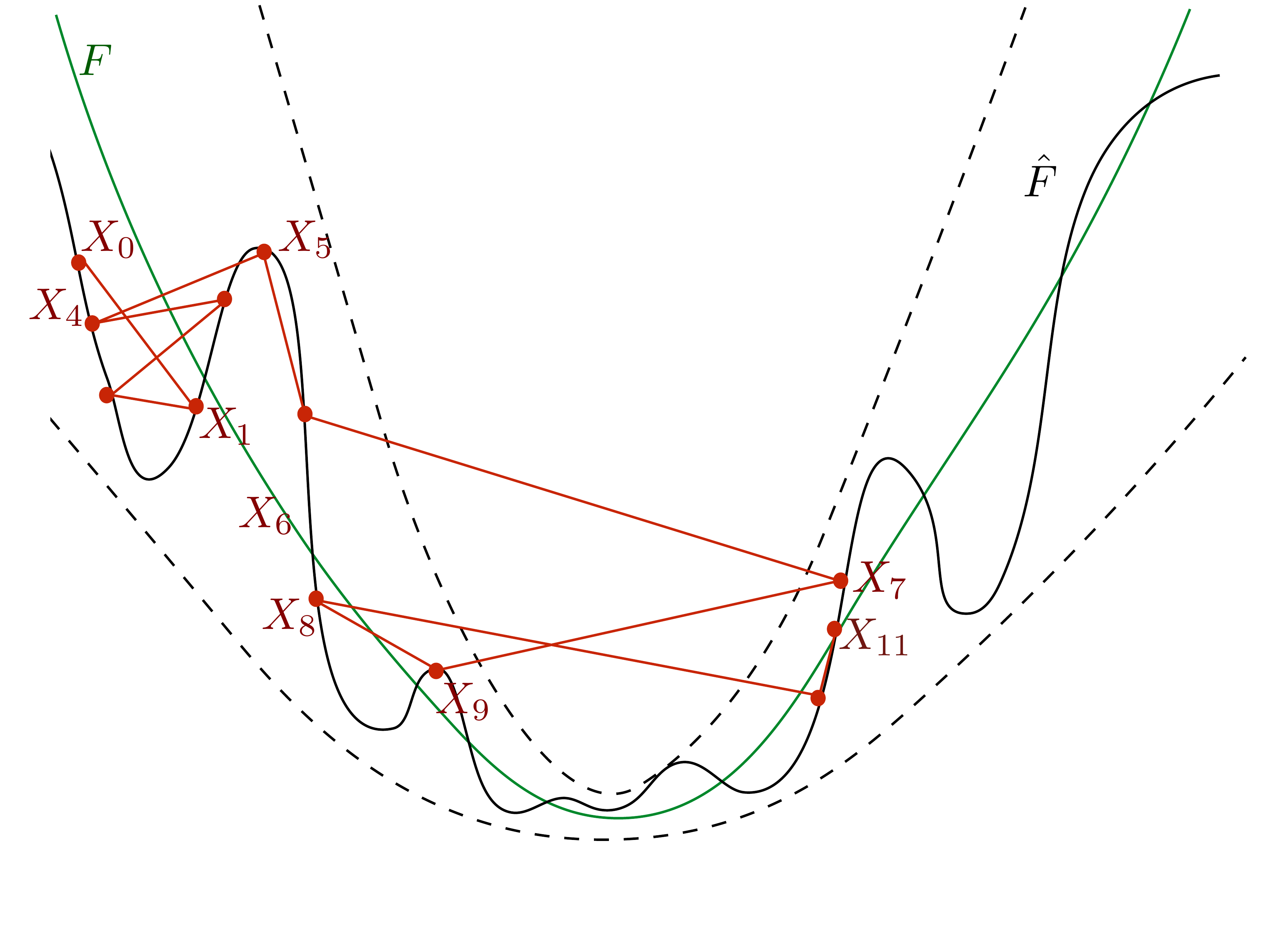}\\
		(a)&(b) 	\\ \vspace{.5in}
		\includegraphics[scale=0.20]{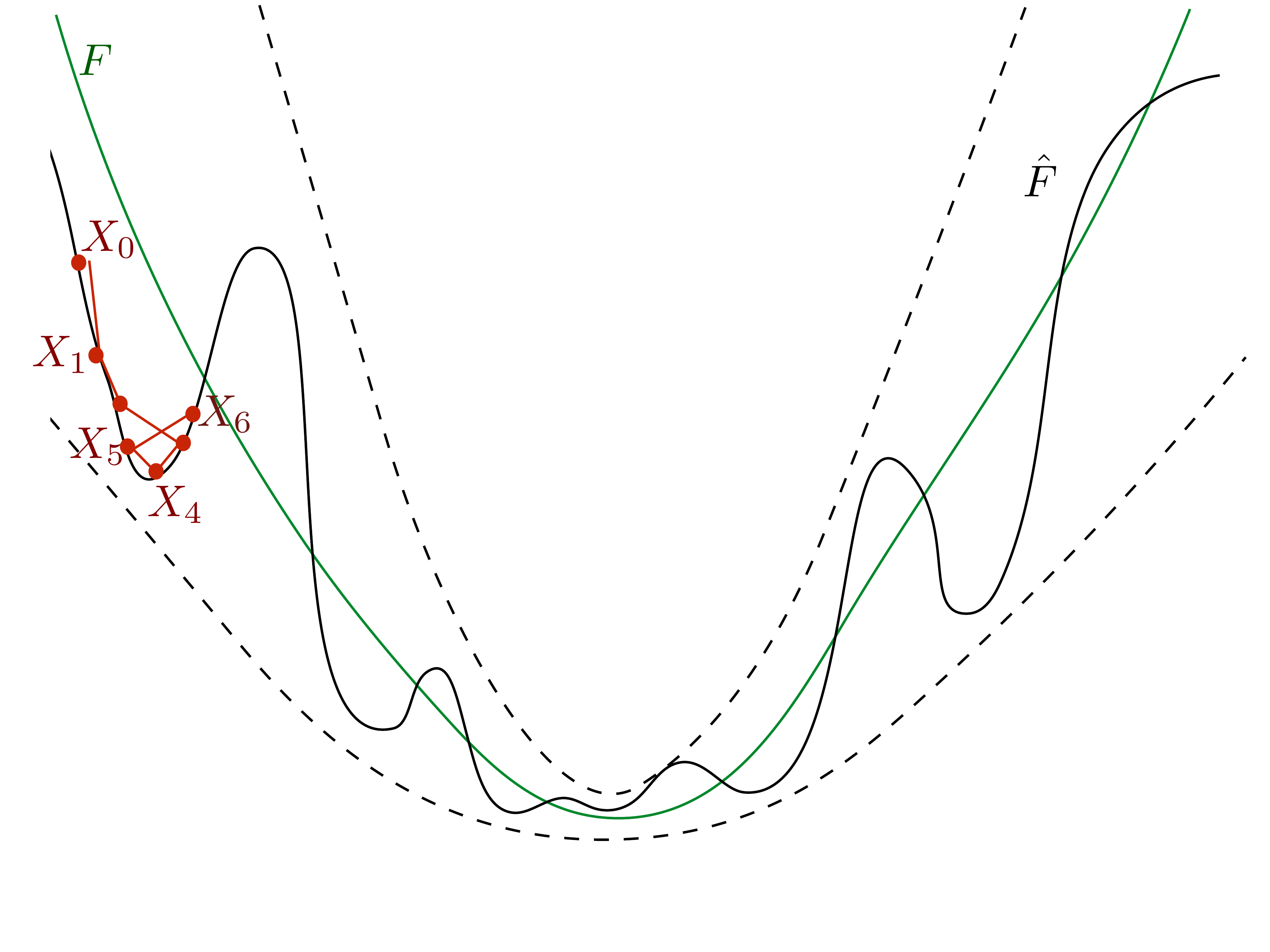}&
		\includegraphics[scale=0.20]{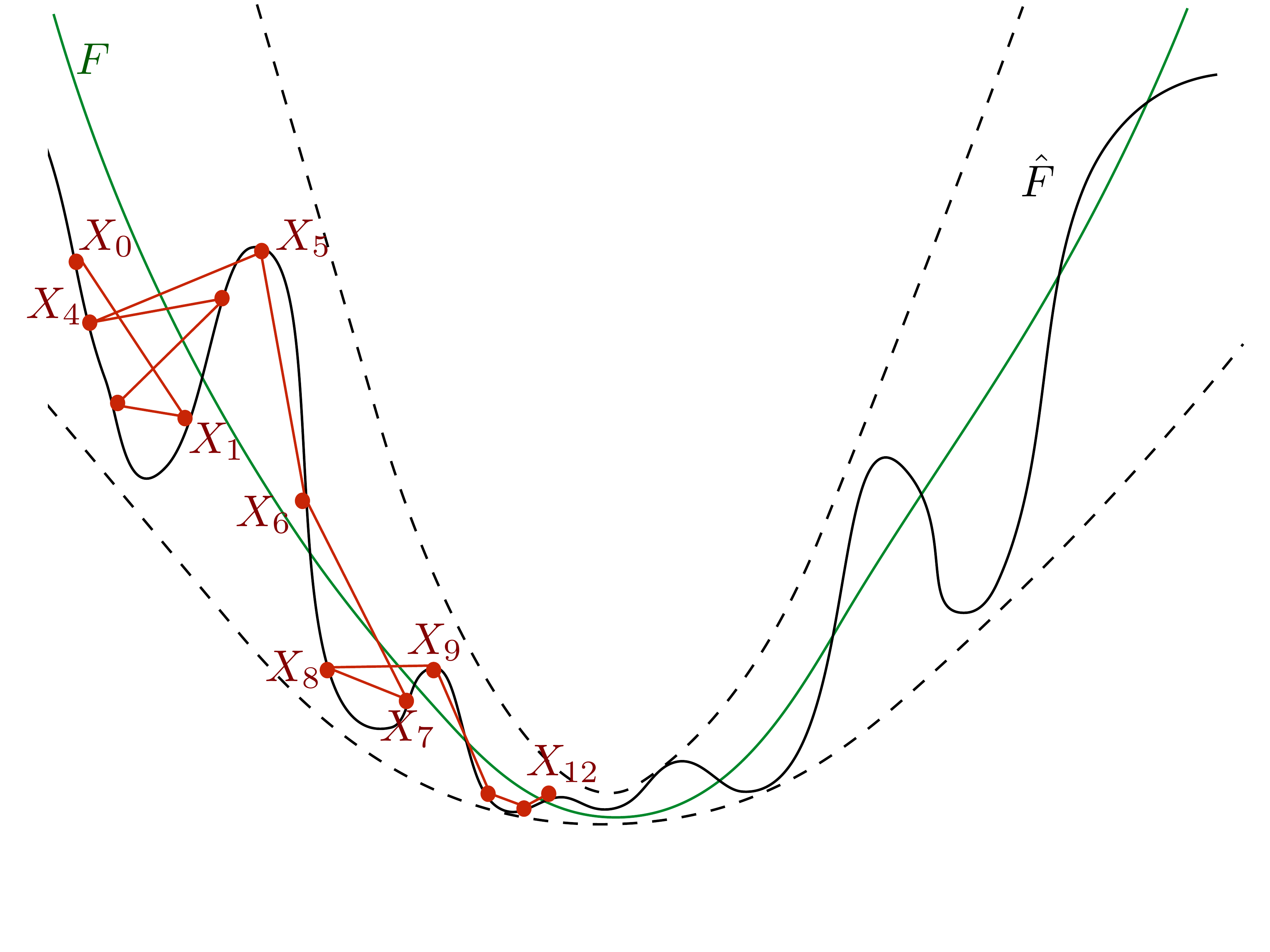}\\
		(c)&(d) 	\\
	\end{tabular}
	\caption{  (a) Optimization of a convex function $F$ (green) with noisy oracle $\hat{F}$ (black) under bounded additive noise.  
Since the gap between the noise bounds (dashed lines) is constant, the Markov chain (red) can be run at a single temperature that is both hot enough to quickly escape any local minimum but also cold enough so that the Markov chain eventually concentrates near the global minimum. (b) and (c)  {Optimization of a convex function $F$ (green) with noisy oracle $\hat{F}$ (black) when both additive and multiplicative noise are present, if we run the Markov chain at single a fixed temperature.  
If the temperature is hot enough to escape even the deepest local minima (b), then the Markov chain will not concentrate near the global minimum, leading to a large error.  
If instead the Markov chain is run at a colder temperature (c), it will take a very long time to escape the deeper local minima.} (d) Optimization of a convex function $F$ (green) with noisy oracle $\hat{F}$ (black) under both additive and multiplicative noise, when using a gradually decreasing temperature.  If multiplicative noise is present the local minima of $\hat{F}$ are very deep for large values of $F$.
  To quickly escape the deeper local minima, the Markov chain is started at a high temperature.  
  As the Markov chain concentrates in regions where $F$ is smaller, the local minima become shallower, so the temperature may be gradually decreased while still allowing the Markov chain to escape nearby local minima.  
  As the temperature is gradually decreased, the Markov chain concentrates in regions with successively smaller values of $\hat{F}$. }
  \label{fig:annealing}
	\end{figure}

\section{Overview of Our Contributions}\label{sec:overview}

\paragraph{The model and the problem.}
Let $\mathcal{K} \subseteq \mathbb{R}^d$ be the given convex body contained in a ball of radius $R>0$ and $F: \mathcal{K} \rightarrow \mathbb{R}$ the given  convex function.  We assume that $F$ has gradient bounded by some number $\lambda >0$, and that $\mathcal{K}= \mathcal{K}' + B(0,r')$ for some $r'>0$, where ``+" denotes the Minkowski sum and $\mathcal{K}'$ is a convex body.
Let  $x^\star$ be a minimizer of $F$  over $\mathcal{K}$.
Recall that our goal is to find an approximate minimizer $\hat{x}$ of $F$ on $\mathcal{K}$, such that $F(\hat{x}) - F(x^\star) \leq \hat{\varepsilon}$ for a given $\hat{\varepsilon}>0$.

We assume that we have access to a membership oracle for $\mathcal{K}$ and a noisy oracle $\hat{F}$ for $F$.
Recall that in our model of noise, {since $F(x^{*}) = 0$, we may} assume that there exist functions $\varphi:\mathcal{K} \rightarrow \mathbb{R}$ and $\psi:\mathcal{K} \rightarrow \mathbb{R}$ and numbers $\alpha, \beta \geq 0$, with $|\varphi(x)|\leq\beta$ and $|\psi(x)|\leq\alpha$ for all $x\in \mathcal{K}$, such that
\be \label{eq:noise_model}
\hat{F}(x) = F(x)(1+\psi(x)) + \varphi(x) \quad \quad \forall x \in \mathcal{K}.
\ee
We say that $\hat{F}$ has ``additive noise" $\varphi$ of level $\beta$ and ``multiplicative noise" $\psi$ of level $\alpha$.  
To simplify our analysis, we assume that $F\geq 0$ and that $F$ has minimizer $x^\star \in \mathcal{K}$ such that $F(x^\star)= 0$ (if not, we can always shift $F$ and $\hat{F}$ down by the constant $F(x^\star)$ to satisfy this assumption).  
That way, the multiplicative noise $\psi$ has the convenient property that it goes to zero as we approach $x^\star$.

We first describe our algorithm and the proof assuming that $\hat{F}$ is smooth and we have access to its gradient $\nabla \hat{F}$. 
Specifically, we assume that $\|\nabla \hat{F}\|$ is bounded above by some number $\tilde{\lambda}>0$ and that the Hessian of $\hat{F}$ has singular values bounded above by $L>0$.
This simplifies the presentation considerably and we explain how to deal with the non-smooth case at the end of this section.

\paragraph{Our algorithm.}
To find an approximate minimizer of $F$, we would like to design a Markov chain whose stationary distribution concentrates in a subset of $\mathcal{K}$ where the values of $F$ are small.
The optimal choice of parameters for this Markov chain will depend on the amount of noise present.
Since the bounds on the noise are not uniform, the choice of these parameters will depend on the current state of the chain. 
To deal with this fact, we will run a sequence of Markov chains in different epochs, where the parameters of the chain are fixed throughout each epoch.
Our algorithm runs for $k_{\textrm{max}}$ epochs, with each epoch indexed by $k$.

 In epoch $k$, we run a separate Markov chain $\{X_i^{(k)}\}_{i=1}^{{i_{\textrm{max}}}}$ over $\mathcal{K}$  
 for the same number of iterations $i_{\textrm{max}}$.
 Each such Markov chain has parameters $\xi_k$ and $\eta_k$ that depend on $k$.  
 We think of $\xi_k^{-1}$ as the ``temperature" of the Markov chain and $\eta_k$ as the step size.  
 At the beginning of each epoch,  we decrease the temperature and step size, and keep them fixed throughout the epoch.
We explain quantitatively how we set the temperature a bit later. 
Each Markov chain also has an initial point $X_0^{(k)} \in \mathcal{K}$.  
This initial point is chosen from the uniform distribution on a small ball  centered at the point in the Markov chain of the previous epoch $(k-1)$ with the smallest value of $\hat{F}$.
In the final epoch, the algorithm outputs a solution $\hat{x}$, where $\hat{x}$ is chosen to be the point in the Markov chain of the final epoch with the smallest value of $\hat{F}$.

\paragraph{Description of the Markov chain in a single epoch.}
We now describe how the Markov chain,  at the point $X_i^{(k)}$ in the $k$-th epoch, chooses the next point $X_{i+1}^{(k)}$. 
 First, we compute the gradient $\nabla \hat{F}(X_i^{(k)})$.  
 Then we compute a ``proposal'' $X'_{i+1}$ for the next point as follows
\be \label{update_rule}
X'_{i+1} = X^{(k)}_i - \eta_k \nabla \hat{F}(X^{(k)}_i) + \sqrt{\frac{2\eta_k}{\xi_k}} P_i,
\ee
where $P_i$ is sampled from $N(0,I_d)$.
 If $X'_{i+1}$ is inside the domain $\mathcal{K}$, then we accept the proposal and set $X^{(k)}_{i+1} = X'_{i+1}$; otherwise we reject the proposal and we set $X^{(k)}_{i+1} = X^{(k)}_i$ -- which is the old point.  
 The update rule in Equation \eqref{update_rule} is called the Langevin dynamics.  
 This is a version of gradient descent injected with a random term.  
 The amount of randomness is controlled by the temperature $\xi_k^{-1}$ and the step size $\eta_k$.
 This randomness allows the Markov chain to escape local minima when $\hat{F}$ is not convex. 
  Although the stationary distribution of this Markov chain is not known exactly, roughly speaking it is approximately proportional\footnote{{By this we actually mean that the Markov chain is $\epsilon'$-close to another Markov chain $Z$ with stationary distribution $\propto e^{-\xi_k \hat{F}}$ (see Definition \ref{def:epsilon_close} and  Theorem \ref{thm:error}). Specifically, a Markov chain $X$ on a space $S$ with transition kernel $Q_X$ is said to be $\epsilon'$-close to a Markov chain $Z$ w.r.t. a set $U$ if $Q_Z(x,A) \leq  Q_X(x,A) \leq (1+\epsilon)Q_Z(x,A)$ for all $x \in \mathcal{K} \backslash U$ and $A \subseteq \mathcal{K} \backslash\{x\}$ \cite{hitting_times}}} to  $e^{-\xi_k \hat{F}}$. 
 This completes the description of our algorithm in the smooth case and we now turn to explaining the steps involved in bounding its running time for a given bound on the error $\hat{\varepsilon}$.

\paragraph{Steps in bounding the running time.}
In every epoch, the algorithm makes multiplicative progress so that the smallest value of $F$ achieved by the Markov chain decreases by a factor of $\nicefrac{1}{10}$.  
To achieve an error $\hat{\varepsilon}$, our algorithm therefore requires $k_{\mathrm{max}} = O(\log \frac{M}{\hat{\varepsilon}})$ epochs, where $M$ is the maximum value of $\hat{F}$ on $\mathcal{K}$ ($M \leq \lambda R$).  
The running time of our algorithm is given by the number of epochs $k_{\mathrm{max}}$ multiplied by the number of steps $i_{\mathrm{max}}$ taken by the Markov chain within each epoch.  
For simplicity, we will run the Markov chain at each epoch for the same number of steps $i_{\mathrm{max}}$.
For the value of $F$ to decrease by a factor of $\nicefrac{1}{10}$ in each epoch, we must set the number of steps $i_{\mathrm{max}}$ taken by the Markov chain during each epoch to be no less than the hitting time of the Markov chain for epoch $k$ to a sub-level set $U_k\subseteq \mathcal{K}$ of $F$, where the ``height" of $U_k$ is one-tenth the value of $F$ at the initial point in this Markov chain.  
By the height of a sub-level set, we mean the largest value of $F$ achieved at any point on that sub-level set, that is the sub-level set $\{y\in \mathcal{K}: F(y) \leq h\}$ has height $h$.  
Thus, bounding the hitting time will allow us to bound the number of steps $i_{\mathrm{max}}$ for which we must run each Markov chain.  Specifically, we should choose $i_{\mathrm{max}}$ to be no less than the greatest hitting time in any of the epochs with high probability. %

 This approach was used in the simpler setting of additive noise and a non-iterative way by \cite{hitting_times}.  
 Thus, the running time is roughly the product of the number of epochs and the hitting time to the sub-level set $U_k$,
  and having determined the number of epochs required for a given accuracy, we proceed to bounding the hitting time.

\paragraph{Bounding the hitting time and the Cheeger constant.}

To bound the hitting time of the Markov chain in a single epoch, we use the strategy of \cite{hitting_times}, who bound the hitting time of the Langevin dynamics Markov chain in terms of the Cheeger constant.  
Since the Markov chain has approximate stationary measure induced by $e^{-\xi_k \hat{F}}$, we consider the Cheeger constant with respect to this measure, defined as follows:

Given a probability measure $\mu$ on some domain, we consider the ratio of the measure of the boundary of an arbitrary subset $A$ of the domain to the measure of $A$ itself.  
The Cheeger constant of a set $V$ is the infimum of these ratios over all subsets $A \subseteq V$ (see Definition \ref{def:Cheeger} in  Section \ref{sec:conductance} for a formal definition).  
  We  use some of the results of \cite{hitting_times} to show a bound on the hitting time to the sub-level set $U_k$ contained in a larger sub-level set $U'_k$ in terms of the Cheeger constant $\hat{\mathcal{C}}_k$, with respect to the measure induced by $e^{-\xi_k\hat{F}(x)}$ on $U'_k$. 
   Specifically, we  set $U'_k$ to be the sub-level set of height $\hat{F}(X^{(k)}_0) + \xi_k^{-1}d$ and $U_k$ to be the sub-level set of height $\frac{1}{10}F(X^{(k)}_0)$ and show that for a step size 
   $$\eta_k = \frac{(\hat{\mathcal{C}}_k)^2}{d^3((\xi_k \tilde{\lambda})^2 + \xi_k L)^2},$$
    the hitting time to $U_k$ is bounded by $\frac{R \tilde{\lambda} \xi_k + d}{\sqrt{\frac{\eta_k}{d}} \hat{\mathcal{C}}_k}$; see Section \ref{sec:result_smooth}.
      Thus, to complete our bound on the hitting time we need to  bound  the corresponding Cheeger constants.

\paragraph{Bounding the Cheeger constant.}  We would like to bound the Cheeger constant of the measure induced by $e^{-\xi_k\hat{F}(x)}$.  
However, $\hat{F}$ is not convex, so we cannot directly apply the usual approach of  \cite{lovasz1993random} for convex functions. 
Instead, we first apply their result to bound the Cheeger constant of the convex function $F$. 
We then bound the Cheeger constant of the nonconvex function $\hat{F}$ in terms of the Cheeger constant of the convex function $F$, using a very useful stability property satisfied by the Cheeger constant.

Roughly speaking, we show that the Cheeger constant of $U'_k \backslash U_k$ is bounded below by $\nicefrac{1}{R}$ (where $R$ is the radius of the bounding ball for $\mathcal{K}$) as long as the inverse temperature satisfies 
\be 
\xi_k \geq \frac{d}{\nicefrac{1}{10}F(X^{(k)}_0)}
\ee (see Lemma \ref{lemma:cheeger2}).
However, the difficulty is that since $U'_k$ may have sharp corners, the volume of $U_k$ might be so small that $U_k$ would have much smaller measure than $U'_k \backslash U_k$, leading to a very small Cheeger constant.
To get around this problem, we instead consider a slightly ``rounded" version of $\mathcal{K}$, where we take $\mathcal{K}$ to be the Minkowski sum of another convex body with a ball of very small radius $r'$.
The roundness allows us to show that $U_k$ contains a ball of even smaller radius $\hat{r}$ such that the measure is much larger on this ball than at any point in  $U'_k\backslash U_k$. 
This in turn allows us to apply the results of \cite{lovasz1993random} to show that the Cheeger constant is bounded below by $\nicefrac{1}{R}$ (see Lemma \ref{lemma:cheeger2}).  
Note our Cheeger bound is more general   (for convex functions) than that obtained by \cite{hitting_times}, where the constraint set is assumed to be a ball.

 The Cheeger constant has the following useful stability property that allows us to bound the Cheeger constant of the nonconvex $\hat{F}$ with respect to the convex $F$: if $|\hat{F}(x) - F(x)| \leq N_k$ for all $x\in U'_k$, then the Cheeger constant for the measures proportional to $e^{-\xi_k F}$ and $e^{-\xi_k\hat{F}}$ differ by a factor of at most $e^{-2\xi_k N_k}$.
For our choice of $U'_k$, we have  
\be
N_k \approx \alpha[F(X^{(k)}_0) + \xi_k^{-1}d]+\beta.
\ee
We can then use the stability property to show that the Cheeger constants of $\hat{F}$ and $F$ differ by a factor of at most $e^{-2\xi_k N_k}$, allowing us to get a large bound for the Cheeger constant of $\hat{F}$ in terms of our bound for the Cheeger constant of $F$ as long as the bound on the noise $N_k$ on $U'_k$ is not too large, namely we get that the Cheeger constant is bounded below by 
\be
\frac{1}{R} e^{-2\xi_k N_k} \approx \frac{1}{R} \exp\left(-\alpha d - \frac{d}{F(X^{(k)}_0)}\beta \right)
\ee
 if we choose $\xi_k = \frac{d}{\frac{1}{10}F(X^{(k)}_0)}$.

At this point we mention the key difference between the approach of  \cite{hitting_times}  and ours in bounding the hitting time.
As \cite{hitting_times} assume a uniform bound on the noise they only consider the Cheeger constant of $\mathcal{K}\backslash U_k$, where $\mathcal{K}$ is the entire constraint set and is assumed to be a ball.  
Since the noise in our model depends on the ``height'' of the level sets, we  instead need to bound the Cheeger constant of $U'_k\backslash U_k$,  where $U'_k$ is the level set of height $\hat{F}(X^{(k)}_0) + \xi_k^{-1}d$ and $U_k$ is the level set of height $\frac{1}{10}F(X^{(k)}_0)$.

In order to complete our bound for the Cheeger constant of $\hat{F}$, we still need to verify that we can choose a temperature such that the Cheeger constant of $F$ is large and the Cheeger constants of $F$ and $\hat{F}$ are close at this same temperature.

\paragraph{Requirements on the temperature to bound the Cheeger constant.}
To get a large bound for the Cheeger constant of $\hat{F}$, we  need to use a temperature $\xi_k^{-1}$ such that the following competing requirements are satisfied:
\begin{enumerate}
\item   We want the Cheeger constant of the convex objective function $F$ on $U'_k\backslash U_k$ to be bounded below by $\nicefrac{1}{R}$.  We can show such a bound on the Cheeger constant if the temperature is {\em low} enough, in particular a temperature of $\xi_k^{-1} \approx \frac{\frac{1}{10}F(X^{(k)}_0)}{d}$ suffices.
\item We need the Markov chain to stay inside a level set on which the upper bound $N_k$ on the noise is at most $\alpha[F(X^{(k)}_0) + \xi_k^{-1}d]+\beta$, to show that the Cheeger constants of $F$ and $\hat{F}$ are close.  That is, we need to show that the ratio $e^{-2\xi_k N_k}$ of the Cheeger constants of $F$ and $\hat{F}$ is not too small, roughly $e^{-2\xi_k N_k} \geq \exp\left(-\alpha d - \frac{d}{F(X^{(k)}_0)}\beta\right)$.  This again requires the temperature to be {\em low} enough, with $\xi_k^{-1} \approx \frac{\frac{1}{10}F(X^{(k)}_0)}{d}$ sufficing.
\item To show that the ratio of the Cheeger constants roughly satisfies $e^{-2\xi_k N_k} \geq \exp\left(-\alpha d - \frac{d}{F(X^{(k)}_0)}\beta\right)$, we also need the temperature to be {\em high} enough.  Specifically, a temperature of $\xi_k^{-1} \approx \frac{\frac{1}{10}F(X^{(k)}_0)}{d}$ suffices for this requirement as well.
\end{enumerate}
At some epoch $k$, the value of $F$ becomes too low for all three of these requirements on the temperature to be satisfied simultaneously.  At this point the Cheeger constant and hitting time to $U_k$ become very large no matter what temperature we use, so that the minimum value of $F$ obtained by the Markov chain no longer decreases by a large factor in $i_{\mathrm{max}}$ steps.

\vspace{-2mm}
\paragraph{Quantitative error and running time bounds.} 
  We  now give a more quantitative analysis to determine at what point $F$ stops decreasing. 
   The value of $F$ at this point  determines the error $\hat{\epsilon}$  of the solution returned by our algorithm.
Towards this end, we set the inverse temperature to be $\xi_k = \frac{d}{\frac{1}{10}F(X^{(k)}_0)}$ and  check to what extent all 3 requirements above are satisfied.
\begin{enumerate}
\item We start by showing that if the temperature roughly satisfies $\xi_k^{-1} \leq \frac{\frac{1}{10}F(X^{(k)}_0)}{d}$ then the Cheeger constant for $F$ on $U'_k\backslash U_k$ is bounded below by $\nicefrac{1}{R}$ (see Lemma \ref{lemma:cheeger2}).
\item   We then show that at each epoch the Markov chain remains with high probability in the level set $U'_k$ of height $\hat{F}(X^{(k)}_0) + \xi_k^{-1}d$  (Lemma \ref{lemma:drift2}).  The fact that the noise satisfies $|\hat{F}(x)-F(x)| \leq \alpha F(x) + \beta$ (note that we assume $F \geq 0$),  implies that the noise is roughly bounded above by $N_k = \alpha[F(X^{(k)}_0) + \xi_k^{-1}d]+\beta$ on this level set.
\item Since we chose the temperature to be $\xi_k^{-1} = \frac{\frac{1}{10}F(X^{(k)}_0)}{d}$, we have that
 \be
 \xi_k N_k
 \approx \alpha d + \frac{d}{F(X^{(k)}_0)}\beta.
 \ee
\end{enumerate}
Combing these three facts we get that the Cheeger constant is bounded below by $\frac{1}{R} e^{-2\xi_k N_k} \approx \frac{1}{R} \exp\left(-\alpha d - \frac{d}{F(X^{(k)}_0)}\beta\right)$.
If we run the algorithm for enough epochs to reach $F(X^{(k)}_0) \leq \hat{\varepsilon}$ for any desired error $\hat{\varepsilon}>0$, the Cheeger constant will be roughly bounded below by $\frac{1}{R} \exp(-\alpha d - \frac{d}{\hat{\varepsilon}}\beta)$.

Recall that the hitting time is bounded by $\frac{R \tilde{\lambda} \xi_k + d}{\sqrt{\frac{\eta_k}{d}} \hat{\mathcal{C}}_k}$, for stepsize $\eta_k \approx \frac{(\hat{\mathcal{C}}_k)^2}{R d^3((\xi_k \tilde{\lambda})^2 + \xi_k L)^2}$. 
 Choosing $i_{\textrm{max}}$ to be equal to our bound on the hitting time, and recalling that  $k_{\mathrm{max}} = \tilde{O}(1)$, we get a running time of roughly 
$$\tilde{O}\left(R^{\frac{3}{2}}\left[d^5
\frac{\tilde{\lambda}^3}{\hat{\varepsilon}^3} + d^{\frac{5}{2}}
\frac{L}{\hat{\varepsilon}}\right]\exp(c[\alpha d + \frac{d}{\hat{\varepsilon}}\beta])\right),$$
 for some $c = \tilde{O}(1).$

Therefore, for our choice of inverse temperature $\xi_k = \frac{d}{\frac{1}{10}F(X^{(k)}_0)}$,  the running time is polynomial in $d, R, \lambda$ and $\tilde{\lambda}$ whenever the multiplicative noise level satisfies $\alpha \leq \tilde{O}(\frac{1}{d})$  and the additive noise level satisfies $\beta \leq \tilde{O}(\frac{\hat{\varepsilon}}{d})$.  
As discussed in the introduction,  the requirements that $\alpha \leq \tilde{O}(\frac{1}{d})$  and $\beta \leq \tilde{O}(\frac{\hat{\varepsilon}}{d})$ are not an artefact of the analysis or algorithm and are in fact tight. 

\paragraph{Drift bounds and initialization.}
So far we have been implicitly assuming that the Markov chain does not leave $U'_k$, so that we could analyze the Markov chain using the Cheeger constant on $U'_k$.  
We  now show that this assumption is indeed true with high probability.
This is important to verify, since there are examples of Markov chains where the Markov chain may have a high probability of escaping a level set $U'_k$, even if this level set contains most of the stationary measure, provided that the Markov chain is started far from the stationary distribution.

To get around this problem, at each epoch we choose the initial point $X_0^{(k)}$ from the uniform distribution on a  ball of radius $r$ centered at the point in the Markov chain of the previous epoch $k-1$ with the smallest value of $\hat{F}$.  
We then show that if the Markov chain is initialized in this small ball, it has a low probability of leaving the level set $U'_k$ (see Propositions \ref{thm:drift}, \ref{lemma:drift2} and Lemma  \ref{lemma:drift2}).

Our method of initialization is another crucial difference between our algorithm and the algorithm by \cite{hitting_times} and \cite{Simulated_Annealing_Nonassymptotic}, since it allows us to effectively restrict the Markov chain to a sub-level set of the objective function $F$, which we do not have direct oracle access to, rather than restricting the Markov chain to a large ball as by \cite{Simulated_Annealing_Nonassymptotic} or the entire constraint set $\mathcal{K}$ as by \cite{hitting_times} for which we have a membership oracle.  
This in turn allows us to get a tighter bound on the multiplicative noise than would otherwise be possible, since the amount of multiplicative noise depends, by definition, on the sub-level set.

We still need to show that the chain $X^{(k)}$ does not leave the set $U'_k$ with high probability. 
To bound the probability that  $X^{(k)}$ leaves $U'_k$, we would like to use the fact that most of its stationary distribution is concentrated in $U'_k$. 
However, the problem is that we do not know the stationary distribution of $X^{(k)}$.
To get around this, we consider a related Markov chain $Y^{(k)}$ with known stationary distribution. 
The chain $Y^{(k)}$ evolves according to the same update rules as $X^{(k)}$, using the same sequence of Gaussian random vectors $P_1,P_2,\ldots$ and the same starting point, except that it performs a Metropolis ``accept-reject'' step that causes its stationary distribution to be proportional to $e^{-\xi_k \hat{F}}$. 
The fact that we know the stationary distribution of $Y^{(k)}$ is key to showing that $Y^{(k)}$ stays in the subset $U'_k$ with high probability (see Proposition \ref{lemma:drift}).  
We then argue that $Y^{(k)} = X^{(k)}$ with high probability, implying that $X^{(k)}$ also stays inside  $U'_k$  with high probability (see Lemma \ref{lemma:drift2}).

\paragraph{Another coupled toy chain.}
So far we have shown that the Markov chain $X^{(k)}$ stays inside the set $U'_k$ \emph{with high probability}.  
However, to use the stability property to bound the hitting time of the Markov chain $X^{(k)}$ to the set $U_k$, we actually want  $X^{(k)}$ to be \emph{restricted} to the set $U'_k$ where the noise is not too large.  
In reality, however, the domain of $X^{(k)}$ is all of $\mathcal{K}$, so we cannot directly bound the hitting time of $X^{(k)}$ with the Cheeger constant of $U'_k\backslash U_k$.
 Instead, we consider a Markov chain $\hat{X}^{(k)}$ that evolves according to the same rules as $X^{(k)}$, except that it rejects any proposal outside of $U'_k$.  Since  $\hat{X}^{(k)}$ has domain $U'_k$, we can use our bound on the Cheeger constant of $U'_k\backslash U_k$ to obtain a bound on the hitting time of $\hat{X}^{(k)}$.  
 Then, we argue that since $X^{(k)}$ stays in $U'_k$ with high probability, and $\hat{X}^{(k)}$ and $X^{(k)}$ evolve according to the same update rules as long as $X^{(k)}$ stays inside $U'_k$, $\hat{X}^{(k)} = X^{(k)}$ with high probability as well, implying a hitting time bound for $X^{(k)}$.

\paragraph{Rounding the sub-level sets.}
We must also show a bound on the roundness of the sets $U'_k$, to avoid the possibillty of the Markov chain getting stuck in ``corners". 
  \cite{hitting_times} take this as an assumption about the constraint set.  
 However, since we must consider the Cheeger constant on sub-level sets $U'_k$ rather than just on the entire constraint set, we must make sure that these sub-level sets are ``round enough".  
 Towards this end we consider ``rounded" sub-level sets where we take the Minkowski sum of $U'_k$ with a ball of a small radius $r'$.  
 We then apply the Hanson-Wright inequality to show that any Gaussian random variable with center inside this rounded sub-level set and small enough covariance  remains inside the rounded sub-level set with high probability (see Lemma \ref{lemma:contraint_round}).

\vspace{-2mm}
\paragraph{Smoothing a non-differentiable noisy oracle.}  Finally, so far we  have considered the special case where $\hat{F}$ is smooth.
  However, $\hat{F}$ may not be smooth or may not even be differentiable, so we may not have access to a well-behaved gradient which we need to compute the Langevin dynamics Markov chain (Equation \ref{update_rule}).  
  To get around this problem, we follow the approach of \cite{duchi2015optimal} and \cite{hitting_times}. 
   We define a smoothed function
$$
\tilde{f}_\sigma(x) := \mathbb{E}_Z[\hat{F}(x+Z)]
$$
where $Z \sim N(0, \sigma I_d)$ and $\sigma>0$ is a parameter we must fix.  
The smoothness of  $\tilde{f}_\sigma$ comes from the fact that $\tilde{f}_\sigma$ is a convolution of $\hat{F}$ with a Gaussian distribution.

When choosing $\sigma$, we want $\sigma$ to be small enough so that we get a good bound on the noise $|\tilde{f}_\sigma(x)-F(x)|$. 
 Specifically, we need 
 $$\sigma = \tilde{O}\left(\min\left(\frac{r}{\sqrt{d}}, \frac{\beta +\hat{\varepsilon}\alpha +\hat{\varepsilon}/d}{\lambda\sqrt{d}}\right)\right),$$ where $\lambda$ is a bound on $\|\nabla F\|$.  
 On the other hand, we also want $\sigma$ not to be too small so that we get a good bound on the smoothness of $\tilde{f}_\sigma$.

Further, so far we have also assumed that we have access to the full gradient of $\hat{F}$, but in general $\hat{F}$ may not even have a gradient.  
Instead, we would like to use the gradient of $\tilde{f}_\sigma$ to compute the proposal for the Langevin dynamics Markov chain (Equation \eqref{update_rule}).  
However, computing the full gradient of $\tilde{f}_\sigma$ can be expensive, since we do not even have direct oracle access to $\tilde{f}_{\sigma}$.  
Instead, we compute a projection $g(x)$ of $\nabla \tilde{f}_\sigma$, where
\be
g(x) = \frac{Z}{\sigma^2}(\hat{F}(x+Z)-\hat{F}(x))
\ee
Since $g$ has the property that $\mathbb{E}[g(x)] = \nabla \tilde{f}_\sigma$, $g$ is called a ``stochastic gradient" of $\tilde{f}_\sigma$. 
 We use this stochastic gradient $g$ in place of the full gradient of $\hat{F}$ when computing the proposal for the Langevin dynamics Markov chain (Equation \ref{update_rule}).  
 This gives rise to the following Markov chain proposal, also known as stochastic gradient Langevin dynamics (SGLD):
\be
X'_{i+1} = X^{(k)}_i - \eta_k g(X^{(k)}_i) + \sqrt{\frac{2\eta_k}{\xi_k}} P_i.
\ee
To bound the running time of SGLD, we will need a bound on the magnitude of the gradient of $\tilde{f}_\sigma$ (see Lemma \ref{lemma:smooth_gradient}), bounds on the Hessian and tails of $\tilde{f}_\sigma$, which we obtain from \cite{hitting_times} (see Lemma \ref{lemma:Hessian}), and bounds on the noise of the smoothed function, $|\tilde{f}_\sigma - F(x)|$ (see Lemma \ref{lemma:noise_smooth}). 

Although in this technical overview we largely showed running time and error bounds assuming access to a full gradient, in reality we prove Theorem \ref{thm:summary} for the more general stochastic gradient Langevin dynamics algorithm, where we only assume access to a stochastic gradient of a smooth function.  
Therefore, the bounds on the noise and smoothness of $\tilde{f}_\sigma$ allow us to extend the error and polynomial running time bounds shown in this overview to the more general case where $\hat{F}$ may not be differentiable.

\section{Preliminaries}\label{sec:preliminaries}

In this section we go over notation and assumptions that we use to state our algorithm and  prove our main result.  
We start by giving assumptions we make about the convex objective function $F$.  
We then explain how to obtain an oracle for the gradient of the smoothed function $\tilde{f}_\sigma$ if we only have access to the non-smooth oracle $\hat{F}$.

\subsection{Notation}\label{sec:notation}
In this section we define the notation we use to prove our main result.
 For any set $S\subseteq \mathbb{R}^d$ and $t\geq 0$ define $S_t := S +B(0,t)$ where ``+" denotes the Minkowski sum.
 We denote the $\ell_2$-norm by $\|\cdot \|$, and the $d\times d$ identity matrix by $I_d$.
 We denote by $ \|\cdot \|_{\mathrm{op}}$ the operator norm of a matrix, that is, its largest singular value.
 We define $B(a,t)$ to be the closed Euclidean ball with center $a$ and radius $t$.
Denote the multivariate normal distribution with mean $m$ and covariance matrix $\Sigma$ by $N(m,\Sigma)$.
Let $x^\star$ denote a minimizer of $F$ on $\mathcal{K}$.

\subsection{Assumptions on the convex objective function and the constraint set}
We make the following assumptions about the convex objective function $F$ and  $\mathcal{K}$:
\begin{itemize}

\item  $\mathcal{K}$ is contained in a ball, with $\mathcal{K} \subseteq B(c,R)$ for some $c \in \mathbb{R}^d$.

\item $F(x^\star) = 0$.\footnote{If $F(x^\star)$ is nonzero, we can define a new objective function $F'(x) = F(x)-F(x^\star)$ and a new noisy function $\tilde{f}'(x) =\tilde{f}(x)-F(x^\star)$.
The noise $N'(x)= \tilde{f}'(x) -F'(x)$ can then be modeled as having additive noise  of level $\beta' = \beta + \alpha F(x^\star)$ and multiplicative noise of level $\alpha'= \alpha$, if $N(x) = \tilde{f}(x)- F(x)$ has additive noise of level $\beta$ and multiplicative noise of level $\alpha$.}

\item There exists  an $r' > 0$  and a convex body $\mathcal{K}'$ such that   $\mathcal{K}=  \mathcal{K}' +B(0,r')$ for some convex body.
(This assumption is necessary to ensure that our convex body does not have ``pointy" edges, so that the Markov chain does not get stuck for a long time in a corner.)

\item $F$ is  convex over  $\mathcal{K}_{\mathsf{r}}$ for some $\mathsf{r}>0$.
  
\item $\| \nabla F(x) \| \leq \lambda $ for all $x \in \mathcal{K}_{\mathsf{r}}$, where $\lambda>0$.

\end{itemize}

\subsection{A smoothed oracle from a non-smooth one} \label{sec:smoothed_from_nonsmooth}

In this section we show how to obtain a smooth noisy oracle for $F$ if one only has access to a non-smooth and possibly non-continuous noisy oracle $\hat{F}$.
Our goal is to find an approximate minimum for $F$ on the constraint set $\mathcal{K}$. (We consider the thickened set $\mathcal{K}_{\mathsf{r}}$ only to help us compute a smooth oracle for $F$ on $\mathcal{K}$).
We assume that we have access to a noisy function $\hat{F}$ of the form
\be \label{eq:assumption}
\hat{F}(x) = F(x)(1+\psi(x)) + \varphi(x),
\ee
where $|\psi(x)|< \alpha$, and $|\varphi(x)|< \beta$ for every $x \in \mathcal{K}_{\mathsf{r}}$, for some $ \alpha, \beta \geq 0$.
We extend $\hat{F}$ to values outside $\mathcal{K}_{\mathsf{r}}$ by setting $\hat{F}(x) = 0$ for all $x \notin \mathcal{K}_{\mathsf{r}}$.
Since $\hat{F}$ need not be smooth, as in \cite{duchi2015optimal} and \cite{hitting_times} we will instead optimize the following smoothed function
\be \label{eq:smoother}
\tilde{f}_\sigma(x) := \mathbb{E}_Z[\hat{F}(x+Z)]
\ee
where $Z \sim \mathcal{N}(0,\sigma I_d)$, for some $\sigma>0$.  The parameter $\sigma$ determines the smoothness of $\tilde{f}_\sigma$; a larger value of $\sigma$ will mean that $\tilde{f}_\sigma$ will be smoother. 
The gradient of $\tilde{f}_\sigma(x)$ can be computed using a stochastic gradient $g(x)$, where
\be
g(x) \equiv g_Z(x) := \frac{1}{\sigma^2}Z \left(\hat{F}(x+Z)- \hat{F}(x)\right), \quad \quad \nabla \tilde{f}_\sigma(x) = \mathbb{E}_Z[g(x)].
\ee

\section{Our Contribution}

\subsection{Our Algorithm} \label{sec:algorithm}

In this section we state our simulated annealing algorithm (Algorithm \ref{alg:annealing}) that we use to obtain a solution to Problem \ref{problem1}.  
At each epoch, our algorithm uses the SGLD Markov chain as a subroutine, which we describe first in Algorithm \ref{alg:SGLD}.  
The SGLD Markov chain we describe here is the same algorithm used in \cite{hitting_times}, except that we allow the user to specify the initial point.
\begin{algorithm}[H]
\caption{Stochastic gradient Langevin dynamics (SGLD) \label{alg:SGLD}} 
\flushleft
\textbf{input:} Convex constraint set $\hat{\mathcal{K}} \subseteq \mathbb{R}^d$,  inverse temperature $\xi>0$, step size $\eta > 0$, parameters $i_{\max} \in \mathbb{N}$ and $D>0$, and a stochastic gradient oracle $g$ for some $\tilde{f}:{{\mathcal{K}}} \rightarrow \mathbb{R}$. \\
 \textbf{input:}   Initial point $X_0 \in \hat{\mathcal{K}}$.\\
\begin{algorithmic}[1]
\For{$i=0$ to $i_{\mathrm{max}}$}
\\ Sample $P_i \sim N(0,I_d)$.
\\ Set $X'_{i+1} = X_i - \eta g(X_i) + \sqrt{\frac{2\eta}{\xi}} P_i$.
\\ Set $X_{i+1} = X'_{i+1}$ if $X'_{i+1} \in \hat{\mathcal{K}} \cap B(X_i, D)$.  Otherwise, set $X_{i+1} = X_i$.
\EndFor
\end{algorithmic}
\textbf{output:} $X_{i^\star}$, where $i^\star := \mathrm{argmin}_i \{\hat{F}(X_i)\}$
\end{algorithm}

\noindent
Using Algorithm \ref{alg:SGLD} as a subroutine, we define the following simulated annealing algorithm:

\begin{algorithm}[H]
\caption{Simulated annealing SGLD \label{alg:annealing}} 
\flushleft
\textbf{input:} Convex constraint set $\hat{\mathcal{K}} \subseteq \mathbb{R}^d$, initial point $x_0 \in \hat{\mathcal{K}}$,  inverse temperatures $\xi_0,\xi_1,\ldots, \xi_{k_{\mathrm{max}}}$, step sizes $\eta_0, \eta_1,\ldots, \eta_{k_{\mathrm{max}}}$, parameters $k_{\max}, i_{\max} \in \mathbb{N}$, $D>0$ and $r>0$, and a stochastic gradient oracle $g$ for some  $\tilde{f}:\hat{\mathcal{K}} \rightarrow \mathbb{R}$. \\
\begin{algorithmic}[1]
\\ Sample $y_0$ from the uniform distribution on $B(x_0,r)\cap \hat{\mathcal{K}}$.

\For{$k=0$ to $k_{\mathrm{max}}$}
\\ Run Algorithm \ref{alg:SGLD} on $\hat{\mathcal{K}}$,  inverse temperature $\xi = \xi_k$, and step size $\eta_k$, $i_\mathrm{max}$, the oracle $g$, and the  initial point $X_0= y_k$.  Let $x_{k+1}$ be the output of Algorithm \ref{alg:SGLD}.\\
3. Sample $y_{k+1}$ from the uniform distribution on $B(x_{k+1}, r) \cap \hat{\mathcal{K}}$.
\EndFor
\end{algorithmic}
\textbf{output:} $x_{k_\mathrm{max}}$
\end{algorithm}

\subsection{Statement of Our Main Theorem} \label{sec:theorem}
We now formally state our main result, where we bound the error and running time when Algorithm \ref{alg:annealing} is used to solve Problem \ref{problem1}, assuming access to an oracle $\hat{F}$ that may be non-smooth or even non-continuous.

\begin{theorem} \label{thm:main} \textbf{(Main Theorem: Error bounds and running time for Algorithm \ref{alg:annealing})}
Let $F: \mathcal{K} \rightarrow \mathbb{R}$ be a convex function, and $\mathcal{K} \subseteq \mathbb{R}^d$ be a convex set that satisfy the assumptions stated in Section \ref{sec:notation}.
  Let $\hat{F}$ be a noisy oracle for $F$ with multiplicative noise of level $\alpha \leq O(1)$ and additive noise of level $\beta$, as in Equation \eqref{eq:assumption}.  
Let $\hat{\varepsilon} \geq 75\beta$ and  $\delta' >0$.
Then there exist parameters  $i_{\textrm{max}}$, $k_{\textrm{max}}$, $(\xi_k$, $\eta_k)_{k=0}^{k_{\mathrm{max}}}$, and $\sigma$, such that if we run Algorithm \ref{alg:annealing} with a smoothed version $\tilde{f}_\sigma$ of the oracle $\hat{F}$  (as defined in Section \ref{sec:smoothed_from_nonsmooth}), with probability at least  $1-\delta'$, the algorithm outputs a point $\hat{x}$ such that
\be
F(\hat{x}) -F(x^\star) &\leq \hat{\varepsilon},
\ee
with running time that is polynomial in $d$, $e^{\left(d\alpha + d\frac{\beta}{\hat{\varepsilon}} \right) \mathsf{c}}$, $R$, ${\lambda}$, $\frac{1}{r'}$, $\beta$, $\frac{1}{\hat{\varepsilon}}$, and $\log \frac{1}{\delta'}$, where $\mathsf{c}=O(\log(R+\lambda))$.
\end{theorem}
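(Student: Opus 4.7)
The plan is to analyze Algorithm \ref{alg:annealing} epoch by epoch, showing that the smallest value of $F$ attained by the SGLD chain decreases geometrically---roughly by a factor of $\nicefrac{1}{10}$---per epoch. Since $0 \le F \le \lambda R$ on $\mathcal{K}$, after $k_{\max} = O(\log(\lambda R/\hat{\varepsilon}))$ epochs the output satisfies $F(\hat{x}) \le \hat{\varepsilon}$. Within each epoch $k$ the argument has three essential pieces: (i) a \textbf{Cheeger bound} for the measure $\propto e^{-\xi_k \hat{F}}$ on an appropriately chosen sub-level set, (ii) a \textbf{drift/confinement bound} showing that the SGLD chain stays inside this sub-level set with high probability, and (iii) a \textbf{hitting-time bound} translating the Cheeger lower bound into a polynomial upper bound on the number $i_{\max}$ of SGLD steps needed to enter $U_k := \{x\in\mathcal{K}: F(x) \le \tfrac{1}{10}F(X_0^{(k)})\}$. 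I would set $\xi_k \asymp d/F(X_0^{(k)})$, $U'_k := \{F \le \hat{F}(X_0^{(k)}) + \xi_k^{-1} d\}$, and $\eta_k \asymp (\hat{\mathcal{C}}_k)^2/[d^3((\xi_k \tilde{\lambda})^2 + \xi_k L)^2]$ as suggested in the overview.

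For the Cheeger bound, I would first prove a lower bound of order $1/R$ for the measure $\propto e^{-\xi_k F}$ on $U'_k \setminus U_k$ via a Lov\'asz--Simonovits style isoperimetric inequality for log-concave densities on convex bodies. Here I would use crucially that $\mathcal{K} = \mathcal{K}' + B(0,r')$ to deduce that $U'_k$ and $U_k$ both contain balls of non-trivial radius $\hat{r}$ depending on $r'$, $\hat{\varepsilon}$, and $\lambda$, which prevents the bound from degenerating at ``pointy'' corners (Lemma \ref{lemma:cheeger2}, Lemma \ref{lemma:contraint_round}). I would then transfer this to the noisy $\hat{F}$ by the standard stability property $\hat{\mathcal{C}}_k \ge e^{-2\xi_k N_k}\,\mathcal{C}_k$, where on $U'_k$ the noise satisfies $|\hat{F}-F| \le N_k := \alpha(F(X_0^{(k)}) + \xi_k^{-1} d) + \beta$; with $\xi_k \asymp d/F(X_0^{(k)})$ this yields $\xi_k N_k \lesssim \alpha d + d\beta/F(X_0^{(k)})$, hence $\hat{\mathcal{C}}_k \gtrsim R^{-1} \exp(-c(\alpha d + d\beta/\hat{\varepsilon}))$ throughout the entire run.

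For the confinement, the difficulty is that the projected SGLD chain $X^{(k)}$ has no explicit stationary distribution. I would introduce a coupled Metropolis-adjusted chain $Y^{(k)}$ driven by the same Gaussian increments and same initial point but with a rejection step forcing its stationary distribution to be $\propto e^{-\xi_k \hat{F}}|_{\mathcal{K}}$; since the bulk of this log-concave-like distribution sits in $U'_k$ (because we include a $\xi_k^{-1}d$ slack in the definition of $U'_k$), a standard argument (Proposition \ref{lemma:drift}) shows $Y^{(k)}$ remains in $U'_k$ with high probability. A comparison of per-step Metropolis acceptance probabilities---exactly what forces the small step size $\eta_k$ and the small initialization radius $r$ so that the initial density ratio relative to the restricted stationary measure is bounded---then gives $X^{(k)}=Y^{(k)}$ whp (Lemma \ref{lemma:drift2}), and Hanson--Wright controls the Gaussian increment inside the thickened level set. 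To convert Cheeger to hitting time, I would introduce a second coupled chain $\hat{X}^{(k)}$ that rejects any proposal leaving $U'_k$; its domain is $U'_k$, so the generic hitting-time-via-Cheeger bound of \cite{hitting_times} applies and yields hitting time $O((R\tilde{\lambda}\xi_k + d)/(\sqrt{\eta_k/d}\,\hat{\mathcal{C}}_k))$ to $U_k$. The same drift argument gives $\hat{X}^{(k)} = X^{(k)}$ whp, so this hitting time is realized by the actual algorithm. Choosing $i_{\max}$ equal to this bound and taking a union bound over the $k_{\max}$ epochs produces the claimed failure probability $\delta'$ and polynomial running time.

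For the non-smooth case I would replace $\hat{F}$ by the Gaussian convolution $\tilde{f}_\sigma$ and use the stochastic gradient $g_Z$ defined in Section \ref{sec:smoothed_from_nonsmooth}, with $\sigma = \tilde{O}(\min(r/\sqrt{d},\,(\beta + \hat{\varepsilon}\alpha + \hat{\varepsilon}/d)/(\lambda\sqrt{d})))$. This choice makes $|\tilde{f}_\sigma - F|$ only a constant factor larger than $\alpha F + \beta + \hat{\varepsilon}/d$, while giving gradient/Hessian bounds $\tilde{\lambda}, L$ polynomial in $1/\sigma$ via Lemmas \ref{lemma:smooth_gradient} and \ref{lemma:Hessian}; these feed directly into the smooth analysis above with only polylogarithmic overhead. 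The main obstacle I anticipate is the tension among the three competing temperature requirements in the Cheeger analysis: the temperature $\xi_k^{-1}$ must be small enough to make the convex-$F$ isoperimetric constant of order $1/R$, small enough for the $\xi_k$-slack $\xi_k^{-1} d$ in $U'_k$ not to dominate $F(X_0^{(k)})$, yet large enough that the stability factor $e^{-2\xi_k N_k}$ remains $\exp(-\tilde{O}(1))$. The unique ``sweet spot'' is $\xi_k^{-1} \asymp F(X_0^{(k)})/d$, and pushing $F(X_0^{(k)})$ down to $\hat{\varepsilon}$ at this choice is exactly what forces the tight requirements $\alpha \le \tilde{O}(1/d)$ and $\beta \le \tilde{O}(\hat{\varepsilon}/d)$ for a polynomial running time.
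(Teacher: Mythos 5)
Your proposal follows essentially the same route as the paper: the paper proves Theorem~\ref{thm:main} by first establishing a smooth-oracle version (Theorem~\ref{thm:error}) and then instantiating it with the Gaussian smoothing $\tilde{f}_\sigma$, and the three pillars of that smooth-case proof are exactly the ones you identify — the Lov\'asz--Simonovits-based Cheeger bound (Lemma~\ref{lemma:cheeger2}) transferred to $\hat{F}$ via the $e^{-2\xi_k N_k}$ stability factor, the confinement argument via the coupled Metropolis-adjusted chain $Y^{(k)}$ (Propositions~\ref{thm:drift}, \ref{lemma:drift}, Lemma~\ref{lemma:drift2}) and the restricted chain $\hat{X}^{(k)}$, and the conductance-to-hitting-time translation from~\cite{hitting_times}, with $\xi_k \asymp d/F(X_0^{(k)})$ and $\eta_k$ set by the Cheeger bound exactly as you describe.

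One technical wrinkle you gloss over is worth flagging, since it is not obvious a priori that the parameter choices can be made consistently: the drift bound Lemma~\ref{lemma:drift2} is applied over $i_{\max}$ steps, so the thickness $\xi_k^{-1}\log(i_{\max}+1)$ of the level set $U'_k$, and hence the noise ceiling $N_k$, the Cheeger lower bound, and the step size $\eta_k$, all depend on $i_{\max}$ — while $i_{\max}$ is supposed to be chosen \emph{after} these are fixed. The paper breaks this circularity by introducing the $i_{\max}$-free quantities $\mathfrak{B}'$ and $\bar\eta^\dagger$ and then solving the implicit inequality for $i_{\max}$, which produces the exponent $\frac{1}{1-\frac{150}{\varepsilon}\alpha}$ in~\eqref{eq:i_max}; this is what introduces the requirement $\alpha \le \varepsilon/150 = O(1)$ and contributes the $(1+\mathsf{c}''\alpha)$ power in the final running time. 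Simply ``choosing $i_{\max}$ equal to the hitting-time bound'' is not immediately well-defined without this bootstrapping step. Apart from that, and the minor point that the paper's concrete $\sigma$ in the proof of Theorem~\ref{thm:main} is $\frac12\min\bigl(\beta/(\lambda(1+\alpha)\sqrt d),\ \mathsf{r}/\sqrt{\log(1/\alpha)+d}\bigr)$ rather than the slightly different expression in the overview you quote, your proposal is an accurate reconstruction of the paper's argument.
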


\noindent
We give a proof of Theorem \ref{thm:main} in Section \ref{sec:proof_of_main_result}.

\noindent
The precise values of the parameters in this theorem are quite involved and appear in the proofs at the following places:
$\xi_k$ appears in \eqref{eq:xi_k}, $\eta_k$ in \eqref{eq:eta_k}, $i_{\mathrm{max}}$ in \eqref{eq:i_max}, and the expression for $k_{\mathrm{max}}$ can be found in \eqref{eq:k_max}.
Below we present their approximate magnitudes. 
The inverse temperature parameter $\xi_k$, the smoothing parameter $\sigma$, and the number of epochs $k_{\mathrm{max}}$ satisfy:
\be
\tilde{\Omega}\left(\frac{d}{\lambda R} \right)\leq \xi_k \sim \tilde{O}\left(d \cdot \mathrm{max} \left\{ \frac{1}{\hat{\varepsilon}}, \hat{F}(X_0^{(0)}) \cdot 10^k\right\} \right) \leq \tilde{O}\left(\frac{d}{\hat{\varepsilon}}\right),
\ee
$$ \sigma  = \frac{1}{2} \min\left(\frac{\beta}{\lambda(1+\alpha)\sqrt{d}}, \frac{\mathsf{r}}{\sqrt{\log(\frac{1}{\alpha})+d}}\right),$$

\be
k_{\mathrm{max}} \sim \log \frac{R}{\hat{\varepsilon}}.
\ee

\noindent
To make the expressions for $\eta_k$ and $i_{\mathrm{max}}$ understandable,   assume that  $\lambda R > 1, \beta$, that  $\mathsf{r} > \frac{\beta}{\lambda}$, $d > \hat{\varepsilon}$, and that  $R> \lambda$. Then
\be
\eta_k \sim \frac{ \hat{\varepsilon}^4}{d^9 R^5 \lambda^8 \beta^4 }e^{-d\left(\alpha + \frac{\beta}{\hat{\varepsilon}}\right)\mathsf{c}'}  10^{-k}
\ee

\be
1 \leq i_{\mathrm{max}} \leq \left[\frac{d^{6.5}}{\hat{\varepsilon}^3} R^{\frac{11}{2}} \lambda^6 \beta e^{d\left( \alpha + \frac{\beta}{\hat{\varepsilon}}\right)\mathsf{c}'} \right ]^{1+\mathsf{c}''\alpha},
\ee

\noindent
where $\mathsf{c}' \sim \log\frac{R^2}{rd \delta \min\{ \hat{\varepsilon}/\lambda, r'\}}$
and $\mathsf{c}''$ is a constant factor.  In particular, the running time is given by $i_{\mathrm{max}} \times k_{\mathrm{max}}$.
\section{Proofs} \label{sec:proofs}

\subsection{Assumptions about the smooth oracle} \label{sec:smooth_assumptions}
We first show how to optimize $F$ if one has access to a smooth noisy objective function $\tilde{f}:\mathcal{K} \rightarrow \mathbb{R}$ (Sections \ref{sec:conductance}, \ref{sec:drift}, \ref{sec:result_smooth}).  Then, in Section \ref{sec:smoothing}, we show how one can obtain a smooth noisy objective function from a non-smooth and possibly non-continuos noisy objective function $\hat{F}$.
We will make the following assumptions  (we prove in Section \ref{sec:smoothing} that these assumptions hold for a smoothed version $\tilde{f}_\sigma$ of a non-smooth noisy objective function $\hat{F}$).
We assume the following noise model for $\tilde{f}$:
\be
\tilde{f}(x) = F(x)(1+\psi(x)) + \varphi(x),
\ee
for all $x \in \mathcal{K}$ where $|\psi(x)| \leq \alpha$ and $|\varphi(x)| \leq \beta$.  Note that, with a slight abuse of notation, in Section \ref{sec:smoothed_from_nonsmooth} we also used the letters $\alpha$ and $\beta$ to denote the noise levels of the non-smooth oracle $\hat{F}$, even though typically $\hat{F}$ will have lower noise levels than $\tilde{f}$.  In this section, as well as in Sections \ref{sec:conductance}-\ref{sec:result_smooth} where we assume direct access to a stochastic gradient for the smooth oracle $\tilde{f}$, we will instead refer to the noise levels of $\hat{F}$ by $``\hat{\alpha}"$ and $``\hat{\beta}"$. In Section \ref{sec:theorem}, on the other hand, $``\alpha"$ and $``\beta"$ will be used exclusively to denote the noise levels of $\hat{F}$.  We also assume that
\be \label{eq:assumption_noise}
\alpha \geq \hat{\alpha} \quad \textrm{and} \quad \beta \geq \hat{\beta}.
\ee

We make the following assumptions about $\tilde{f}$:
\begin{itemize}
\item $\psi(x)>-\alpha^{\dagger}$ for some $0\leq \alpha^{\dagger}<1$.  This assumption is needed because if not we might have $\psi(x) = -1$ for all $x\in \mathcal{K}$, in which case $\tilde{f}(x)$ would give no information about $F$.
\item $\|\nabla \tilde{f}(x) \| \leq \tilde{\lambda}$  for all $x \in  \mathcal{K}$.
\item We assume that we have access to a stochastic gradient $g$ such that $\nabla \tilde{f}(x) = \mathbb{E}[g(x)]$ for every $x \in \mathcal{K}$.  However, we do \emph{not} assume that we have oracle access to $\tilde{f}$ itself.
\end{itemize}

\begin{assumption}\label{assumption:A} ({\bf Based on assumption A in \cite{hitting_times}})
Let $\tilde{f}:\mathcal{K} \rightarrow \mathbb{R}^d$ be differentiable, and let $g \equiv g_w: \mathcal{K} \rightarrow \mathbb{R}^d$ be such that $\nabla \tilde{f}(x) = \mathbb{E}[g_W(x)]$ where $W$ is a random variable.
We will assume that
\begin{enumerate}
\item There exists $\zeta_\mathrm{max}>0$ such that for every compact convex $\hat{\mathcal{K}}\subseteq \mathbb{R}^d$, every $x\in \hat{\mathcal{K}}_{r'}$, and every $0\leq \zeta \leq \zeta_\mathrm{max}$, the random variable $Z \sim N(x,2\zeta I_d)$ satisfies $\mathbb{P}(Z \in \mathcal{K}) \geq \frac{1}{3}$. 
We prove this assumption in Lemma \ref{lemma:contraint_round}.
\item There exists $L>0$ such that $|\tilde{f}(y) -\tilde{f}(x) - \langle y-x, \nabla \tilde{f}(x)\rangle| \leq \frac{L}{2}\|y-x\|^2 \quad \quad \forall x,y \in \mathcal{K}$.
\item There exists $b_{\mathrm{max}}>0$ and $G>0$ such that for any $u \in \mathbb{R}^d$ with $\|u\| \leq b_{\mathrm{max}}$ the stochastic gradient $g(x)$ satisfies $\mathbb{E}[e^{\langle u, g(x)\rangle^2} |x] \leq e^{G^2\|u\|^2}$.
\end{enumerate}
\end{assumption}

\subsection{Conductance and bounding the Cheeger constant}
 \label{sec:conductance}

To help us bound the convergence rate, we define the Cheeger constant of a distribution, as well as the conductance of a Markov chain.
For any set $\hat{\mathcal{K}}$ and any function $f:\hat{\mathcal{K}} \rightarrow \mathbb{R}$, define
\be
\mu_f^{\hat{\mathcal{K}}}(x) := \frac{e^{-f(x)}}{\int_{\hat{\mathcal{K}}} e^{-f(x)}} \quad \quad  \forall x \in \hat{\mathcal{K}}.
\ee
 
\begin{definition}\label{def:Cheeger} (\textbf{Cheeger constant})
 For all $V \subseteq \hat{\mathcal{K}}$, define the Cheeger constant to be
\be
\mathcal{C}_{f}^{\hat{\mathcal{K}}}(V) := \liminf_{\varepsilon \downarrow 0} \inf_{A \subseteq V} \frac{\mu_f^{\hat{\mathcal{K}}}(A_\varepsilon) - \mu_f^{\hat{\mathcal{K}}}(A)}{\varepsilon \mu_f^{\hat{\mathcal{K}}}(A)}.
\ee
recalling that $A_\varepsilon = A+B(0,\varepsilon)$.
\end{definition}

\noindent For a Markov chain $Z_0,Z_1,\ldots$ on $\hat{\mathcal{K}}$ with stationary distribution $\mu_Z$ and transition kernel $Q_Z$, we define the conductance on a subset $V$ to be
\be
\Phi_Z^{\hat{^{\mathcal{K}}}}(V) := \inf_{A \subseteq V}\frac{\int_A  Q_Z(x, \hat{\mathcal{K}} \backslash A) \mu_Z(x) \mathrm{d}x}{\mu_Z(A)} \quad \quad \forall V \subseteq \hat{\mathcal{K}}
\ee
and the hitting time
\be
\tau_Z(A) := \inf\{i : Z_i \in A\} \quad \quad \forall A \subseteq \hat{\mathcal{K}}.
\ee
Finally,  we define the notion of two Markov chains being $\varepsilon'$-close:
\begin{definition}\label{def:epsilon_close}
If $W_0, W_1, \ldots$ and $Z_0, Z_1, \ldots$ are Markov chains on a set $\hat{\mathcal{K}}$ with transition kernels  $Q_W$ and $Q_Z$, respectively, we say that $W$ is $\varepsilon'$-close to $Z$ with respect to a set $U\subseteq \hat{\mathcal{K}}$ if 
\be
Q_Z(x,A) \leq Q_W(x,A) \leq (1+ \varepsilon') Q_Z(x,A)
\ee
for every $x \in \hat{\mathcal{K}} \backslash U$ and $A \subseteq \hat{\mathcal{K}} \backslash \{x\}$.
\end{definition}

\noindent
We now give a generalization of Proposition 2 in \cite{hitting_times}:
\begin{lemma} \label{lemma:cheeger2} \textbf{(Bounding the Cheeger constant)}
Assume that $\hat{\mathcal{K}} \subseteq \mathcal{K}'$ is convex, and that $F$ is convex and $\lambda$-Lipschitz on $\hat{\mathcal{K}}_{r'}$.  Then for every $\epsilon>0$ and all $\xi \geq \frac{4d\log(R/\min(\frac{\varepsilon}{2\lambda}, r'))}{\varepsilon}$ we have
\be
\mathcal{C}_{\xi {F}}^{\hat{\mathcal{K}}_{r'}}(\hat{\mathcal{K}}_{r'}\backslash U^\varepsilon) \geq \frac{1}{R}.
\ee
\end{lemma}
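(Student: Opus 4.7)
The plan is to combine the classical Lov\'asz--Simonovits isoperimetric inequality for log-concave measures on convex bodies with a measure-concentration argument showing that the sub-level set $U^\varepsilon$ already carries at least half of the total mass once $\xi$ is large enough; the ratio in the definition of $\mathcal{C}_{\xi F}^{\hat{\mathcal{K}}_{r'}}$ then reduces to the two-sided version for which Lov\'asz--Simonovits directly applies.

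Since $F$ is convex on $\hat{\mathcal{K}}_{r'}$, the density proportional to $e^{-\xi F}$ is log-concave on the convex body $\hat{\mathcal{K}}_{r'}$, whose Euclidean diameter is at most $2R$. Lov\'asz--Simonovits therefore yields, for every measurable $A\subseteq\hat{\mathcal{K}}_{r'}$,
$$\liminf_{\delta\downarrow 0}\frac{\mu_{\xi F}^{\hat{\mathcal{K}}_{r'}}(A_\delta)-\mu_{\xi F}^{\hat{\mathcal{K}}_{r'}}(A)}{\delta}\;\geq\;\frac{1}{R}\min\!\bigl(\mu_{\xi F}^{\hat{\mathcal{K}}_{r'}}(A),\,\mu_{\xi F}^{\hat{\mathcal{K}}_{r'}}(\hat{\mathcal{K}}_{r'}\setminus A)\bigr),$$
after absorbing an absolute constant into the logarithmic factor in the hypothesis on $\xi$. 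Hence the Cheeger bound reduces to showing that $\mu_{\xi F}^{\hat{\mathcal{K}}_{r'}}(A)\leq \tfrac{1}{2}$ for every $A\subseteq \hat{\mathcal{K}}_{r'}\setminus U^\varepsilon$, and since $A$ is disjoint from $U^\varepsilon$ this in turn follows from the single inequality $\mu_{\xi F}^{\hat{\mathcal{K}}_{r'}}(U^\varepsilon)\geq \tfrac{1}{2}$.

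For the concentration step, I would set $\rho:=\min(\varepsilon/(2\lambda),r')$ and exhibit a Euclidean ball of radius $\rho$ contained in $U^\varepsilon\cap \hat{\mathcal{K}}_{r'}$. Taking a minimizer $x^\star\in\hat{\mathcal{K}}$ of $F$, the choice $\rho\leq r'$ and the Minkowski-sum structure $\hat{\mathcal{K}}_{r'}=\hat{\mathcal{K}}+B(0,r')$ place $B(x^\star,\rho)$ inside $\hat{\mathcal{K}}_{r'}$, and the $\lambda$-Lipschitz property together with $F(x^\star)=0$ gives $F\leq\lambda\rho\leq\varepsilon/2$ on this ball, so that $e^{-\xi F}\geq e^{-\xi\varepsilon/2}$ there and $B(x^\star,\rho)\subseteq U^\varepsilon$. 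Using $\hat{\mathcal{K}}_{r'}\subseteq B(c,R+r')\subseteq B(c,2R)$ and comparing volumes of Euclidean balls yields
$$\mu_{\xi F}^{\hat{\mathcal{K}}_{r'}}(U^\varepsilon)\;\geq\;\frac{\mathrm{vol}(B(x^\star,\rho))}{\mathrm{vol}(B(c,2R))}\,e^{-\xi\varepsilon/2}\;=\;\Bigl(\tfrac{\rho}{2R}\Bigr)^{d}e^{-\xi\varepsilon/2}.$$
Taking logarithms and plugging in $\xi\geq 4d\log(R/\rho)/\varepsilon$, this lower bound exceeds $\tfrac12$ once the $\log 2$ and the constant from the volume comparison are absorbed into the coefficient ``$4$'', which closes the argument.

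The main obstacle I anticipate is purely bookkeeping: making the final bound come out to exactly $1/R$ rather than $c/R$ for some absolute $c<1$, since both the Lov\'asz--Simonovits constant and the Euclidean ball volume ratio contribute harmless constants. The natural fix is to absorb those constants into the ``$4$'' (equivalently, into the logarithmic factor $\log(R/\min(\varepsilon/(2\lambda),r'))$) appearing in the hypothesis on $\xi$, which is why the hypothesis is phrased with an explicit numerical constant in the first place. A secondary technical point is that $x^\star$ must actually lie in $\hat{\mathcal{K}}$ for $B(x^\star,\rho)\subseteq\hat{\mathcal{K}}_{r'}$ to hold; this is precisely what the Minkowski-sum assumption $\mathcal{K}=\mathcal{K}'+B(0,r')$ delivers in the downstream applications of the lemma.
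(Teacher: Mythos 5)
Your proposal is correct and follows essentially the same route as the paper: exhibit a ball of radius $\hat r=\min(\varepsilon/(2\lambda),r')$ on which the density is at least $e^{-\xi\varepsilon/2}$ (via the Lipschitz bound and the Minkowski-sum structure of $\hat{\mathcal{K}}_{r'}$), use a volume comparison to conclude that the sub-level set carries at least half the mass, and then invoke Lov\'asz--Simonovits (the paper cites Theorem~2.6 of \cite{lovasz1993random}) to convert this into a $1/R$ lower bound on the Cheeger constant. The one cosmetic difference is that you lower bound $\mu(U^\varepsilon)$ directly while the paper bounds the ratio $\mu(\hat{\mathcal{K}}_{r'}\setminus U^\varepsilon)/\mu(U^\varepsilon)$; the ratio form is marginally cleaner because the normalizing constant cancels without any reference to $\inf_{\hat{\mathcal{K}}_{r'}}F$, whereas your direct bound implicitly leans on $F(x^\star)=0$ to control the denominator.

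One small point to tighten: the Minkowski-sum assumption does not by itself put the global minimizer $x^\star$ in $\hat{\mathcal{K}}$. What it does guarantee is that there is some center $a\in\hat{\mathcal{K}}$ with $B(a,r')\subseteq\hat{\mathcal{K}}_{r'}$ containing a point within distance $r'$ of $x^\star$, and the argument should be run with that nearby center rather than $x^\star$ itself. The paper's proof handles this by introducing a center $a$ with $\hat x^\star\in B(a,\hat r)$ and then applying the Lipschitz bound over the \emph{diameter} of that ball (and incidentally writes $2\hat r\lambda\le\varepsilon/2$ where the arithmetic actually gives $2\hat r\lambda\le\varepsilon$); centering at the minimizer as you do avoids this factor-of-two slip, but you must first replace $x^\star$ by a nearby point of $\hat{\mathcal{K}}$ and pay an extra $\lambda r'$ in the value of $F$, which can be absorbed as you already anticipate.
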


\begin{proof}
Let $\hat{x}^\star$ be a minimizer of ${F}$ on $\hat{\mathcal{K}}_{r'}$.  Let $\hat{r} = \min(\frac{\varepsilon}{2\lambda}, r')$.  
Then since $\hat{\mathcal{K}}_{r'} = \hat{\mathcal{K}}+B(0,r')$, for some $a \in \mathcal{K}'$ there is a closed ball $B(a,\hat{r}) \subseteq \hat{\mathcal{K}}_{r'}$, with $x^\star \in B(a,\hat{r})$.  
By the Lipschitz property, we have
\be
\sup \{{F}(x): x \in B(a,\hat{r})\} \leq {F}(x^\star) + 2\hat{r}\lambda \leq {F}(x^\star) + \frac{\varepsilon}{2}.
\ee

\noindent
Therefore, 
\be \label{eq:ratio4}
\inf \left \{\frac{e^{-\xi {F}(x)}}{e^{-\xi {F}(y)}} \, : \, x \in B(a,\hat{r}), y \in \mathcal{K}' \backslash U^\varepsilon) \right\} \geq e^{\xi\varepsilon/2}.
\ee

\noindent
   
\noindent
Then Equation \eqref{eq:ratio4} implies that
\be
\frac{\mu_{\xi {F}}^{\hat{\mathcal{K}}_{r'}}(\hat{\mathcal{K}}_{r'}\backslash U_\varepsilon)}{\mu_{\xi {F}}^{\hat{\mathcal{K}}_{r'}}(U_\varepsilon)} &\leq  e^{-\xi\varepsilon/2} \frac{\mathrm{Vol}(B(c,R))}{\mathrm{Vol}(B(a,\hat{r}))}\\
&= e^{-\xi\varepsilon/2} (\frac{R}{\hat{r}})^d\\
&= e^{-\xi\varepsilon/2 + d \log(R/\hat{r})}\\
&\leq \frac{1}{2},
\ee
which implies that
\be \label{eq:half}
\mu_{\xi {F}}^{\hat{\mathcal{K}}_{r'}}(\hat{\mathcal{K}}_{r'}\backslash U_\varepsilon)\leq \frac{1}{2}.
\ee

\noindent Then by Theorem 2.6 of \cite{lovasz1993random} for all $A \subseteq \hat{\mathcal{K}}_{r'}\backslash U^\varepsilon$ for any $0<\delta < 2R$ we have
\be
\mu_{\xi {F}}^{\hat{\mathcal{K}}_{r'}}(A_\delta \backslash A) &\geq \frac{2 \frac{\delta}{2R}}{1-\frac{\delta}{2R}} \min(\mu_{\xi {F}}^{\hat{\mathcal{K}}_{r'}}(A), \mu_{\xi {F}}^{\hat{\mathcal{K}}_{r'}}(\hat{\mathcal{K}}_{r'}\backslash A_\delta))\\
&= \frac{2 \frac{\delta}{2R}}{1-\frac{\delta}{2R}} \min(\mu_{\xi {F}}^{\hat{\mathcal{K}}_{r'}}(A), 1- \mu_{\xi {F}}^{\hat{\mathcal{K}}_{r'}}(A_\delta))\\
&= \frac{2 \frac{\delta}{2R}}{1-\frac{\delta}{2R}} \min(\mu_{\xi {F}}^{\hat{\mathcal{K}}_{r'}}(A), 1- \mu_{\xi {F}}^{\hat{\mathcal{K}}_{r'}}(A_\delta \backslash A)- \mu_{\xi {F}}^{\hat{\mathcal{K}}_{r'}}(A))\\
&\stackrel{{\scriptsize \textrm{Eq. }} \eqref{eq:half}}{\geq}  \frac{2 \frac{\delta}{2R}}{1-\frac{1}{2R}} \min(\mu_{\xi {F}}^{\hat{\mathcal{K}}_{r'}}(A), 1- \mu_{\xi {F}}^{\hat{\mathcal{K}}_{r'}}(A_\delta \backslash A)-\frac{1}{2})\\
&=  \frac{2 \frac{\delta}{2R}}{1-\frac{\delta}{2R}} \min(\mu_{\xi {F}}^{\hat{\mathcal{K}}_{r'}}(A), \frac{1}{2}- \mu_{\xi {F}}^{\hat{\mathcal{K}}_{r'}}(A_\delta \backslash A))\\
&=  \frac{2 \frac{\delta}{2R}}{1-\frac{\delta}{2R}} \mu_{\xi {F}}^{\hat{\mathcal{K}}_{r'}}(A),\\
\ee
provided that $0<\delta <\Delta_A$ for some small enough value $\Delta_A>0$ that depends on $A$.  Therefore for every $A \subseteq \hat{\mathcal{K}}_{r'}\backslash U_\varepsilon$ 
there exists $\Delta_A>0$ such that
\be
\frac{\mu_{\xi {F}}^{\hat{\mathcal{K}}_{r'}}(A_\delta \backslash A)}{\delta\mu_{\xi {F}}^{\hat{\mathcal{K}}_{r'}}(A)} \geq \frac{2\frac{1}{2R}}{1-\frac{\delta}{2R}} \quad \quad \forall 0<\delta <\Delta_A.
\ee 
Taking $\delta \rightarrow 0$, we get
\be
\mathcal{C}_{\xi {F}}^{\hat{\mathcal{K}}_{r'}}(\hat{\mathcal{K}}_{r'}\backslash U^\varepsilon) \geq  \frac{1}{R}.
\ee
\end{proof}

\subsection{Bounding the escape probability}
\label{sec:drift}
We will use the Lemma proved in this section (Lemma \ref{lemma:drift2}) to show that the SGLD chain $X$ defined in Algorithm \ref{alg:SGLD} does not drift too far from its initial objective function value with high probability.
This will allow us to bound the noise, since the noise is proportional to the objective function $F$.
The organization of this section is as follows: we first define a ``toy" algorithm and an associated Markov chain $Y$ that will allow us to prove Lemma \ref{lemma:drift2} (and which we will later use to prove Theorem \ref{alg:annealing}).  We then prove Propositions \ref{thm:drift} and \ref{lemma:drift}, and Lemma \ref{lemma:drift2}. Proposition \ref{thm:drift} is used to prove Proposition \ref{lemma:drift}, which in turn is used to prove Lemma \ref{lemma:drift2}.

We begin by recalling the Metropolis-adjusted version of Algorithm \ref{alg:SGLD} defined in \cite{hitting_times}, which defines a Markov chain $Y_0,Y_1,\ldots$ with stationary distribution $\mu_{\xi\tilde{f}}^{\mathcal{K}}$.  Note that this is a``toy" algorithm which is not meant to be implemented; rather we state this algorithm only to define the Markov chain $Y_0,Y_1,\ldots$, which we will use as a tool to prove Lemma \ref{lemma:drift2} and Theorem \ref{alg:annealing}.
\begin{algorithm}[H]
\caption{Lazy Metropolis-adjusted SGLD \label{alg:SGLD_metropolis}} 
\flushleft
\textbf{input:} Convex constraint set $\hat{\mathcal{K}} \subseteq \mathbb{R}^d$, inverse temperature $\xi>0$, step size $\eta > 0$, parameters $i_{\max} \in \mathbb{N}$ and $D>0$, stochastic gradient oracle $g$ for some $\tilde{f}:\mathcal{K} \rightarrow \mathbb{R}$, \\
 \textbf{input:}   Initial point $Y_0 \in \mathbb{R}^d$.\\
\begin{algorithmic}[1]
\For{$i=0$ to $i_{\mathrm{max}}$}
\\ Sample $P_i \sim N(0,I_d)$.
\\ Set $Y'_{i+1} = Y_i - \eta g(Y_i) + \sqrt{\frac{2\eta}{\xi}} P_i$.
\\ Set $Y''_{i+1} = Y'_{i+1}$ if $Y'_{i+1} \in \hat{\mathcal{K}} \cap B(Y_i, D)$.  Otherwise, set $Y''_{i+1} = Y_i$.
\\ Set $Y'''_{i+1} = Y''_{i+1}$ with probability $\min \left(1, \, \frac{\mathbb{E}[e^{-\frac{1}{4\eta}\|Y_i- Y''_{i+1} + \eta g(Y''_{i+1})\|}]}{\mathbb{E}[e^{-\frac{1}{4\eta}\|Y''_{i+1}- Y_{i} + \eta g(Y_{i})\|}]}
e^{\tilde{f}(Y_i)-\tilde{f}(Y''_{i+1})}\right)$.  Otherwise, set $Y'''_{i+1} = Y_i$.\\
\\ Set $V_i = 1$ with probability $\frac{1}{2}$ and set $V_i = 0$ otherwise. Let  $Y_{i+1} = Y'''_{i+1}$ if $V_i=1$; otherwise, let $Y_{i+1} = Y_i$.\\
\EndFor
\end{algorithmic}
\textbf{output:} $x_{i^\star}$, where $i^\star := \mathrm{argmin}_i \{\tilde{f}(x_i)\}$.
\end{algorithm}

\noindent We now define a coupling of three Markov chains.  We will use this coupling to prove Lemma \ref{lemma:drift2} and Theorem \ref{alg:annealing}.

\begin{definition} \label{def:coupling} \textbf{(Coupled Markov chains)}
Let $X$ and $\hat{X}$ be Markov chains generated by Algorithm \ref{alg:SGLD} with constraint set $\mathcal{K}$ and $\hat{\mathcal{K}}_{r'}$, respectively, where $\hat{\mathcal{K}}\subseteq \mathcal{K}$ and $\hat{\mathcal{K}}_{r'} = \hat{\mathcal{K}} + B(0,r')$.  Let $Y$ be the Markov chain generated by Algorithm \ref{alg:SGLD_metropolis}.
We define a coupling of the Markov chains $X$, $\hat{X}$ and $Y$ in the following way: 
 Define recursively, $t(0) = 0$, 
 $$t(i+1) = \min\{j\in \mathbb{N}: j>i, V_j = 1\}.$$
Let $Q_0,Q_1,\ldots \sim N(0,I_d)$ be i.i.d.  Let $X_0 = Y_0 = \hat{X}_0$. Let $Y_i$ be the chain in Algorithm \ref{alg:SGLD_metropolis} generated by setting $P_i = Q_i$ for all $i\geq 0$ with constraint set $\mathcal{K}$. 
Let $X$ be the chain in Algorithm \ref{alg:SGLD_metropolis} generated by setting $P_i = Q_{t(i)}$ for all $i\geq 0$ with constraint set $\mathcal{K}$.  Let $\hat{X}$ be the chain in Algorithm \ref{alg:SGLD_metropolis} generated by setting $P_i = Q_{t(i)}$ for all $i\geq 0$ with constraint set $\hat{\mathcal{K}}_{r'}$.
\end{definition}

\noindent We now bound the escape probability of the Markov chain $Y$ from a sub-level set of a given height, assuming that it is initialized from its stationary distribution conditioned on a small ball.

\begin{proposition} \label{thm:drift} \textbf{(Escape probability from stationary distribution on a small ball)}
Let $r>0$ be such that $r'\geq r>0$ and let $\xi>0$.  
 Let $Y_0, Y_1, \ldots$ be the Markov chain defined in Algorithm \ref{alg:SGLD_metropolis} with stationary distribution $\pi = \mu_{\xi \tilde{f}}^{\mathcal{K}}$, and let $Y_0$ be sampled from $\pi_0 := \mu_{\xi \tilde{f}}^{B(y,r) \cap \mathcal{K}}$, where $\pi_0$ is the distribution of $\pi$ conditioned on $B(y,r) \cap \mathcal{K}$ for some $y\in \mathcal{K}$.
Then for every $i \geq 0$ we have
\be
\mathbb{P}(\tilde{f}(Y_i) \geq h) \leq e^{\xi[\tilde{f}(y) + \tilde{\lambda}r] -\xi h + d\log(\frac{2R}{r})}  \quad \quad \forall h\geq 0.
\ee
\end{proposition}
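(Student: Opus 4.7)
The plan is to exploit the fact that $Y$ has stationary distribution $\pi = \mu_{\xi\tilde{f}}^{\mathcal{K}}$ while being started from the conditional distribution $\pi_0 = \pi(\,\cdot\,\mid A)$ on the small set $A := B(y,r) \cap \mathcal{K}$. Writing $\nu_i$ for the law of $Y_i$, the initial Radon--Nikodym derivative $d\nu_0/d\pi = \mathbf{1}_A/\pi(A)$ is bounded above by $1/\pi(A)$. A stationarity argument then gives $d\nu_i/d\pi \leq 1/\pi(A)$ for every $i\geq 0$: if we write $P$ for the Markov kernel of the chain in Algorithm~\ref{alg:SGLD_metropolis} and assume inductively that $d\nu_i/d\pi \leq 1/\pi(A)$, then for any measurable $S\subseteq \mathcal{K}$,
\[
\nu_{i+1}(S) = \int P(x,S)\, d\nu_i(x) \leq \tfrac{1}{\pi(A)}\int P(x,S)\, d\pi(x) = \tfrac{\pi(S)}{\pi(A)},
\]
where the last equality uses stationarity of $\pi$. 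Applying this with $S=\{x\in \mathcal{K}:\tilde{f}(x)\geq h\}$ reduces the statement to showing $\pi(S)/\pi(A) \leq e^{\xi(\tilde{f}(y)+\tilde\lambda r)-\xi h}(2R/r)^d$.

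Next I would reduce $\pi(S)/\pi(A)$ to a ratio of unnormalized integrals, since the normalizing constant of $\pi$ cancels, and bound each piece. The numerator satisfies $\int_S e^{-\xi\tilde{f}(x)} dx \leq e^{-\xi h}\,\mathrm{Vol}(\mathcal{K}) \leq e^{-\xi h}\,\mathrm{Vol}(B(c,R))$. For the denominator I would use the $\tilde{\lambda}$-Lipschitz estimate $\tilde{f}(x) \leq \tilde{f}(y)+\tilde{\lambda}r$ on $B(y,r)\cap \mathcal{K}$, which follows from $\|\nabla\tilde{f}\|\leq\tilde\lambda$ and convexity of $\mathcal{K}$, obtaining $\int_A e^{-\xi\tilde{f}(x)} dx \geq e^{-\xi(\tilde{f}(y)+\tilde{\lambda}r)}\,\mathrm{Vol}(A)$.

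The main geometric obstacle is to lower-bound $\mathrm{Vol}(B(y,r)\cap\mathcal{K})$ by a constant factor of $\mathrm{Vol}(B(y,r))$, and this is where the roundness hypothesis $\mathcal{K} = \mathcal{K}' + B(0,r')$ together with $r \leq r'$ is used. Since $y \in \mathcal{K}$, I can write $y = y' + b$ with $y'\in\mathcal{K}'$ and $b\in B(0,r')$, whence $B(y',r')\subseteq\mathcal{K}$ by the Minkowski decomposition. Setting $\theta := r/(2r') \in (0,\tfrac{1}{2}]$ and $c_\theta := (1-\theta) y + \theta y'$, the ball $B(c_\theta, \theta r') = B(c_\theta, r/2)$ is contained in $\mathcal{K}$ by convexity applied to the convex combination of $\{y\}$ and $B(y',r')$; and its center is within $\theta \|y-y'\| \leq \theta r' = r/2$ of $y$, so $B(c_\theta, r/2) \subseteq B(y,r)$. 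Hence $\mathrm{Vol}(B(y,r)\cap\mathcal{K}) \geq v_d (r/2)^d$, where $v_d$ is the volume of the unit ball, and therefore $\mathrm{Vol}(B(c,R))/\mathrm{Vol}(A) \leq (2R/r)^d$.

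Combining the three pieces gives $\mathbb{P}(\tilde{f}(Y_i)\geq h) \leq e^{-\xi h + \xi(\tilde{f}(y)+\tilde\lambda r)}\cdot (2R/r)^d$, which is the claim after rewriting $(2R/r)^d = e^{d\log(2R/r)}$. The one nonroutine step I would flag is the $L^\infty(\pi)$-domination in the first paragraph: it does not actually require reversibility of the chain, only that $\pi$ is stationary for the kernel, which is exactly what the Metropolis correction (together with the laziness) in Algorithm~\ref{alg:SGLD_metropolis} is designed to deliver.
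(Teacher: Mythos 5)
Your proof is correct and follows essentially the same route as the paper: reduce $\mathbb{P}(Y_i\in S)$ to $\pi(S)/\pi(A)$ via stationarity (the paper does this by conditioning a stationary copy $Z$ of the chain on $\{Z_0\in A\}$, you by propagating the $L^\infty(\pi)$ bound on $d\nu_i/d\pi$ --- two phrasings of the same fact), then bound the numerator by $e^{-\xi h}\mathrm{Vol}(B(c,R))$ and the denominator by $e^{-\xi(\tilde{f}(y)+\tilde\lambda r)}$ times the volume of a ball of radius $r/2$ inside $B(y,r)\cap\mathcal{K}$. The only substantive difference is that you spell out the geometric argument (via the Minkowski decomposition $\mathcal{K}=\mathcal{K}'+B(0,r')$ and a convex combination) for why $B(y,r)\cap\mathcal{K}$ contains a ball of radius $r/2$, a fact the paper merely asserts.
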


\begin{proof}
Fix $h\geq 0$.  Define $S_1:= B(y,r) \cap \mathcal{K}$ and $S_2:= \{x \in \mathcal{K}: \tilde{f}(x) \geq h\}$.
Let $c_\pi = \int_{\mathcal{K}} e^{-\xi \tilde{f}(x)} \mathrm{d} x$ be the normalizing constant of $\pi$.  Since $\pi$ is the stationary distribution of $Y$,
\be \label{eq:drift2}
\mathbb{P}(Y_i \in S_2) \leq \frac{\pi(S_2)}{\pi(S_1)} \quad \quad \forall \, i \in \{0,\ldots,i_{\textrm{max}}\}
\ee
We can see why Inequality \eqref{eq:drift2} is true by the following argument:  Let $Z$ be a copy of the $Y$ chain started at stationarity.  Let $\mathcal{E}$ be the event that $Z_0 \in S_1$.  Then $Z_0|\mathcal{E}$ ($Z_0$ conditioned on the event $\mathcal{E}$) has the same distribution as $Y_0 \sim \pi_0$.  Therefore, $Z_i|\mathcal{E}$ has the same distribution as $Y_i$ (since the Z and Y chains have the same transition kernel).  Therefore, $\mathbb{P}(Z_i \in S_2 | \mathcal{E}) = \mathbb{P}(Y_i \in S_2)$.  Hence,
\be
\pi(S_2) &= \mathbb{P}(Z_i \in S_2) \geq \mathbb{P}(\{Z_i \in S_2\} \cap \mathcal{E}) = \mathbb{P}(Z_i \in S_2 | \mathcal{E}) \mathbb{P}(\mathcal{E}) =  \mathbb{P}(Y_i \in S_2 ) \mathbb{P}(Z_0 \in S_1) =\mathbb{P}(Y_i \in S_2 ) \pi(S_1),
\ee
 which implies Inequality \eqref{eq:drift2}.

But $\|\nabla \xi \tilde{f} \|=\| \xi \nabla \tilde{f}\| \leq \xi \tilde{\lambda}$, implying that
\be  \label{eq:drift3}
\pi(S_1) = \pi(B(y,r)) &\geq \; c_\pi e^{-[\xi \tilde{f}(y) +\xi \tilde{\lambda}r]} \times \mathrm{Vol}(B(y,r) \cap \mathcal{K})\\
& \geq c_\pi e^{-\xi[\tilde{f}(y) +\tilde{\lambda}r]} \times \mathrm{Vol}(B(0,\frac{1}{2}r)),
\ee
since $B(y,r) \cap \mathcal{K}$ contains a ball of radius $\frac{1}{2}r$ because $r\leq r'$.
Also,
\be  \label{eq:drift4}
\pi(S_2) = \pi(\{x: \tilde{f}(x) \geq h\}) &\leq c_\pi e^{-\xi h} \mathrm{Vol}(\mathcal{K})
\leq c_\pi e^{-\xi h} \mathrm{Vol}(B(0,R)).
\ee
Therefore,
\be
\mathbb{P}(Y_i \in S_2) &\stackrel{{\scriptsize \textrm{Eq. }} \eqref{eq:drift2}}{\leq} \frac{\pi(S_2)}{\pi(S_1)}\\
&\stackrel{{\scriptsize \textrm{Eq. }} \eqref{eq:drift3}, \, \eqref{eq:drift4}}{\leq}  e^{\xi [\tilde{f}(y) +\tilde{\lambda}r]-\xi h} \times \left(\frac{R}{\frac{1}{2}r}\right)^d\\
&= e^{\xi[\tilde{f}(y) +\tilde{\lambda}r]-\xi h +d \log(\frac{2R}{r})}.
\ee
\end{proof}

\noindent We now extend our bound for the escape probability of the Markov chain $Y$ (Proposition \ref{thm:drift}) to the case where $Y$ is instead initialized from the \emph{uniform} distribution on a small ball:

\begin{proposition} \label{lemma:drift} \textbf{(Escape probability from uniform distribution on a small ball)}
Let $r>0$ be such that $r'\geq r>0$ and let $\xi>0$. Let $\nu_0$ be the uniform distribution on $B(y,r) \cap \mathcal{K}$ for some $y\in \mathcal{K}$. 
Let $Y_0, Y_1, \ldots$ be the Markov chain defined in Algorithm \ref{alg:SGLD_metropolis} with stationary distribution $\pi = \mu_{\xi \tilde{f}}^{\mathcal{K}}$, and let $Y_0$ be sampled from $\nu_0$.
Then for every $i \geq 0$ we have
\be \label{eq:drift1}
\mathbb{P}(\tilde{f}(Y_i) \geq h) &\leq e^{\xi[\tilde{f}(y) + \tilde{\lambda}r] -\xi h + d\log(\frac{2R}{r})}  + 2r\tilde{\lambda}\xi \quad \quad \forall h\geq 0.
\ee
Moreover, for every $A \subseteq \mathcal{K}$, we have
\be  \label{eq:warmstart}
 \nu_0(A) \leq e^{2R\tilde{\lambda}\xi + d\log(\frac{2R}{r})}  \pi(A).
\ee

\end{proposition}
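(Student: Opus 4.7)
The plan is to derive both inequalities by comparing the uniform distribution $\nu_0$ on $B(y,r) \cap \mathcal{K}$ to the Gibbs-tilted initialization $\pi_0 := \mu_{\xi\tilde{f}}^{B(y,r) \cap \mathcal{K}}$ that was used in Proposition \ref{thm:drift}, and (for the second inequality) to the full stationary distribution $\pi$ of $Y$. Crucially, no further Markov chain analysis is needed beyond Proposition \ref{thm:drift}; both claims are density-comparison estimates.

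For Equation \eqref{eq:drift1}, I would first bound the total variation distance between $\nu_0$ and $\pi_0$. Since $\tilde{f}$ is $\tilde{\lambda}$-Lipschitz and $\mathrm{diam}(B(y,r)) = 2r$, the function $\tilde{f}$ oscillates by at most $2\tilde{\lambda} r$ on $B(y,r) \cap \mathcal{K}$. Hence the Radon--Nikodym derivative $d\pi_0/d\nu_0$, which equals $e^{-\xi \tilde{f}}$ normalized on $B(y,r) \cap \mathcal{K}$, lies in the interval $[e^{-2\tilde{\lambda} r \xi}, e^{2\tilde{\lambda} r \xi}]$, so
$$\|\nu_0 - \pi_0\|_{\mathrm{TV}} \;\leq\; 1 - e^{-2\tilde{\lambda} r \xi} \;\leq\; 2 r \tilde{\lambda} \xi .$$
By the data-processing inequality applied to the $i$-fold composition of the Markov kernel of $Y$, the distributions of $Y_i$ started from $\nu_0$ and from $\pi_0$ remain within total variation distance $2r\tilde{\lambda}\xi$. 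Applying Proposition \ref{thm:drift} to the $\pi_0$-initialization to bound $\mathbb{P}(\tilde{f}(Y_i) \geq h)$ and then adding this TV slack gives Equation \eqref{eq:drift1}.

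For Equation \eqref{eq:warmstart} I would compare densities directly. Let $A' := A \cap B(y,r) \cap \mathcal{K}$, so that $\nu_0(A) = \mathrm{Vol}(A')/\mathrm{Vol}(B(y,r) \cap \mathcal{K})$, and $\mathrm{Vol}(B(y,r) \cap \mathcal{K}) \geq \mathrm{Vol}(B(0, r/2))$ by the same Minkowski-sum argument used in the proof of Proposition \ref{thm:drift} (which relies on $r \leq r'$ and on $\mathcal{K} = \mathcal{K}' + B(0,r')$). On the other side, $\pi(A) \geq \pi(A')$, where on $A' \subseteq B(y,r)$ we have $\tilde{f}(x) \leq \tilde{f}(y) + \tilde{\lambda} r$ by Lipschitzness, while the normalizing constant satisfies $c_\pi = \int_{\mathcal{K}} e^{-\xi \tilde{f}(x)}\,dx \leq e^{-\xi(\tilde{f}(y) - 2R\tilde{\lambda})} \, \mathrm{Vol}(B(c,R))$ since $\|x - y\| \leq 2R$ for every $x \in \mathcal{K}$. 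Combining these and using $\mathrm{Vol}(B(c,R))/\mathrm{Vol}(B(0,r/2)) = (2R/r)^d$ gives
$$\nu_0(A) \;\leq\; e^{(2R + r)\tilde{\lambda}\xi + d\log(2R/r)} \, \pi(A),$$
which yields Equation \eqref{eq:warmstart} after absorbing the subdominant $r\tilde{\lambda}\xi$ into the $2R\tilde{\lambda}\xi$ exponent.

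The main obstacle is simply constant-tracking: namely reusing the volume lower bound $\mathrm{Vol}(B(y,r) \cap \mathcal{K}) \geq \mathrm{Vol}(B(0,r/2))$ (which is where the roundness assumption on $\mathcal{K}$ enters) and keeping the $2R\tilde{\lambda}\xi$ exponent tight after collecting the $e^{\pm \xi \tilde{f}(y)}$ factors and the $\tilde{\lambda} r \xi$ slack from the Lipschitz oscillation. All of this is purely arithmetic once the three density-comparison inputs above are in place.
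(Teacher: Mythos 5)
Your argument for \eqref{eq:drift1} is the same as the paper's, just phrased in TV/data-processing language: the paper constructs an explicit coupling between $Y_0\sim\nu_0$ and $Z'\sim\pi_0$ that agree with probability at least $e^{-2r\tilde\lambda\xi}$, then runs both chains with shared randomness; your bound $\|\nu_0-\pi_0\|_{\mathrm{TV}}\leq 1-e^{-2r\tilde\lambda\xi}\leq 2r\tilde\lambda\xi$ together with data processing encodes the identical quantitative input. Your route to \eqref{eq:warmstart} is likewise the same density-comparison argument the paper uses, but contains a concrete slip in the constant. You estimate $\sup_{A'}\tilde f\leq\tilde f(y)+\tilde\lambda r$ and separately bound the normalizing constant via $\inf_{\mathcal{K}}\tilde f\geq\tilde f(y)-2R\tilde\lambda$, picking up both the $\tilde\lambda r$ from the radius of $B(y,r)$ and the $2R\tilde\lambda$ from the diameter of $\mathcal{K}$, which gives $(2R+r)\tilde\lambda\xi$ in the exponent. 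You cannot ``absorb'' the positive extra term $r\tilde\lambda\xi$ into $2R\tilde\lambda\xi$: that would \emph{shrink} your upper bound, so as written you have established a strictly weaker inequality than the one claimed. The paper avoids this by invoking the single estimate $\sup_{x\in\mathcal{K}}\tilde f(x)-\inf_{x\in\mathcal{K}}\tilde f(x)\leq 2R\tilde\lambda$ (using only $\operatorname{diam}(\mathcal{K})\leq 2R$ and $A'\subseteq\mathcal{K}$), so a single factor $e^{-2R\tilde\lambda\xi}$ controls the ratio $\inf_{\mathcal{K}}\pi/\sup_{\mathcal{K}}\pi$ and no additional $\tilde\lambda r$ slack appears. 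Replacing your two comparisons to $\tilde f(y)$ with this one diameter bound is the one-line fix that recovers the stated constant.
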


\begin{proof}
Since $\|\nabla \xi \tilde{f}(x) \| \leq \xi \tilde{\lambda}$,
\be
\sup_{x\in B(y,r)\cap \mathcal{K}} \xi \tilde{f}(x) - \inf_{x\in B(y,r)\cap \mathcal{K}} \xi \tilde{f}(x) \leq 2r\tilde{\lambda} \xi,
\ee
and hence
\be \label{eq:ratio}
\frac{\inf_{x\in B(y,r)\cap \mathcal{K}} \pi(x)}{\sup_{x\in B(y,r)\cap \mathcal{K}} \pi(x)} \geq e^{-2r\tilde{\lambda}\xi}.
\ee

\noindent
Define  $\pi_0 := \mu_{\xi \tilde{f}}^{B(y,r) \cap \mathcal{K}}$ to be the distribution of $\pi$ conditioned on $B(y,r) \cap \hat{\mathcal{K}}'$.  Let $Z$ be sampled from the distribution $\pi_0$. 
 Let $Z'= Y_0$ with probability $\min(\frac{\pi_0(Y_0)}{\nu_0(Y_0)},1)$; otherwise let $Z'=Z$.  Then $Z'$ has distribution $\pi_0$.  
 Moreover, by Equation \eqref{eq:ratio}, $Z'= Y_0$ with probability at least $e^{-2r\tilde{\lambda}\xi}$.  Therefore, by Proposition \ref{thm:drift}
\be
\mathbb{P}(\tilde{f}(Y_i) \geq h) &\leq e^{\xi[\tilde{f}(y) + \tilde{\lambda}r] -\xi h + d\log(\frac{2R}{r})}  + 1-e^{-2r\tilde{\lambda}\xi}\\
&\leq e^{\xi[\tilde{f}(y) + \tilde{\lambda}r] -\xi h + d\log(\frac{2R}{r})}  + 2r\tilde{\lambda}\xi \quad \quad \forall h\geq 0.
\ee
This proves Equation \eqref{eq:drift1}.
Now, since $\|\xi \nabla \tilde{f}(x) \| \leq \xi \tilde{\lambda}$ and $\mathcal{K} \subseteq B(c,R)$,
\be
\sup_{x\in \mathcal{K}} \xi \tilde{f}(x) - \inf_{x\in \mathcal{K}} \xi \tilde{f}(x) \leq 2R\tilde{\lambda} \xi,
\ee
implying that
\be \label{eq:ratio2}
\frac{\inf_{x\in \mathcal{K}} \pi(x)}{\sup_{x\in \mathcal{K}} \pi(x)} \geq e^{-2R\tilde{\lambda}\xi}.
\ee
Therefore, for every $z \in \mathcal{K}$ we have
\be \label{eq:ratio3}
\frac{\pi(z)}{ \nu_0(z)} &= \mathrm{Vol}(B(y,r)\cap \mathcal{K})\times \pi(z)\\
&\geq \mathrm{Vol}(B(0,\frac{1}{2}r))\times \pi(z)\\
&\geq \mathrm{Vol}(B(0,\frac{1}{2}r)) \times \frac{1}{\mathrm{Vol}(B(0,2R))}\frac{\inf_{x\in \mathcal{K}} \pi(x)}{\sup_{x\in \mathcal{K}} \pi(x)}\\
&\stackrel{{\scriptsize \textrm{Eq. }} \eqref{eq:ratio2}}{\geq} \left(\frac{2R}{r}\right)^{-d} e^{-2R\tilde{\lambda}\xi}\\
&= e^{-2R\tilde{\lambda}\xi - d\mathrm{log}(\frac{2R}{r})}. 
\ee
Where the second inequality holds since $r\leq r'$.  This proves Equation \eqref{eq:warmstart}.
\end{proof}

\noindent We are now ready to bound the escape probability of the SGLD Markov chain $X$ defined in Algorithm \ref{alg:SGLD} when it is initialized from the uniform distribution on a small ball:

\begin{lemma} \label{lemma:drift2} \textbf{(Escape probability for unadjusted SGLD chain)}  Let $r>0$ be such that $r'\geq r>0$ and let $\xi>0$. 
Let $\nu_0$ be the uniform distribution on $B(y,r) \cap \mathcal{K}$ for some $y\in \mathcal{K}$, and let $X_0$ be sampled from $\nu_0$. 
 Let $X_0,X_1, \ldots$ be the Markov chain generated by Algorithm \ref{alg:SGLD} with constraint set $\mathcal{K}$.
  Let $\delta \leq \frac{1}{4}$ and let $0<\eta \leq \frac{\delta}{{i_\mathrm{max}}\times 16 d(G^2+L)}$  then
\be
\mathbb{P}(\tilde{f}(X_i) \geq h) &\leq e^{\xi[\tilde{f}(y) + \tilde{\lambda}r] -\xi h + d\log(\frac{2R}{r})}  + 2r\tilde{\lambda}\xi + \delta \quad \quad \forall h\geq 0.
\ee
\end{lemma}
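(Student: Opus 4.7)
The plan is to transfer the escape bound for the Metropolis-adjusted chain $Y$ (Algorithm \ref{alg:SGLD_metropolis}), already established in Proposition \ref{lemma:drift}, to the unadjusted chain $X$ (Algorithm \ref{alg:SGLD}) via the coupling of Definition \ref{def:coupling}. Under this coupling $X_0 = Y_0$ are both drawn from $\nu_0$, the two chains use the same Gaussian sequence $Q_0, Q_1, \ldots$, and the subsampling $P_i = Q_{t(i)}$ in $X$ precisely absorbs the lazy step of $Y$, so that $X_i$ and $Y_{t(i)}$ coincide whenever the Metropolis accept-reject step of $Y$ has never rejected during the first $t(i)$ iterations.

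The argument then proceeds in three steps. First, I would apply Proposition \ref{lemma:drift} to $Y$; its conclusion is uniform in the time index and so applies in particular at the random time $t(i)$, giving
\[
\mathbb{P}(\tilde{f}(Y_{t(i)}) \geq h) \leq e^{\xi[\tilde{f}(y) + \tilde{\lambda}r] -\xi h + d\log(2R/r)}  + 2r\tilde{\lambda}\xi.
\]
Second, I would use the inclusion
\[
\{\tilde{f}(X_i) \geq h\} \subseteq \{\tilde{f}(Y_{t(i)}) \geq h\} \cup \{X_i \neq Y_{t(i)}\},
\]
and note that $\{X_i \neq Y_{t(i)}\}$ is contained in the event that the Metropolis step of $Y$ rejects at least once in the first $t(i)$ iterations. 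Third, I would bound the per-step Metropolis rejection probability: using Assumption \ref{assumption:A}(2) to Taylor-expand $\tilde{f}$ along the proposal and Assumption \ref{assumption:A}(3) to control $\mathbb{E}[\|g\|^2]$, the expected deviation of the Metropolis ratio from $1$ per iteration is of order $\eta d (G^2 + L)$. A union bound over at most $i_{\max}$ iterations combined with the hypothesis $\eta \leq \delta/(16 \, i_{\max} d (G^2+L))$ gives total rejection probability at most $\delta$, and summing the two contributions yields the stated inequality.

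The main obstacle is the per-step Metropolis rejection estimate in the third step: one must show that the log of the Metropolis ratio concentrates around $0$ at scale $O(\eta d(G^2 + L))$, which requires simultaneously combining the quadratic smoothness bound on $\tilde{f}$ (to control $|\tilde{f}(Y''_{i+1}) - \tilde{f}(Y_i) - \langle Y''_{i+1} - Y_i, \nabla \tilde{f}(Y_i)\rangle|$) and the sub-Gaussian tails of $g$ (to ensure that rare large gradient samples in the numerator and denominator do not inflate the bound). A secondary but subtler point is the randomness of $t(i)$: since $t(i)$ is a sum of geometric waiting times with mean $2$, a Chernoff bound gives $t(i) \leq 2 i_{\max}$ with high probability whenever $i \leq i_{\max}$, and this extra factor is absorbed into the constant $16$ in the step-size hypothesis. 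Once these ingredients are in place, the remainder of the argument is the standard SGLD-versus-MALA coupling comparison already used in \cite{hitting_times}.
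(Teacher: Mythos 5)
Your proof takes essentially the same route as the paper: use the coupling of Definition \ref{def:coupling} to compare the unadjusted chain $X$ with the Metropolis-adjusted chain $Y$, apply Proposition \ref{lemma:drift} to $Y$, and bound the decoupling probability by a union bound over per-step Metropolis rejections, using the hypothesis $\eta \leq \delta/(16\, i_{\max}\, d(G^2+L))$ to make each such rejection probability at most $\delta/i_{\max}$. The only substantive difference is that you propose to re-derive the per-step rejection estimate from Assumption~\ref{assumption:A}, whereas the paper simply cites Claim~2 in the proof of Lemma~13 of \cite{hitting_times}, which already gives $\mathbb{P}(\text{reject at step }i) \leq 1 - e^{-16\eta d(G^2+L)}$.

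One small correction to your side remark about $t(i)$: no Chernoff bound on $t(i)$ is needed, and the constant $16$ is not absorbing any such factor. The Metropolis accept--reject can change the trajectory only at the $V_j = 1$ steps, and by construction there are exactly $i$ of those between time $0$ and $t(i)$, so the union bound is over $i \leq i_{\max}$ rejection opportunities regardless of how large $t(i)$ is. The $16$ comes verbatim from the cited claim's bound on the single-step Metropolis rejection probability.
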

\begin{proof}

Let $Y_0, Y_1, \ldots$ be the Markov chain generated by Algorithm \ref{alg:SGLD_metropolis}, and let $X_0, X_1, \ldots$  be the Markov chain defined in Algorithm \ref{alg:SGLD}, where both chains have constraint set $\mathcal{K}$. Couple the Markov chains $X$ and $Y$ as in Definition \ref{def:coupling}.
By Claim 2 in the proof of Lemma 13 of \cite{hitting_times}, for each $i \geq 0$ the rejection probability $\mathbb{P}(Y_{i+1} = Y_i)$ is bounded above by $1-e^{-16 \eta d (G^2+L)} \leq 1- e^{-\frac{\delta}{{i_\mathrm{max}}}} \leq \frac{\delta}{{i_\mathrm{max}}}$. 
 Hence, for all $0\leq i \leq i_\mathrm{max}$ we have
\be \label{eq:coupling}
\mathbb{P}(X_j =Y_j \, \, \forall 0\leq j \leq i) \geq (1- \frac{\delta}{{i_\mathrm{max}}})^{i} \geq 1-\delta.
\ee
Thus,
\be
\mathbb{P}(\tilde{f}(X_i) \geq h) &\leq \mathbb{P}(\tilde{f}(Y_i) \geq h) + \mathbb{P}(X_j \neq Y_j \textrm{ for some } 0\leq j\leq i)\\
&\stackrel{{\scriptsize \textrm{ Eq. }}\eqref{eq:coupling}}{\leq} \mathbb{P}(\tilde{f}(Y_i) \geq h) + \delta \quad \quad \forall h\geq 0\\
&\stackrel{{\scriptsize \textrm{Proposition }} \ref{lemma:drift}}{\leq} e^{\xi[\tilde{f}(y) + \tilde{\lambda}r] -\xi h + d\log(\frac{2R}{r})}  + 2r\tilde{\lambda}\xi + \delta \quad \quad \forall h\geq 0.
\ee
\end{proof}

\subsection{Comparing noisy functions} \label{sec:compare}
In this section we bound the ratio of $\hat{F}$ to $\tilde{f}$. We use this bound to prove Theorem \ref{alg:annealing} in Section \ref{sec:result_smooth}.
\begin{lemma} \label{lemma:compare} (\textbf{Bounding the ratio of two noisy objective functions}) 
Fix $x \in \mathcal{K}$ and let $t \geq 5\beta$. Define $\hat{H} = \max\{ \tilde{f}(x), t\}$ and let $\hat{J} = \max\{ \hat{F}(x), t\}$. Then,
\be
\frac{1}{5} \hat{H} \leq \hat{J}\leq 5 \hat{H}.
\ee

\end{lemma}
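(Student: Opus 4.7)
The plan is to first unpack the noise definitions into two-sided inequalities relating $\tilde f(x)$ and $\hat F(x)$ to the common reference value $F(x)$, and then do a short case analysis on the size of $F(x)$ relative to $t$. From $|\tilde f(x) - F(x)| \leq \alpha F(x) + \beta$ together with $|\hat F(x) - F(x)| \leq \hat\alpha F(x) + \hat\beta \leq \alpha F(x) + \beta$, I would deduce the symmetric bound
\[
(1-\alpha)F(x) - \beta \;\leq\; \tilde f(x),\,\hat F(x) \;\leq\; (1+\alpha)F(x) + \beta,
\]
which in turn gives $|\tilde f(x) - \hat F(x)| \leq 2\alpha F(x) + 2\beta$. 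The hypothesis $t \geq 5\beta$ will be used repeatedly in the form $\beta \leq t/5$, together with the implicit assumption (from the main theorem) that $\alpha$ is a sufficiently small constant.

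Next I would split into two regimes. In the \textbf{small} regime $F(x) \leq t$, both $\tilde f(x)$ and $\hat F(x)$ lie in $[-(\alpha t + \beta),\,(1+\alpha)t + \beta]$, so after taking the maximum with $t$ we get $t \leq \hat H, \hat J \leq (1+\alpha)t + \beta \leq (1+\alpha)t + t/5$, and for $\alpha$ a sufficiently small constant both quantities lie in $[t, 5t]$, immediately giving the ratio in $[1/5, 5]$. In the \textbf{large} regime $F(x) > t$, I would first bound $\hat H$ and $\hat J$ from above by $(1+\alpha)F(x) + \beta$ and from below by $\max\{(1-\alpha)F(x) - \beta,\, t\}$. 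If $(1-\alpha)F(x) - \beta \geq t$, then both $\hat H$ and $\hat J$ lie in $[(1-\alpha)F(x) - \beta,\,(1+\alpha)F(x) + \beta]$, and the ratio check reduces to verifying $(1+\alpha)F(x) + \beta \leq 5\bigl((1-\alpha)F(x) - \beta\bigr)$, which follows from $F(x) \geq 6\beta/(4-6\alpha)$; but $(1-\alpha)F(x) - \beta \geq t \geq 5\beta$ already gives $F(x) \geq 6\beta/(1-\alpha)$, which is enough for small $\alpha$. If instead $(1-\alpha)F(x) - \beta < t$, then $\hat J \geq t$ while $F(x) \leq (t+\beta)/(1-\alpha) \leq (6t/5)/(1-\alpha)$, and plugging into the upper bound yields $\hat H \leq (1+\alpha)\cdot (6t/5)/(1-\alpha) + t/5 \leq 5t$ for small $\alpha$.

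Combining the two cases and invoking the symmetry between $\tilde f$ and $\hat F$ (the bounds are identical), I would conclude $\frac{1}{5}\hat H \leq \hat J \leq 5\hat H$.

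The main obstacle is bookkeeping: the statement gives exactly the constant $5$, so one has to keep track carefully of how the $t \geq 5\beta$ assumption interacts with the factor $(1\pm\alpha)$ in the boundary case $F(x) \approx t$, where small slack could fail. This is why the threshold between the cases must be chosen at $F(x)=t$ rather than, say, $F(x) = 2t$, and why the argument quietly needs $\alpha$ small enough that $(1+\alpha)/(1-\alpha)$ stays below $3$ or so; since the global framework already enforces $\alpha \leq \tilde O(1/d)$, this is not a real restriction, but it should be mentioned.
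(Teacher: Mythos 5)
Your proof is correct, but it takes a genuinely different route from the paper's. You anchor both $\tilde f(x)$ and $\hat F(x)$ to the common reference $F(x)$ via the symmetric two-sided bound $(1-\alpha)F(x) - \beta \leq \tilde f(x),\hat F(x) \leq (1+\alpha)F(x) + \beta$, and then run a case analysis on whether $F(x)$ is below or above the threshold $t$, with a further split on whether $(1-\alpha)F(x) - \beta$ exceeds $t$. The paper instead avoids any case analysis on $F(x)$: it directly derives a one-sided affine relation between the two noisy oracles, $\tilde f(x) \leq 4\hat F(x) + 5\beta$ (using $\alpha < \frac{1}{2}$), and then pushes this through the $\max$ operation, getting $\max(\tilde f(x),5\beta) \leq 5\max(\hat F(x),5\beta)$, and finally invokes symmetry for the reverse direction. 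The paper's argument is shorter and purely algebraic; your argument is more transparent about \emph{why} the factor $5$ works and where $t \geq 5\beta$ enters, and it actually handles general $t \geq 5\beta$ explicitly, whereas the paper's proof as written only verifies $t = 5\beta$ (the extension is immediate but is not spelled out). Both proofs quietly rely on $\alpha$ being a small constant; the paper states $\alpha < \frac{1}{2}$ inside the proof, and your bookkeeping comes out to $\alpha \leq 3/5$, so you are right that the constraint should be made explicit. One small inaccuracy: your worst-case lower endpoint for $\tilde f(x)$ in the small regime is $-\beta$ (attained at $F(x)=0$), not $-(\alpha t + \beta)$, but since your interval is a superset of the true one this does not affect the conclusion.
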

\begin{proof}
By our assumption in Equation \eqref{eq:assumption_noise}, we have that
 \be
 |F(x)-\hat{F}(x)| \leq \hat{\alpha} F(x) + \hat{\beta} \leq \alpha F(x) + \beta.
 \ee
Since $\alpha <\frac{1}{2}$, we have,
 \be \label{eq:b1}
 F(x) \leq 2 \hat{F}(x) + 2 \beta.
 \ee

\noindent We also have that,
 \be
 |\tilde{f}(x)-F(x)| \leq \alpha F(x) + \beta
 \ee
implying that
 \be \label{eq:b2}
 \tilde{f}(x) \leq 4 F(x) + \beta.
 \ee
Therefore, combining Equations \eqref{eq:b1} and \eqref{eq:b2}, we have
\be \label{eq:b3}
 \tilde{f}(x) \leq 4 \hat{F}(x) + 5\beta,
 \ee
implying that
\be
 \max(\tilde{f}(x), 5 \beta) \leq \max(4 \hat{F}(x) + 5\beta, 20 \beta).\\
 \ee
 Thus,
 \be
 \max(\tilde{f}(x), 5 \beta) \leq 5\max(\hat{F}(x),5 \beta).
\ee
Thus, we have $\hat{H}\leq 5\hat{J}$.  By a similar argument as above, we can also show that $\hat{J} \leq 5\hat{H}$.
\end{proof}

\subsection{Bounding the error and running time: The smooth case}
 \label{sec:result_smooth}

In this section we will show how to bound the error and running time of Algorithm \ref{alg:SGLD}, if we assume that we have access to a stochastic gradient oracle  $g$ for a smooth noisy function $\tilde{f}$, which approximates the convex function $F$.  In particular, we do \emph{not} assume access to the smooth function $\tilde{f}$ itself, only to $g$.  We also assume access to a non-smooth oracle $\hat{F}$, which we use to determine the temperature parameter for our Markov chain based on the value of $\hat{F}(X_k^{0})$ at the beginning of each epoch.  To prove the running time and error bounds, we will use the results of Sections \ref{sec:conductance} and \ref{sec:drift}.

Recall that in this section $\alpha$ and $\beta$ refer exclusively to the multiplicative and additive noise levels of $\tilde{f}$.
  We must first define parameters that will be needed to formally state and prove our error and running time bounds:

\begin{itemize}
\item Fix $0\leq \varepsilon < \frac{1}{25}$ and $\delta>0$.  

\item Set parameters of Algorithms \ref{alg:SGLD} and \ref{alg:annealing} as follows: 

\begin{itemize}
\item Let $y_0\in \mathcal{K}$ and let $H_0 := \tilde{f}(y_0)$. 

\item Fix $\mathfrak{D} \geq \frac{1}{\varepsilon}\beta$.  For every $0\leq k \leq k_{\mathrm{max}}$, let $H_k := \tilde{f}(x_k)$ and define $\hat{H}_k := \max(H_k, \mathfrak{D})$. 

\item Assume, without loss of generality, that $r' \leq \frac{\mathfrak{D}}{\lambda}$.  \footnote{This is without loss of generality since if there exists a convex body $\mathcal{K}'$ such that $\mathcal{K}' +B(0,r') = \mathcal{K}$, then for every $0 < \rho \leq r'$ there must also exist a convex body $\mathcal{K}''$ such that $\mathcal{K}'' +B(0,\rho) = \mathcal{K}$, namely $\mathcal{K}'' = \mathcal{K}' + B(0,r'-\rho)$.}

\item For every $0\leq k \leq k_{\mathrm{max}}$, let $J_k := \hat{F}(x_k)$.  Define $\hat{J}_k := \max(J_k, \mathfrak{D})$.

\item Set the number of epochs to be $k_{\mathrm{max}}= \lceil \frac{\log(5 J_0/\mathfrak{D})}{\log(\frac{1}{25 \varepsilon})}\rceil+1$.

\item At every $k \geq 0$, set the temperature to be $\xi_k =  \frac{4 d \log(R/\min(\frac{\varepsilon}{2\lambda} \mathfrak{D}, r'))}{\frac{1}{5} \varepsilon \hat{J}_k}$.  Define $\bar{\xi} := \frac{4 d \log(R/\min(\frac{\varepsilon}{2\lambda}  \mathfrak{D}, r'))}{\frac{1}{25}\varepsilon  \mathfrak{D}}$.  

\item Set $r = \frac{\delta}{\bar{\xi}\tilde{\lambda}}$.

\item
Define
\be
\bar{\eta}^{\dagger} := c\min \left\{\zeta_{\mathrm{max}}, d\frac{\omega^2}{\lambda^2}, \frac{b_{\mathrm{max}}^2}{d}, \frac{1}{Rd^3((\bar{\xi} G)^2 +\bar{\xi} L)^2}\right\}
\ee
and
\be
\mathfrak{B}':= \frac{\left(d\log(2\frac{R}{r}) + \delta + 1+ \log(\frac{1}{\delta})\right))}{2 d \log(R/\min(\frac{\varepsilon}{2\lambda} \mathfrak{D}, r'))}.
\ee

 \item Set the number of steps $i_{\mathrm{max}}$ for which we run the the Markov chain $X$ in each epoch to be
 \be
 i_\mathrm{max}=\left \lceil \left( \frac{8R\tilde{\lambda}\xi_k + 4d(1+ \log(1+ \bar{\xi}) + \log(\frac{2R\tilde{\lambda}}{\delta})) + 4\log(\frac{1}{\delta})}{\left(\frac{1}{1536R}\sqrt{\bar{\eta}^{\dagger}/d} e^{-\frac{150d}{\varepsilon} \left[\frac{\alpha}{1-\alpha^{\dagger}}\left(3+\varepsilon \mathfrak{B}' + \frac{\beta }{\mathfrak{D}}\right) + \frac{\beta}{\mathfrak{D}} \right]\log(R/\min(\frac{\varepsilon}{2\lambda} \mathfrak{D}, r'))}\right)^2}\right)^{\frac{1}{1-\frac{150}{\varepsilon}\alpha}}\right\rceil  +1 .
 \ee
\item Define $\mathfrak{B}:= \frac{\left(d\log(2\frac{R}{r}) + \delta +  \log(i_{\max}+1) + \log(\frac{1}{\delta})\right))}{2 d \log(R/\min(\frac{\varepsilon}{2\lambda} \mathfrak{D}, r'))}$.

\item For every $\xi >0$ define
\be
\eta(\xi) := c\min \left\{\zeta_{\mathrm{max}}, d\frac{\omega^2}{\lambda^2}, \frac{b_{\mathrm{max}}^2}{d}, \frac{(e^{-\frac{100d}{\varepsilon} \left[\frac{\alpha}{1-\alpha^{\dagger}}\left(3+\varepsilon \mathfrak{B} + \frac{\beta}{\mathfrak{D}}\right) + \frac{\beta}{\mathfrak{D}} \right]\log(R/\min(\frac{\varepsilon}{2\lambda} \mathfrak{D}, r'))})^2}{Rd^3((\xi G)^2 +\xi L)^2}\right\},
\ee
\noindent where $\omega = \varepsilon \mathfrak{D}$, and $c$ is the universal constant in Lemma 15 of \cite{hitting_times}.

\noindent Set the step size at each epoch to be $\eta_k = \eta(\xi_k)$.  Also define $\bar{\eta} = \eta(\bar{\xi})$.
 
 \item Set $D = \sqrt{2\bar{\eta} d}$.
\end{itemize}
\end{itemize}

\noindent
We now state the error and running time bounds:

\begin{theorem} \label{thm:error} \textbf{(Error and running time bounds when using a smooth noisy objective function)}
Assume that $\alpha \leq \frac{\varepsilon}{32}$.
Then with probability at least $1-6\delta (k_\mathrm{max}+1)$ Algorithm \ref{alg:annealing} returns a point $\hat{x}=x_{k_\mathrm{max}}$ such that
\be
F(\hat{x}) -F(x^\star) &\leq \frac{1}{1-\alpha}(\mathfrak{D} + \beta),
\ee
with running time that is polynomial in $d$, $e^{\frac{d}{\varepsilon/150} \left[\frac{\alpha}{1-\alpha^{\dagger}}\left(3+\varepsilon \mathfrak{B}' + \frac{\beta}{\mathfrak{D}}\right) + \frac{\beta}{\mathfrak{D}} \right] \log(R/\min(\frac{\varepsilon}{2\lambda} \mathfrak{D}, r'))}$, $R$, ${\lambda}$, ${\tilde{\lambda}}$, $L$, G, $\zeta_{\mathrm{max}}$, $b_{\mathrm{max}}$, and $\log(\frac{1}{\delta})$.
\end{theorem}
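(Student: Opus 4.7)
The plan is to prove Theorem~\ref{thm:error} by induction on the epoch index $k$. The inductive claim is that at each epoch where $\hat{J}_k > \mathfrak{D}$, with probability at least $1 - 6\delta$, $\hat{J}_{k+1} \leq 25\varepsilon \hat{J}_k$; once $\hat{J}_k = \mathfrak{D}$, it trivially remains so. Since $k_{\max}$ is defined by $(25\varepsilon)^{k_{\max}}\cdot 5 J_0 \leq \mathfrak{D}$, union-bounding over the $k_{\max}+1$ epochs gives $\hat{F}(\hat{x}) \leq \mathfrak{D}$ with probability at least $1 - 6\delta(k_{\max}+1)$, and applying the noise model $F(\hat{x}) \leq (1-\alpha)^{-1}(\hat{F}(\hat{x}) + \beta)$ yields the claimed error bound. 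Because the SGLD chain uses the stochastic gradient of $\tilde{f}$ while its output minimizes $\hat{F}$, Lemma~\ref{lemma:compare} is used throughout to translate between $\tilde{f}$-sub-level sets (which the chain concentrates on) and $\hat{F}$-sub-level sets (which the output achieves).

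For the inductive step at an epoch $k$ with $\hat{J}_k > \mathfrak{D}$, define the convex target $U_k := \{x \in \mathcal{K} : F(x) \leq \tfrac{1}{5}\varepsilon \hat{J}_k\}$ and a confinement region $U_k^{\prime}$ of points where $\tilde{f}$ is at most $H_k + \xi_k^{-1} d + O(1)$. The initial point $y_k$ is drawn uniformly from $B(x_k, r)\cap\mathcal{K}$ with $r = \delta/(\bar{\xi}\tilde{\lambda})$, so Lemma~\ref{lemma:drift2} at temperature $\xi_k$ and step size $\eta_k$ with $h = H_k + \xi_k^{-1} d$ makes each of the three terms on the right-hand side at most $\delta$; a union over the $i_{\max}$ steps then shows that the SGLD chain $X^{(k)}$ stays in $U_k^{\prime}$ throughout the epoch with probability at least $1 - 3\delta$. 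On this event, the noise model gives $|\tilde{f}(x) - F(x)| \leq N_k$ uniformly on $U_k^{\prime}$, where $N_k \leq \alpha[\hat{H}_k + \xi_k^{-1} d + O(1)] + \beta$.

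The hitting-time bound proceeds through the coupling of Definition~\ref{def:coupling}: the true SGLD chain $X^{(k)}$, an auxiliary SGLD chain $\hat{X}^{(k)}$ restricted to a convex enlargement of $U_k^{\prime}$, and the Metropolis-adjusted chain $Y^{(k)}$ with stationary distribution $\mu_{\xi_k \tilde{f}}^{\hat{\mathcal{K}}_{r'}}$. Lemma~\ref{lemma:cheeger2}, applied with the threshold $\tfrac{1}{5}\varepsilon \hat{J}_k$ and the chosen $\xi_k = 4 d \log(R/\min(\cdot))/(\tfrac{1}{5}\varepsilon \hat{J}_k)$, yields $\mathcal{C}_{\xi_k F}^{\hat{\mathcal{K}}_{r'}}(\hat{\mathcal{K}}_{r'}\setminus U_k) \geq 1/R$. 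The Cheeger-stability property — a uniform bound $N_k$ on $|\tilde{f}-F|$ perturbs the Cheeger constant by at most a factor $e^{-2\xi_k N_k}$ — then gives
\begin{equation*}
\mathcal{C}_{\xi_k \tilde{f}}^{\hat{\mathcal{K}}_{r'}}(\hat{\mathcal{K}}_{r'} \setminus U_k) \;\geq\; \tfrac{1}{R}\exp(-2\xi_k N_k).
\end{equation*}
The prescribed step size $\eta_k = \eta(\xi_k)$ is small enough (by Assumption~\ref{assumption:A}) that $X^{(k)}$ is $\delta/i_{\max}$-close to $Y^{(k)}$ per step in the sense of Definition~\ref{def:epsilon_close}, and standard Cheeger-to-conductance and conductance-to-hitting-time arguments from \cite{hitting_times} convert the above into $\tau_Y(U_k) \leq i_{\max}$ with probability $\geq 1-\delta$. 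Transferring to $\hat{X}^{(k)}$ uses the $\varepsilon'$-closeness of the two chains, transferring from $\hat{X}^{(k)}$ to $X^{(k)}$ uses the drift event from the previous paragraph, and once $X^{(k)}$ hits $U_k$ at some step $i^\star$, Lemma~\ref{lemma:compare} translates $F(X_{i^\star}^{(k)}) \leq \tfrac{1}{5}\varepsilon \hat{J}_k$ into $\hat{J}_{k+1} \leq 25\varepsilon \hat{J}_k$, closing the induction.

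The principal obstacle is juggling the three competing temperature requirements from Section~\ref{sec:overview}. Concretely, $\xi_k$ must be cold enough that Lemma~\ref{lemma:cheeger2} gives the $1/R$ Cheeger bound on the convex $F$, yet hot enough that $2\xi_k N_k = O\bigl(\tfrac{d}{\varepsilon}[\alpha + \beta/\mathfrak{D}]\log(R/\min(\cdot))\bigr)$ keeps $e^{-2\xi_k N_k}$ from collapsing the Cheeger bound; the choice $\xi_k \propto d/(\varepsilon \hat{J}_k)$ together with the hypothesis $\alpha \leq \varepsilon/32$ and the convention $\mathfrak{D} \geq \beta/\varepsilon$ is exactly what makes both feasible, and is what produces the exponential-in-$d\alpha + d\beta/\mathfrak{D}$ factor in the running time. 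A further technicality is ensuring that $U_k^{\prime}$ is simultaneously small enough for the noise bound $N_k$ to hold and large enough (height $H_k + \xi_k^{-1}d$) for Lemma~\ref{lemma:drift2} to contain the chain; the rounding lemma (via Lemma~\ref{lemma:contraint_round}) is needed so that the Gaussian proposal at step size $\eta_k$ does not escape $\mathcal{K}$ too often. Tracking the failure probabilities from the drift, Cheeger, closeness, and coupling steps separately yields the $6\delta$ per epoch.
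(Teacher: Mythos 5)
Your proposal follows the paper's own argument very closely: per-epoch confinement in a sub-level set via the drift lemma (Lemma~\ref{lemma:drift2}), a Cheeger bound for the convex $F$ on that set via Lemma~\ref{lemma:cheeger2} transferred to $\tilde{f}$ by the stability factor $e^{-2\xi_k N_k}$, the three-chain coupling $(X^{(k)},\hat{X}^{(k)},Y^{(k)})$ for conductance and hitting-time, translation between $\tilde{f}$- and $\hat F$-values by Lemma~\ref{lemma:compare}, induction over epochs, and a final application of the noise model. This is essentially identical to the paper's proof; the only discrepancies are cosmetic (you state the contraction in terms of $\hat J_k$ rather than $\hat H_k$, and your per-step failure probability tally of $3\delta$ differs from the paper's $5\delta$, but both are absorbed into the $6\delta$ per-epoch budget).
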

\begin{proof}
Set notation as in Algorithms \ref{alg:SGLD} and \ref{alg:annealing}.  Denote by $X^{(k)}$ the Markov chain generated by Algorithm \ref{alg:SGLD} as a subroutine in the $k$'th epoch of Algorithm \ref{alg:annealing} with constraint set $\mathcal{K}$.

Set $h_k=\hat{H}_k + \xi_k^{-1}\left(d\log(\frac{2R}{r})  + \delta +  \log(i_{\max}+1) + \log(\frac{1}{\delta})\right).$
 Then by Lemma \ref{lemma:drift2}
\be \label{eq:delta2}
\mathbb{P}( \sup_{0\leq i \leq i_{\max}}\tilde{f}(X_i^{(k)}) \geq h_k) &\leq (i_{\max}+1)\times[ e^{\xi_k[\hat{H}_k + \tilde{\lambda}r] -\xi_kh_k + d\log(\frac{2R}{r})}  + 2r\tilde{\lambda}\xi_k + \delta]\\
&\leq e^{\xi_k \hat{H}_k  + \delta -\xi_k h_k + d\log(\frac{2R}{r}) +  \log(i_{\max}+1)}  + 4\delta]\\
&=   5\delta,
\ee
where the second inequality holds since $r = \frac{\delta}{\bar{\xi} \tilde{\lambda}}$ and $\xi_k \leq \bar{\xi}$ for all $k$.

\noindent
But $\tilde{f}(X_i^{(k)}) \geq h_k$ if and only if
\be
F(X_i^{(k)})(1+\psi(X_i^{(k)})) + \varphi(X_i^{(k)}) \geq h_k
\ee
if and only if
\be
F(X_i^{(k)}) \geq \frac{1}{1+\psi(X_i^{(k)})}(h_k - \varphi(X_i^{(k)})),
\ee
since $1+\psi(X_i^{(k)}) \geq 0$.  Also,
\be
\frac{1}{1+\psi(X_i^{(k)})}(h_k - \varphi(X_i^{(k)})) \leq \frac{1}{1-\alpha^{\dagger}}(h_k + \beta),
\ee
since $\psi(X_i^{(k)}) \geq -\alpha^{\dagger}>-1$ and $|\varphi(X_i^{(k)})|< \beta$.

\noindent Hence, 
\be \label{eq:delta}
5 \delta &\stackrel{{\scriptsize \textrm{Eq. }} \eqref{eq:delta2}}{\geq} \mathbb{P}\left(\sup_{0\leq i \leq i_{\max}}\tilde{f}(X_i^{(k)}) \geq h_k\right)\\
&\geq \mathbb{P}\left(\sup_{0\leq i \leq i_{\max}} F(X_i^{(k)}) \geq \frac{1}{1-\alpha^{\dagger}}(h_k + \beta)\right).\\
\ee

\noindent
Define $\hat{\mathcal{K}}^{(k)} := (\mathcal{K}' \cap \{x \in \mathbb{R}^d: F(x) \leq \frac{1}{1-\alpha^{\dagger}}(h_k + \beta) + \lambda r'\}) + B(0,r')$.  Then
\be \label{eq:subset}
\left \{x\in \mathcal{K}: F(x) \leq \frac{1}{1-\alpha^{\dagger}}(h_k + \beta) \right\} \subseteq \hat{\mathcal{K}}^{(k)},
\ee
since $\|\nabla F\| \leq \lambda$.
Thus, by Equations \eqref{eq:delta} and \eqref{eq:subset},
\be \label{eq:delta3}
\mathbb{P}\left(X_i^{(k)}\in \hat{\mathcal{K}}^{(k)} \, \, \forall 0\leq i \leq i_{\max}\right) \geq 1-5 \delta.
\ee

\noindent
Also, for every $x \in \hat{\mathcal{K}}^{(k)}$, since $r' \leq \frac{\mathfrak{D}}{\lambda}$, we have
\be \label{eq:F}
F(x) &\leq \frac{1}{1-\alpha^{\dagger}}(h_k + \beta) + 2 \lambda r'\\
&\leq \frac{1}{1-\alpha^{\dagger}}(h_k + \beta) + 2\mathfrak{D}\\
&= \frac{1}{1-\alpha^{\dagger}}\left(\hat{H}_k + \xi_k^{-1}\left(d\log(4\frac{R}{r}) + \delta +  \log(i_{\max}+1) + \log(\frac{1}{\delta})\right) + \beta \right) + 2\mathfrak{D}\\
&= \frac{1}{1-\alpha^{\dagger}}\left(\hat{H}_k + \frac{1}{5}\varepsilon \hat{J}_k \frac{\left(d\log(4\frac{R}{r}) + \delta +  \log(i_{\max}+1) + \log(\frac{1}{\delta})\right))}{2 d \log(R/\min(\frac{\varepsilon}{2\lambda} \mathfrak{D}, r'))}  + \beta \right) + 2\mathfrak{D}\\
&\stackrel{{\scriptsize \textrm{Lemma }} \ref{lemma:compare}}{\leq} \frac{1}{1-\alpha^{\dagger}}\left(\hat{H}_k + \varepsilon \hat{H}_k \mathfrak{B}  + \beta \right) + 2\mathfrak{D}.
\ee
Thus, for every $x \in \hat{\mathcal{K}}^{(k)}$,
\be \label{eq:noise}
|N(x)| &\leq \alpha F(x) + \beta\\
&\stackrel{{\scriptsize \textrm{Eq. }} \eqref{eq:F}}{\leq} \frac{\alpha}{1-\alpha^{\dagger}}\left(\hat{H}_k(1+\varepsilon \mathfrak{B}) + \beta + 2\mathfrak{D}\right) +\beta :=N_k.
\ee

\noindent
Define $U_k^{\varepsilon''} := \{x \in \hat{\mathcal{K}}^{(k)} : F(x) \leq \varepsilon''\}$ for every $\varepsilon'' >0$. Then by Lemma \ref{lemma:cheeger2}, 
$\mathcal{C}_{(\xi F)}^{\hat{\mathcal{K}}^{(k)}}(\hat{\mathcal{K}}^{(k)}\backslash U_k^{\varepsilon \hat{H}_k}) \geq \frac{1}{R}$ for any $\xi \geq \frac{4 d \log(R/\min(\frac{\varepsilon}{2\lambda} \hat{H}_k, r')}{\varepsilon \hat{H}_k}$.

\noindent
 But by Lemma \ref{lemma:compare}, $\xi_k =  \frac{4 d \log(R/\min(\frac{\varepsilon}{2\lambda} \mathfrak{D}, r'))}{\frac{1}{5} \varepsilon \hat{J}_k} \geq  \frac{4 d \log(R/\min(\frac{\varepsilon}{2\lambda} \hat{H}_k, r')}{\varepsilon \hat{H}_k}$, implying that
 \be \label{eq:conductance}
\mathcal{C}_{(\xi_k \tilde{f})}^{\hat{\mathcal{K}}^{(k)}}(\hat{\mathcal{K}}^{(k)}\backslash U_k^{\varepsilon \hat{H}_k}) &\stackrel{{\scriptsize \textrm{Eq. }} \eqref{eq:noise}}{\geq} e^{-2\xi_k N_k} \mathcal{C}_{(\xi_k F)}^{\hat{\mathcal{K}}^{(k)}}(\hat{\mathcal{K}}^{(k)}\backslash U_k^{\varepsilon \hat{H}_k})\\
&\stackrel{{\scriptsize \textrm{Lemma }} \ref{lemma:cheeger2}}{\geq} \frac{1}{R}e^{- 2\xi_k N_k}\\
&= \frac{1}{R}e^{-\frac{4d}{\varepsilon} \frac{N_k}{\frac{1}{5}\hat{J}_k}\log(R/\min(\frac{\varepsilon}{2\lambda}\frac{1}{5} \mathfrak{D}, r'))}\\
&\stackrel{{\scriptsize \textrm{Lemma }} \ref{lemma:compare}}{\geq} \frac{1}{R}e^{-\frac{4d}{\varepsilon} \frac{N_k}{\frac{1}{25}\hat{H}_k}\log(R/\min(\frac{\varepsilon}{2\lambda} \mathfrak{D}, r'))}\\
&= \frac{1}{R}e^{-\frac{4d}{\varepsilon} \frac{\frac{\alpha}{1-\alpha^{\dagger}}\left(\hat{H}_k(1+\varepsilon \mathfrak{B}) + 2\mathfrak{D}+ \beta\right)+ \beta}{\frac{1}{25}\hat{H}_k}\log(R/\min(\frac{\varepsilon}{2\lambda} \mathfrak{D}, r'))}\\
&\geq \frac{1}{R}e^{-\frac{100d}{\varepsilon} \left[\frac{\alpha}{1-\alpha^{\dagger}}\left(3+\varepsilon \mathfrak{B} + \frac{\beta}{\mathfrak{D}}\right) + \frac{\beta}{\mathfrak{D}} \right]\log(R/\min(\frac{\varepsilon}{2\lambda} \mathfrak{D}, r'))},\\
\ee
where the first inequality holds by the stability property of the Cheeger constant, and the last inequality is true since $\hat{H}_k \geq \mathfrak{D}$ by definition.

\noindent
Recall that
\be \label{eq:eta}
\eta_k &= c\min \left\{\zeta_{\mathrm{max}}, d\frac{\omega^2}{\lambda^2}, \frac{b_{\mathrm{max}}^2}{d}, \frac{(e^{-\frac{100d}{\varepsilon}  \left[\frac{\alpha}{1-\alpha^{\dagger}}\left(3+\varepsilon \mathfrak{B} + \frac{\beta }{\mathfrak{D}}\right) + \frac{\beta}{\mathfrak{D}} \right]\log(R/\min(\frac{\varepsilon}{2\lambda} \mathfrak{D}, r'))})^2}{Rd^3((\xi_k G)^2 +\xi_kL)^2}\right\}\\
&\stackrel{{\scriptsize \textrm{Eq. }} \eqref{eq:conductance}}{\leq} c\min \left\{ \zeta_{\mathrm{max}}, d\frac{\omega^2}{\lambda^2}, \frac{b_{\mathrm{max}}^2}{d}, \frac{(\mathcal{C}_{(\xi_k \tilde{f})}^{\hat{\mathcal{K}}^{(k)}}(\hat{\mathcal{K}}^{(k)}\backslash U_k^{\varepsilon \hat{H}_k}) )^2}{d^3((\xi_k G)^2 +\xi_kL)^2} \right\},
\ee
where $\omega = \varepsilon \mathfrak{D}$. 

Recall that $X^{(k)}$ is the subroutine Markov chain described in Algorithm \ref{alg:SGLD} with inputs specified by Algorithm \ref{alg:annealing} and constraint set $\mathcal{K}$.
Let $\hat{X}^{(k)}$ be the Markov chain generated by Algorithm \ref{alg:SGLD} with constraint set $\hat{\mathcal{K}}_{r'}^{(k)}$ and initial point $X_0^{(k)} = \hat{X}_0^{(k)}$.
Let $Y^{(k)}$ be the Markov chain generated by Algorithm \ref{alg:SGLD_metropolis} with constraint set $\hat{\mathcal{K}}_{r'}^{(k)}$.  Couple the Markov chains as in definition \ref{def:coupling}.

\noindent
Write
\be
(U_k^{\varepsilon \hat{H}_k})_{\omega/\lambda} := (U_k^{\varepsilon \hat{H}_k} + B(0,\omega/\lambda)) \cap \hat{\mathcal{K}}^{(k)}
\ee
as shorthand.
Then by Lemma 15 of \cite{hitting_times} and by Equation \eqref{eq:eta}, the Markov chain $\hat{X}^{(k)}$ is $\varepsilon'$-close to  $Y^{(k)}$ with $\varepsilon' \leq \frac{1}{4} \Phi_Y(\hat{\mathcal{K}}^{(k)}\backslash (U_k^{\varepsilon \hat{H}_k})_{\omega/\lambda})$ and
\be \label{eq:conductance2}
\Phi_Y(\hat{\mathcal{K}}^{(k)}\backslash (U_k^{\varepsilon \hat{H}_k})_{\omega/\lambda})) &\geq \frac{1}{1536}\sqrt{\eta_k/d}\mathcal{C}_{(\xi_k \tilde{f})}^{\hat{\mathcal{K}}^{(k)}}(\hat{\mathcal{K}}^{(k)}\backslash (U_k^{\varepsilon \hat{H}_k})_{\omega/\lambda}) \\
&\stackrel{{\scriptsize \textrm{Eq. }} \eqref{eq:conductance}}{\geq} \frac{1}{1536R}\sqrt{\eta_k/d} e^{-\frac{100d}{\varepsilon} \left[\frac{\alpha}{1-\alpha^{\dagger}}\left(3+\varepsilon \mathfrak{B} + \frac{\beta }{\mathfrak{D}}\right) + \frac{\beta}{\mathfrak{D}} \right]\log(R/\min(\frac{\varepsilon}{2\lambda} \mathfrak{D}, r'))}.
\ee

\noindent 
Recall that by Equation \eqref{eq:warmstart} of Proposition \ref{lemma:drift}, for every $A \subseteq \mathcal{K}'$, we have
\be
\nu_0(A) \leq e^{4R\tilde{\lambda} \xi_k + d\mathrm{log}(\frac{2R}{r})}  \mu^{\hat{\mathcal{K}}^{(k)}}_{\xi_k\tilde{f}}(A).
\ee
Therefore, since $\hat{X}^{(k)}$ is $\varepsilon'$-close to $Y^{(k)}$, by Lemma 11 of \cite{hitting_times}, with probability at least $1-\delta$ we have
\be \label{eq:hitting_time}
\tau_{\hat{X}^{(k)}}((U_k^{\varepsilon \hat{H}_k})_{\omega/\lambda}) &\leq \frac{4 \log(e^{2R\tilde{\lambda}\xi_k + d\mathrm{log}(\frac{2R}{r})}/\delta)}{\Phi_Y^2(\hat{\mathcal{K}}^{(k)}\backslash (U_k^{\varepsilon \hat{H}_k})_{\omega/\lambda})}\\
&\stackrel{{\scriptsize \textrm{Eq. }} \eqref{eq:conductance2}}{\leq} \frac{8R\tilde{\lambda}\xi_k + 4d\mathrm{log}(\frac{2R}{r}) + 4\log(\frac{1}{\delta})}{\left(\frac{1}{1536R}\sqrt{\eta_k/d} e^{-\frac{100d}{\varepsilon} \left[\frac{\alpha}{1-\alpha^{\dagger}}\left(3+\varepsilon \mathfrak{B} + \frac{\beta }{\mathfrak{D}}\right) + \frac{\beta}{\mathfrak{D}} \right]\log(R/\min(\frac{\varepsilon}{2\lambda} \mathfrak{D}, r'))}\right)^2}\\
& = \frac{8R\tilde{\lambda}\xi_k + 4d(\log(\bar{\xi}) + \log(\frac{2R\tilde{\lambda}}{\delta})) + 4\log(\frac{1}{\delta})}{\left(\frac{1}{1536R}\sqrt{\eta_k/d} e^{-\frac{100d}{\varepsilon} \left[\frac{\alpha}{1-\alpha^{\dagger}}\left(3+\varepsilon \mathfrak{B} + \frac{\beta }{\mathfrak{D}}\right) + \frac{\beta}{\mathfrak{D}} \right]\log(R/\min(\frac{\varepsilon}{2\lambda} \mathfrak{D}, r'))}\right)^2}\\
& \leq \frac{8R\tilde{\lambda}\xi_k + 4d(\log(1+ \bar{\xi}) + \log(\frac{2R\tilde{\lambda}}{\delta})) + 4\log(\frac{1}{\delta})}{\left(\frac{1}{1536R}\sqrt{\eta_k/d} e^{-\frac{100d}{\varepsilon} \left[\frac{\alpha}{1-\alpha^{\dagger}}\left(3+\varepsilon \mathfrak{B} + \frac{\beta }{\mathfrak{D}}\right) + \frac{\beta}{\mathfrak{D}} \right]\log(R/\min(\frac{\varepsilon}{2\lambda} \mathfrak{D}, r'))}\right)^2}\\
& \leq \frac{8R\tilde{\lambda}\xi_k + 4d(1+ \log(1+ \bar{\xi}) + \log(\frac{2R\tilde{\lambda}}{\delta})) + 4\log(\frac{1}{\delta})}{\left(\frac{1}{1536R}\sqrt{\bar{\eta}/d} e^{-\frac{100d}{\varepsilon} \left[\frac{\alpha}{1-\alpha^{\dagger}}\left(3+\varepsilon \mathfrak{B} + \frac{\beta }{\mathfrak{D}}\right) + \frac{\beta}{\mathfrak{D}} \right]\log(R/\min(\frac{\varepsilon}{2\lambda} \mathfrak{D}, r'))}\right)^2}\\
& \leq \frac{8R\tilde{\lambda}\xi_k + 4d(1+ \log(1+ \bar{\xi}) + \log(\frac{2R\tilde{\lambda}}{\delta})) + 4\log(\frac{1}{\delta})}{\left(\frac{1}{1536R}\sqrt{\bar{\eta}^{\dagger}/d} e^{-\frac{150d}{\varepsilon} \left[\frac{\alpha}{1-\alpha^{\dagger}}\left(3+\varepsilon \mathfrak{B}' + \frac{\beta }{\mathfrak{D}}\right) + \frac{\beta}{\mathfrak{D}} \right]\log(R/\min(\frac{\varepsilon}{2\lambda} \mathfrak{D}, r'))- \frac{75}{\varepsilon}\alpha\log(i_\mathrm{max}+1)}\right)^2}\\
& \leq \frac{8R\tilde{\lambda}\xi_k + 4d(1+ \log(1+ \bar{\xi}) + \log(\frac{2R\tilde{\lambda}}{\delta})) + 4\log(\frac{1}{\delta})}{\left(\frac{1}{1536R}\sqrt{\bar{\eta}^{\dagger}/d} e^{-\frac{150d}{\varepsilon} \left[\frac{\alpha}{1-\alpha^{\dagger}}\left(3+\varepsilon \mathfrak{B}' + \frac{\beta }{\mathfrak{D}}\right) + \frac{\beta}{\mathfrak{D}} \right]\log(R/\min(\frac{\varepsilon}{2\lambda} \mathfrak{D}, r'))}\right)^2}  \times  (i_\mathrm{max}+1)^{\frac{150}{\varepsilon}\alpha}\\
&\leq  i_{\mathrm{max}},\\
\ee
where the first equality is true since $r = \frac{\delta}{\bar{\xi} \tilde{\lambda}}$, the fourth inequality is true by the definition of $\bar{\eta}$,  the fifth inequality is true by the definition of $\bar{\eta}^{\dagger}$, and the last inequality is true by our choice of $i_{\mathrm{max}}$.

But by Equation \eqref{eq:delta3}, $X_i^{(k)} = \hat{X}_i^{(k)}$ with probability at least $1-5\delta$.
Therefore, since Equation \eqref{eq:hitting_time} holds with probability at least $1-\delta$, we have that
\be \label{eq:hitting_time2}
\tau_{X^{(k)}}((U_k^{\varepsilon \hat{H}_k})_{\omega/\lambda})  \leq i_{\mathrm{max}}.
\ee
with probability at least $1-6\delta$.

\noindent
Therefore, by Equation \eqref{eq:hitting_time2}, with probability at least $1-6\delta$ for some $0\leq i_k^\circ \leq i_{\mathrm{max}}$ we have $X_{i_k^\circ}^{(k)} \in (U_k^{\varepsilon \hat{H}_k})_{\omega/\tilde{\lambda}}$ and hence that
\be
F(X_{i_k^\circ}^{(k)}) \leq \varepsilon \hat{H}_k + \tilde{\lambda} \times \frac{\omega}{\tilde{\lambda}} =  \varepsilon \hat{H}_k + \varepsilon \mathfrak{D} \leq 2\varepsilon \hat{H}_k
\ee
and therefore, since $0\leq \alpha <1$,
\be
\frac{1}{5}\tilde{f}(x_{k+1}) \stackrel{{\scriptsize \textrm{Lemma }} \ref{lemma:compare}}{\leq} \hat{F}(x_{k+1}) = \min_{0\leq i \leq i_{\mathrm{max}}} \hat{F}(X_{i}^{(k)}) \leq \hat{F}(X_{i_k^\circ}^{(k)}) \leq 2F(X_{i_k^\circ}^{(k)}) + \beta \leq 4\varepsilon \hat{H}_k + \beta \leq 5\varepsilon \hat{H}_k.
\ee
Hence, for every $0\leq k \leq k_{\mathrm{max}}$ we have
\be \label{eq:contraction2}
\tilde{f}(x_{k+1})  = H_{k+1} \leq 25\varepsilon\hat{H}_k = 25\varepsilon \max(H_k, \mathfrak{D})
\ee
with probability at least $1-6\delta$.

\noindent
Therefore, by induction on Equation \eqref{eq:contraction2}, for every $0\leq k \leq k_{\mathrm{max}}$, we have
\be  \label{eq:contraction}
H_{k+1} \leq 25\varepsilon \times \max\left((25\varepsilon)^k H_0, \mathfrak{D}\right)
\ee
with probability at least $1-6\delta(k+1)$.

\noindent
By Lemma \ref{lemma:compare}, we have $k_{\mathrm{max}}= \lceil \frac{\log(5 J_0/\mathfrak{D})}{\log(\frac{1}{25 \varepsilon})}\rceil+1 \geq \lceil \frac{\log(H_0/\mathfrak{D})}{\log(\frac{1}{25 \varepsilon})}\rceil+1$. Then, with probability at least $1-6\delta (k_\mathrm{max}+1)$,
\be \label{eq:tilda}
\tilde{f}(x_{k_\mathrm{max}}) -F(x^\star)&=\tilde{f}(x_{k_\mathrm{max}})\\
&= H_{k_{\mathrm{max}}}\\
&\stackrel{{\scriptsize \textrm{Eq. }} \eqref{eq:contraction}}{\leq} 25\varepsilon \times \max \bigg((25\varepsilon)^{k_{\mathrm{max}}-1} H_0, \mathfrak{D}\bigg)\\
&\leq 25\varepsilon \times \mathfrak{D}\\
&\leq \mathfrak{D},
\ee
since $0\leq \varepsilon < \frac{1}{25}$ implies that $0\leq 25\varepsilon <1$.

\noindent
Hence,
\be
F(x_{k_\mathrm{max}}) -F(x^\star)&=F(x_{k_\mathrm{max}})\\
&\leq \frac{1}{1-\alpha}(\tilde{f}(x_{k_\mathrm{max}})+\beta)\\
&\leq \frac{1}{1-\alpha}(\mathfrak{D} + \beta),
\ee
 where the first equality holds since $F(x^\star)=0$.
\end{proof}

\subsection{The non-smooth case}
\label{sec:smoothing}
In this section we bound the gradient, supremum, and smoothness of the smoothed function $f_\sigma$ obtained from $F$ (Propositions \ref{lemma:smooth_gradient} and \ref{max_value} and Lemma \ref{lemma:Hessian}), where $f_\sigma$ is defined in Equation \eqref{eq:smoother}. We also bound the noise $|F(x)-f_\sigma(x)|$ of $f_\sigma$ (Lemma \ref{lemma:noise_smooth}).  We use these bounds in Section \ref{sec:proof_of_main_result} to Prove our main result (Theorem \ref{thm:main}).

\begin{proposition} \label{lemma:smooth_gradient} \textbf{(Gradient bound for smoothed oracle)}

\noindent For every $x\in \mathcal{K}$ we have
\be
\|\nabla \tilde{f}_\sigma(x)\| \leq  \frac{\sqrt{2d}}{\sigma} (2 \lambda R(1+ 2\alpha) + 2\beta).
\ee
\end{proposition}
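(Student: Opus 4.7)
The plan is to use the standard Stein/score identity for Gaussian smoothing, which gives
\[
\nabla \tilde{f}_\sigma(x) \;=\; \mathbb{E}_Z\!\left[\tfrac{Z}{\sigma^2}\,\hat{F}(x+Z)\right] \;=\; \mathbb{E}_Z\!\left[\tfrac{Z}{\sigma^2}\bigl(\hat{F}(x+Z)-\hat{F}(x)\bigr)\right],
\]
where the second equality uses $\mathbb{E}[Z]=0$. Either form works, but the symmetric form on the right is convenient because the bound we want involves $\hat{F}$ evaluated at the two points. This step is essentially integration-by-parts against the Gaussian density $\propto e^{-\|z\|^2/(2\sigma^2)}$, whose score function is $-z/\sigma^2$; I would state this as a one-line computation.

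Next, I would apply the triangle inequality and Cauchy--Schwarz to get
\[
\|\nabla \tilde{f}_\sigma(x)\| \;\leq\; \mathbb{E}\!\left[\tfrac{\|Z\|}{\sigma^2}\bigl(|\hat{F}(x+Z)|+|\hat{F}(x)|\bigr)\right] \;\leq\; \tfrac{\sqrt{\mathbb{E}\|Z\|^2}}{\sigma^2}\cdot 2\sup_{y\in \mathbb{R}^d}|\hat{F}(y)|.
\]
Since $Z\sim N(0,\sigma I_d)$ gives $\mathbb{E}\|Z\|^2=d\sigma$ (or, using the paper's convention for the covariance, $d\sigma^2$), this collapses to a prefactor of order $\sqrt{d}/\sigma$, and the factor of $2$ from the triangle inequality accounts for the $\sqrt{2d}$ in the claim when combined with the sup bound below.

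The core quantitative step is a uniform bound on $|\hat F|$. Because $\hat F$ has been extended to be $0$ outside $\mathcal{K}_{\mathsf{r}}$, it suffices to bound $|\hat F|$ on $\mathcal{K}_{\mathsf{r}}$. On that set the noise model gives $|\hat F(y)|\leq (1+\alpha)F(y)+\beta$, and since $F(x^\star)=0$ and $\|\nabla F\|\leq \lambda$ on $\mathcal{K}_{\mathsf{r}}$, we have $F(y)\leq \lambda\|y-x^\star\|\leq 2\lambda R$ (using $\mathcal{K}\subseteq B(c,R)$, with the thickening $\mathsf{r}$ absorbed into the constants). Hence
\[
\sup_{y}|\hat{F}(y)| \;\leq\; 2\lambda R(1+\alpha)+\beta,
\]
and multiplying by the $2$ from the triangle inequality yields $2[2\lambda R(1+\alpha)+\beta] \leq 2\lambda R(1+2\alpha) + 2\beta$ after a trivial regrouping. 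Combining with the $\sqrt{d}/\sigma$ prefactor (and noting $\mathbb{E}\|Z\|\leq \sqrt{\mathbb{E}\|Z\|^2}$ by Jensen) produces the stated bound.

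The only mildly delicate point — and what I expect to be the main bookkeeping obstacle — is making sure the sup bound on $|\hat{F}|$ is valid globally (including at points where $Z$ pushes $x+Z$ outside $\mathcal{K}_{\mathsf{r}}$, where $\hat F\equiv 0$, and at points just inside $\mathcal{K}_{\mathsf{r}}\setminus \mathcal{K}$, where $F$ could be slightly larger than $2\lambda R$ due to the thickening). Both cases are handled by the Lipschitz extension argument above, but one must verify that the extra slack from the radius $\mathsf{r}$ is absorbed into the constants of the final bound. Everything else is a direct computation from Stein's identity and Cauchy--Schwarz.
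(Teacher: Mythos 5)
Your derivation has the right skeleton (Stein/score identity, triangle inequality, bound on the Gaussian norm, then a uniform bound on $\hat F$), and it matches the paper's strategy up to the point where you bound the difference $|\hat{F}(x+Z)-\hat{F}(x)|$. That is where there is a genuine gap. You split $|\hat{F}(x+Z)-\hat{F}(x)|\leq |\hat{F}(x+Z)|+|\hat{F}(x)|\leq 2\sup|\hat F|\leq 2\bigl(2\lambda R(1+\alpha)+\beta\bigr)$, and then claim this collapses to the target coefficient ``after a trivial regrouping.'' The regrouping is false: $2\bigl(2\lambda R(1+\alpha)+\beta\bigr)=4\lambda R+4\alpha\lambda R+2\beta$, whereas $2\lambda R(1+2\alpha)+2\beta=2\lambda R+4\alpha\lambda R+2\beta$, so the claimed inequality would require $4\lambda R\leq 2\lambda R$. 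Nor is this rescued by the extra slack in the prefactor: Jensen gives you $\mathbb{E}\|Z\|/\sigma^{2}\leq\sqrt{d}/\sigma$, and the target has $\sqrt{2d}/\sigma$, so you have room for a factor $\sqrt{2}\approx 1.41$ but you are over by a factor $2$ in the $\lambda R$ term (check $\alpha=\beta=0$: your bound gives $4\sqrt{d}\lambda R/\sigma$, the target is $2\sqrt{2d}\lambda R/\sigma\approx 2.83\sqrt{d}\lambda R/\sigma$).

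The fix, and what the paper actually does, is to bound the \emph{oscillation} $\sup_{y_1,y_2}|\hat F(y_2)-\hat F(y_1)|$ rather than $2\sup|\hat F|$. Because $F\geq 0$ and $F(x^\star)=0$, the noisy oracle satisfies $\hat F(y)\geq (1-\alpha)F(y)-\beta\geq -\beta$ and $\hat F(y)\leq (1+\alpha)\cdot 2\lambda R+\beta$ (with the thickening radius absorbed as in the paper), so the oscillation is at most $2\lambda R(1+\alpha)+2\beta\leq 2\lambda R(1+2\alpha)+2\beta$. The loss you incur by going through $2\sup|\hat F|$ is exactly the ``extra'' $\sup\hat F$ on the lower side, which the near-nonnegativity of $\hat F$ makes unnecessary. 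With the oscillation bound and either your Jensen step $\mathbb{E}\|Z\|\leq\sqrt{d}\,\sigma$ or the paper's $\chi$-moment bound $\mathbb{E}[\|Z\|/\sigma]=\sqrt{2}\,\Gamma(\tfrac{d+1}{2})/\Gamma(\tfrac d2)\leq\sqrt{2d}$, the stated inequality follows. (Your secondary remark about the thickening $\mathsf{r}$ and the zero-extension of $\hat F$ outside $\mathcal{K}_{\mathsf{r}}$ is a legitimate bookkeeping concern --- the paper's written proof takes the max over $y_1,y_2\in\mathcal K$ when it should be over all of $\mathbb{R}^d$ --- but that does not affect the leading constants here.)
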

\begin{proof}
\be
\|\nabla \tilde{f}_\sigma(x)\| &\leq \mathbb{E}_Z\left[\frac{1}{\sigma^2} \|Z\|  \left|\hat{F}(x+Z)- \hat{F}(x)\right|\right]\\
&\leq \mathbb{E}_Z\left[\frac{1}{\sigma^2} \|Z\|  \max_{y_1,y_2\in \mathcal{K}} |\hat{F}(y_2) - \hat{F}(y_1)| \right] \\
&\leq \frac{1}{\sigma}\max_{y_1,y_2\in \mathcal{K}} |\hat{F}(y_2) - \hat{F}(y_1)| \mathbb{E}_Z\left[\frac{1}{\sigma} \|Z\| \right]\\
&\leq \frac{1}{\sigma}\max_{y_1,y_2\in \mathcal{K}} |\hat{F}(y_2) - \hat{F}(y_1)| \mathbb{E}_Z\left[\frac{1}{\sigma} \|Z\| \right]\\
&= \frac{1}{\sigma}\max_{y_1,y_2\in \mathcal{K}} |\hat{F}(y_2) - \hat{F}(y_1)| \sqrt{2} \frac{\Gamma(\frac{d+1}{2})}{\Gamma(\frac{d}{2})}\\
&\leq \frac{\sqrt{2d}}{\sigma}\max_{y_1,y_2\in \mathcal{K}} |\hat{F}(y_2) - \hat{F}(y_1)|\\
&\leq \frac{\sqrt{2d}}{\sigma} (2 \lambda R(1+ 2\alpha) + 2\beta),
\ee
where the equality is true since $\frac{1}{\sigma} \|Z\|$ has $\chi$ distribution with $d$ degrees of freedom, and the second-to-last inequality is true since $\frac{\Gamma(\frac{d+1}{2})}{\Gamma(\frac{d}{2})} \leq \sqrt{d}$.  The last ineqaulty is true because $F$ is $\lambda$-Lipschitz, and because of our assumption on the noise (Equation \eqref{eq:assumption}).
\end{proof}

\begin{proposition}\label{max_value}  (maximum value of non-smooth noisy oracle)
For every $x \in \mathcal{K}_{\mathsf{r}}$, we have
\be
\hat{F}(x) \leq (1+\alpha) 2\lambda (R+ \mathsf{r}) + \beta.
\ee
\end{proposition}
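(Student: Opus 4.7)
The plan is to directly combine three ingredients that are already in hand: the noise model for $\hat{F}$, the Lipschitz bound on $F$, and the diameter bound on $\mathcal{K}_{\mathsf{r}}$. Since the statement is essentially a triangle-inequality computation, the proof should be a short chain of inequalities with no real obstacles.

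First, I would invoke the noise model (Equation \eqref{eq:assumption}), which gives
\[
\hat{F}(x) = F(x)(1+\psi(x)) + \varphi(x) \leq F(x)(1+\alpha) + \beta,
\]
using $|\psi(x)| \leq \alpha$ and $|\varphi(x)| \leq \beta$, together with $F(x) \geq 0$ (which follows from the assumption $F(x^\star)=0$ and convexity, since $x^\star$ is a minimizer). So it suffices to bound $F(x)$ uniformly on $\mathcal{K}_{\mathsf{r}}$.

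Next I would bound $F(x)$ using the Lipschitz assumption $\|\nabla F\| \leq \lambda$ on $\mathcal{K}_{\mathsf{r}}$. Since $F(x^\star)=0$,
\[
F(x) = F(x) - F(x^\star) \leq \lambda \|x - x^\star\|.
\]
Because $\mathcal{K} \subseteq B(c,R)$, we have $\mathcal{K}_{\mathsf{r}} \subseteq B(c, R+\mathsf{r})$, so any two points of $\mathcal{K}_{\mathsf{r}}$ are within distance $2(R+\mathsf{r})$. In particular $\|x-x^\star\| \leq 2(R+\mathsf{r})$, giving $F(x) \leq 2\lambda(R+\mathsf{r})$.

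Plugging back in,
\[
\hat{F}(x) \leq (1+\alpha) F(x) + \beta \leq (1+\alpha)\cdot 2\lambda(R+\mathsf{r}) + \beta,
\]
which is the desired bound. The only thing to watch out for is that the Lipschitz assumption is stated on $\mathcal{K}_{\mathsf{r}}$ (not just $\mathcal{K}$), so the chord-length argument is valid everywhere we need it; and we should briefly note $x^\star \in \mathcal{K} \subseteq \mathcal{K}_{\mathsf{r}}$ so $F(x^\star)=0$ applies. There is no real obstacle here.
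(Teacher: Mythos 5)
Your proof takes the same route as the paper's: use the noise model to reduce to bounding $F$, then use the Lipschitz bound and the diameter of $\mathcal{K}_{\mathsf{r}}$. The one wrinkle is the step $\hat F(x) = F(x)(1+\psi(x)) + \varphi(x) \le F(x)(1+\alpha)+\beta$, which requires $F(x)\ge 0$. You justify this via $F(x^\star)=0$ and the fact that $x^\star$ is a minimizer, but $x^\star$ is a minimizer of $F$ over $\mathcal{K}$, not over $\mathcal{K}_{\mathsf{r}}$; since $x^\star$ could sit on the boundary of $\mathcal{K}$, $F$ may well dip below zero on $\mathcal{K}_{\mathsf{r}}\setminus\mathcal{K}$, and convexity on the larger set does not rescue the claim. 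The paper sidesteps this by writing $\hat F(x)\le (1+\alpha)|F(x)|+\beta$, and the Lipschitz estimate gives $|F(x)| = |F(x)-F(x^\star)|\le \lambda\|x-x^\star\|\le 2\lambda(R+\mathsf{r})$ directly, with no sign assumption needed. Replacing $F(x)$ by $|F(x)|$ in your first inequality (and noting that $F(x)(1+\psi(x))\le (1+\alpha)|F(x)|$ regardless of the sign of $F(x)$) closes the gap; otherwise the argument is correct and matches the paper's.
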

\begin{proof}
Since $F$ is $\lambda$-Lipschitz,
\be \label{eq:a1}
F(x) \leq 2\lambda (R+\mathsf{r}) \quad \quad \forall x \in \mathcal{K}_{\mathsf{r}}.
\ee

\noindent
Thus,
\be
\hat{F}(x) \leq (1+\alpha) |F(x)| + \beta \leq (1+ \alpha) 2\lambda (R+\mathsf{r}) + \beta.
\ee
\end{proof}

\noindent
We recall the following Lemma from \cite{hitting_times}:
\begin{lemma} \label{lemma:Hessian} \textbf{(Lemma 17 in \cite{hitting_times})}
Suppose that $\hat{M}>0$ is a number such that $0\leq \hat{F}(x) \leq \hat{M}$ for all $x\in \mathcal{K}_{\mathsf{r}}$  then
\begin{enumerate}
\item \be
\mathbb{E}_Z[g_Z(x)] = \nabla \tilde{f}_\sigma(x) \quad \quad \forall x \in \mathcal{K}.
\ee
\item For every $u \in \mathbb{R}^d$, \be\mathbb{E}_Z[e^{\langle u, g_Z(x)\rangle(2\hat{M}/\sigma)^2}] \leq e^{\frac{4\hat{M}^2}{\sigma^2}\|u\|^2}.
\ee
\item \be \|\nabla^2 \tilde{f}_\sigma(x) \|_{\mathrm{op}} \leq \frac{2\hat{M}}{\sigma^2}.
\ee
\end{enumerate}

\end{lemma}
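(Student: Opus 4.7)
The plan is to derive all three parts from Gaussian integration by parts (Stein's identity) applied to the convolution representation $\tilde{f}_\sigma(x) = \int \hat{F}(y)\, p(y-x)\, dy$, where $p$ denotes the density of $Z$. For the formulas in the statement to be internally consistent (the $\sigma^{-2}$ in $g_Z$ and the $\sigma^{-2}$ in the Hessian bound), I would read $\sigma$ as the standard deviation, i.e.\ $Z \sim N(0, \sigma^2 I_d)$, so that $\nabla_x p(y-x) = \frac{y-x}{\sigma^2} p(y-x)$.

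For part~1, I would differentiate under the integral to obtain $\nabla \tilde{f}_\sigma(x) = \mathbb{E}_Z\!\left[\tfrac{Z}{\sigma^2}\hat{F}(x+Z)\right]$, and then subtract the identically-zero quantity $\tfrac{\hat{F}(x)}{\sigma^2}\mathbb{E}_Z[Z]$ to recover $\mathbb{E}_Z[g_Z(x)]$. Swapping gradient and integral is justified by boundedness of $\hat{F}$ (equal to $0$ outside $\mathcal{K}_{\mathsf{r}}$ and bounded by $\hat{M}$ inside) combined with dominated convergence.

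For part~3, a second differentiation yields $\nabla_x^2 p(y-x) = \tfrac{(y-x)(y-x)^T - \sigma^2 I_d}{\sigma^4}\, p(y-x)$, and hence $\nabla^2 \tilde{f}_\sigma(x) = \mathbb{E}_Z\!\left[\hat{F}(x+Z)\cdot\tfrac{ZZ^T - \sigma^2 I_d}{\sigma^4}\right]$. Since $\mathbb{E}_Z[ZZ^T - \sigma^2 I_d] = 0$, I may replace $\hat{F}(x+Z)$ by the increment $\hat{F}(x+Z) - \hat{F}(x)$, whose absolute value is at most $\hat{M}$. Fixing a unit vector $v$ and writing $W := \langle v, Z\rangle \sim N(0,\sigma^2)$, the crude bound $\mathbb{E}[|W^2 - \sigma^2|] \le \mathbb{E}[W^2] + \sigma^2 = 2\sigma^2$ produces $|v^T\nabla^2 \tilde{f}_\sigma(x)\, v| \le \hat{M}\cdot 2\sigma^2/\sigma^4 = 2\hat{M}/\sigma^2$, matching the claim.

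For part~2, I would exploit the pointwise estimate $|\langle u, g_Z(x)\rangle| \le \tfrac{\hat{M}}{\sigma^2}|\langle u, Z\rangle|$, which follows immediately from $|\hat{F}(x+Z) - \hat{F}(x)| \le \hat{M}$. Since $\langle u, Z\rangle \sim N(0, \sigma^2\|u\|^2)$, I get $\langle u, g_Z(x)\rangle^2 \le \tfrac{\hat{M}^2\|u\|^2}{\sigma^2}\, W^2$ with $W \sim N(0,1)$, and the Gaussian identity $\mathbb{E}[e^{tW^2}] = (1-2t)^{-1/2}$ together with $(1-x)^{-1/2} \le e^x$ on $[0,1/2]$ yields a sub-Gaussian-type MGF bound in the range where the MGF converges. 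The main obstacle, and the reason part~2 is the only nontrivial piece, is that the factor $(2\hat{M}/\sigma)^2$ appearing in the exponent in the lemma must be read as specifying the admissible range of $u$ (so that the Gaussian MGF stays finite) rather than as a scalar multiplying $\langle u, g_Z\rangle$ itself; once that normalization is fixed, the bound aligns with the downstream parameters $b_{\max}$ and $G$ in Assumption~\ref{assumption:A}. Parts~1 and 3 are routine Stein-identity computations whose only subtlety is the justification of differentiation under the integral.
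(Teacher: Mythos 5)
The paper does not prove this statement; it imports it verbatim as Lemma 17 of \cite{hitting_times}, so there is no in-paper proof to compare against. Your argument is correct in substance and follows the standard route that the cited source uses: differentiate the Gaussian convolution $\tilde{f}_\sigma = \hat{F} * p$ under the integral (justified because $\hat{F}$ is bounded and supported on $\mathcal{K}_{\mathsf{r}}$), exploit $\mathbb{E}[Z]=0$ and $\mathbb{E}[ZZ^T - \sigma^2 I_d]=0$ to recenter $\hat{F}(x+Z)$ by $\hat{F}(x)$, and then bound the resulting Gaussian moments. Parts 1 and 3 are clean; the crude estimate $\mathbb{E}\lvert W^2-\sigma^2\rvert \le 2\sigma^2$ in part 3 suffices and in fact leaves slack.

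For part 2, your diagnosis of the formula is right but the proposed reading is worded in a way that does not parse. The displayed exponent $\langle u, g_Z(x)\rangle(2\hat{M}/\sigma)^2$ is \emph{linear} in $g_Z$, and $(2\hat{M}/\sigma)^2$ is simply a scalar coefficient there, so it cannot be reinterpreted as ``specifying the admissible range of $u$.'' The sensible reading, consistent with Assumption~\ref{assumption:A}(3) and the downstream choice $G = 2\hat{M}/\sigma$, is
\begin{equation*}
\mathbb{E}_Z\!\left[e^{\langle u, g_Z(x)\rangle^2}\right] \;\le\; e^{(2\hat{M}/\sigma)^2\|u\|^2}
\end{equation*}
for $\|u\|$ restricted so that $\hat{M}^2\|u\|^2/\sigma^2 < 1/2$. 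Once that correction is made, your chain $\lvert\langle u, g_Z\rangle\rvert \le \tfrac{\hat{M}}{\sigma}\|u\|\,\lvert W\rvert$ with $W\sim N(0,1)$, the identity $\mathbb{E}[e^{tW^2}]=(1-2t)^{-1/2}$, and the elementary inequality $(1-2t)^{-1/2} \le e^{4t}$ on $t\in[0,\tfrac14]$ yield exactly the claimed bound. The one thing you should state explicitly rather than leave implicit is that the admissible range is $\|u\|\le \sigma/(2\hat{M})$, not $\|u\|\le 1$; the paper's later choice $b_{\max}=1$ appears to be an inherited slip and is not something your proof can or should reproduce.
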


\noindent
We show that the smoothed gradient is a good approximation of $F$ for sufficiently small $\sigma$:
\begin{lemma}\label{lemma:noise_smooth} \textbf{(Noise of smoothed oracle)}
Let $A \subseteq A_t \subseteq \mathcal{K}_{\mathsf{r}}$ for some $A\subseteq \mathcal{K}_{\mathsf{r}}$ and some  $t>0$. Let $H' = \sup_{y \in A} F(y)$.
Then 
\be
|\tilde{f}_\sigma(x) - F(x)| \leq \lambda  \sigma(1+\alpha) \sqrt{d} + H'\times  e^{-\frac{t^2/\sigma^2-d}{8}} + \alpha H' + \beta
\ee
for every $x \in A$.
\end{lemma}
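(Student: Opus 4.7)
The goal is to bound $\mathbb{E}_Z[\hat{F}(x+Z)] - F(x)$ in absolute value, so the natural first move is the triangle inequality
\[
|\tilde{f}_\sigma(x) - F(x)| \;\leq\; \mathbb{E}_Z\bigl[|\hat{F}(x+Z) - F(x)|\bigr],
\]
and then to split the expectation according to whether the Gaussian perturbation is small ($\|Z\|\leq t$) or large ($\|Z\|>t$). The first regime will produce the ``bulk'' terms $(1+\alpha)\lambda\sigma\sqrt{d} + \alpha H' + \beta$ by combining the noise model with Lipschitzness, while the second regime will produce the Gaussian-tail term $H'\,e^{-(t^2/\sigma^2-d)/8}$.

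\paragraph{Bulk regime $\|Z\|\leq t$.} When $\|Z\|\leq t$ and $x\in A$, the assumption $A_t \subseteq \mathcal{K}_{\mathsf{r}}$ guarantees $x+Z \in \mathcal{K}_{\mathsf{r}}$, so the noise decomposition from Equation \eqref{eq:assumption} applies and $F$ is $\lambda$-Lipschitz on the segment from $x$ to $x+Z$ (using convexity of $\mathcal{K}_{\mathsf{r}}$). Adding and subtracting $F(x+Z)$ and applying the triangle inequality gives
\[
|\hat{F}(x+Z)-F(x)| \;\leq\; \alpha\, F(x+Z) + \beta + \lambda\|Z\| \;\leq\; \alpha H' + (1+\alpha)\lambda\|Z\| + \beta,
\]
where the last step uses $F(x+Z) \leq F(x) + \lambda\|Z\| \leq H' + \lambda\|Z\|$. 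Taking expectation and using $\mathbb{E}\|Z\|\leq \sigma\sqrt{d}$ (since $\|Z\|/\sigma$ is a $\chi_d$ random variable with mean at most $\sqrt{d}$) yields the terms $\alpha H' + (1+\alpha)\lambda\sigma\sqrt{d} + \beta$.

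\paragraph{Tail regime $\|Z\|> t$.} Here $x+Z$ may fall outside $A_t$, or even outside $\mathcal{K}_{\mathsf{r}}$ (in which case $\hat{F}(x+Z)=0$ by definition). In either sub-case one bounds $|\hat{F}(x+Z) - F(x)|$ by a constant that one can control in terms of $H'$, and then multiplies by the Gaussian concentration estimate
\[
\mathbb{P}(\|Z\|> t) \;\leq\; e^{-(t^2/\sigma^2 - d)/8},
\]
which is a standard $\chi^2$ tail bound (e.g.\ via Hanson--Wright or Laurent--Massart). Combining the bulk and tail contributions and collecting terms produces exactly the claimed inequality.

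\paragraph{Main obstacle.} The delicate step is the tail regime: to recover a clean bound proportional to $H'$ (rather than to the global maximum $(1+\alpha)2\lambda(R+\mathsf{r})+\beta$ of Proposition \ref{max_value}), one has to argue that the ``large $\|Z\|$'' event contributes at most $H'$ per sample in $|\hat{F}(x+Z)-F(x)|$ on average, using both that $\hat{F}\equiv 0$ outside $\mathcal{K}_{\mathsf{r}}$ and that the Lipschitz growth of $F$ along $Z$ is already absorbed into the $(1+\alpha)\lambda\sigma\sqrt{d}$ term from the bulk regime. The rest of the proof is a routine application of Gaussian tail estimates.
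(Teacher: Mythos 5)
Your overall strategy matches the paper's: bound $|\tilde{f}_\sigma(x)-F(x)|$ by $\mathbb{E}_Z[|\hat{F}(x+Z)-F(x)|]$, control the noise with the multiplicative/additive model, control $F(x+Z)-F(x)$ with Lipschitzness, and use a Gaussian tail bound with $A_t\subseteq\mathcal{K}_{\mathsf{r}}$ to produce the term $H'\,e^{-(t^2/\sigma^2-d)/8}$. However, the ``Tail regime'' paragraph as written has a genuine error. When $\|Z\|>t$ but $x+Z\in\mathcal{K}_{\mathsf{r}}$ (a case you correctly list as a sub-case), the only bound you have is $|\hat{F}(x+Z)-F(x)|\leq \alpha H' + (1+\alpha)\lambda\|Z\| + \beta$, which is \emph{not} bounded by a constant times $H'$: the $\lambda\|Z\|$ piece is unbounded along the tail. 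So the tail contribution cannot literally be ``$(\text{const in }H')\times\mathbb{P}(\|Z\|>t)$.'' Your ``Main obstacle'' paragraph senses this and invokes the absorption of the Lipschitz growth into $(1+\alpha)\lambda\sigma\sqrt{d}$, but that requires abandoning the clean bulk/tail split: in your framing the bulk term only integrates $\lambda\|Z\|$ over $\|Z\|\leq t$, so the tail's $\lambda\|Z\|$ contribution has not been counted anywhere. The correct fix is to use the pointwise bound
\[
|\hat{F}(x+Z)-F(x)| \;\leq\; \alpha H' + (1+\alpha)\lambda\|Z\| + \beta \;+\; H'\,\mathbbm{1}\{x+Z\notin\mathcal{K}_{\mathsf{r}}\},
\]
valid for all $Z$, and then take expectation; the first three pieces give $\alpha H' + (1+\alpha)\lambda\sigma\sqrt{d}+\beta$ using the $\chi_d$ mean bound, and the indicator piece gives $H'\,\mathbb{P}(\|Z\|>t)\leq H'\,e^{-(t^2/\sigma^2-d)/8}$ via $A_t\subseteq\mathcal{K}_{\mathsf{r}}$ and Hanson--Wright. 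The paper avoids this pitfall by a slightly different decomposition: it first writes $\hat{F}(x+Z)-F(x)=[F(x+Z)-F(x)]+\mathsf{N}(x+Z)$, bounds $\mathbb{E}_Z|F(x+Z)-F(x)|\leq \lambda\mathbb{E}\|Z\|+H'\,\mathbb{P}(\|Z\|\geq t)$ (exploiting $F\equiv 0$ off $\mathcal{K}_{\mathsf{r}}$ and $\{x+Z\notin\mathcal{K}_{\mathsf{r}}\}\subseteq\{\|Z\|\geq t\}$), and separately bounds the noise term by $\alpha(H'+\lambda\mathbb{E}\|Z\|)+\beta$. That split never conditions on $\|Z\|$, so the $\lambda\|Z\|$ bookkeeping is automatic. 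Both routes reach the stated inequality, but yours needs the $\|Z\|$-conditioning to be dropped (or the bulk bound explicitly acknowledged as an overcount) for the algebra to close.
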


\begin{proof}
Define $\mathsf{N}(x) := \hat{F}(x) - F(x)$.  For any function $h: \mathbb{R}^d \rightarrow \mathbb{R}$, define
\be
\tilde{h}_\sigma(x) := \mathbb{E}_Z[h(x+Z)],
\ee where $Z \sim \mathcal{N}(0, \sigma^2 I_d)$.
Then for every $x \in A$ we have,
\be
|\tilde{f}_\sigma(x)-F(x)| &= |\tilde{F}_\sigma(x) + \tilde{\mathsf{N}}_\sigma(x) - F(x)|\\
&\leq |\tilde{F}_\sigma(x) - F(x)| + |\tilde{\mathsf{N}}_\sigma(x)|\\
&= \mathbb{E}_Z[|F(x+Z) - F(x)|] + |\mathbb{E}_Z[\mathsf{N}(x+Z)]|\\
&\leq \mathbb{E}_Z[|F(x+Z) - F(x)|] +  \mathbb{E}_Z[\alpha(H'+\lambda \|Z\|) + \beta] \\
&\leq \mathbb{E}_Z[\lambda \|Z\|] + H'\times \mathbb{P}(\|Z\|\geq t)  +  \mathbb{E}_Z[\alpha(H'+\lambda \|Z\|) + \beta] \\
&=\lambda(1+\alpha)\mathbb{E}_Z[\|Z\|] + H'\times \mathbb{P}(\|Z\|\geq t)  + \alpha H' + \beta\\
&=\lambda  \sigma(1+\alpha)\mathbb{E}_Z[\frac{1}{\sigma}\|Z\|] + H'\times  \mathbb{P}(\frac{1}{\sigma}\|Z\|\geq \frac{t}{\sigma})  + \alpha H' + \beta\\
&\leq\lambda  \sigma(1+\alpha)\sqrt{d} + H'\times  \mathbb{P}(\frac{1}{\sigma}\|Z\|\geq \frac{t}{\sigma})  + \alpha H' + \beta\\
&\leq \lambda  \sigma(1+\alpha) \sqrt{d} + H'\times  e^{-\frac{t^2/\sigma^2-d}{8}} + \alpha H' + \beta,\\
\ee
where the second inequality holds because $F$ is $\lambda$-Lipschitz on $\mathcal{K}_{\mathsf{r}}$ and also since $F$ is defined to be zero outside $\mathcal{K}_{\mathsf{r}}$ 
with $x\in \mathcal{K} \subseteq\mathcal{K}_{\mathsf{r}}$. The third inequality holds by our assumption on the noise (Equation \eqref{eq:assumption}), and since $F$ is defined to be zero outside $\mathcal{K}_{\mathsf{r}}$ . 
 The fourth inequality holds because $\frac{1}{\sigma}\|z\|$ is $\chi$-distributed with $d$ degrees of freedom. 
  The last inequality holds by the Hanson-Wright inequality (see for instance \cite{hanson1971bound}, \cite{rudelson2013hanson}).
\end{proof}

\subsection{Rounding the domain of the Markov Chain}
We now show that our constraint set $\hat{\mathcal{K}}$ is sufficiently ``rounded".  This roundness property is used to show that the Markov chain does not get stuck for a long time in corners of the constraint set.
\begin{lemma} \label{lemma:contraint_round} \textbf{(Roundness of constraint set)}
Let $\zeta_{\mathrm{max}} =  (\frac{r'}{10\sqrt{2}(d+20)})^2$. Let $\hat{\mathcal{K}} \subseteq \mathcal{K}'$ be a convex set.  Then for any $\zeta \leq \zeta_{\mathrm{max}}$ and any $x \in \hat{\mathcal{K}}_{r'}$ the random variable $W \sim \mathcal{N}(0, I_d)$ satisfies
\be
\mathbb{P}(\sqrt{2\zeta}W + x \in \hat{\mathcal{K}}_{r'}) \geq \frac{1}{3}.
\ee
\end{lemma}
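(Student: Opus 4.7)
The plan is to exhibit a single point $u\in\hat{\mathcal{K}}$ whose $r'$-ball lies inside $\hat{\mathcal{K}}_{r'}$ and lower-bound the probability that the Gaussian perturbation stays in that ball. Without loss of generality assume $\hat{\mathcal{K}}$ is closed, let $u$ be the nearest point of $\hat{\mathcal{K}}$ to $x$, and set $v:=x-u$; then $\|v\|\leq r'$ and $\sqrt{2\zeta}W+x\in\hat{\mathcal{K}}_{r'}$ holds whenever $\|v+\sqrt{2\zeta}W\|\leq r'$. By rotational symmetry we may take $v=\|v\|e_1$, and by Anderson's inequality (applied to the symmetric convex set $B(0,r')$ under the Gaussian law $\sqrt{2\zeta}W$) the probability $\mathbb{P}(\|v+\sqrt{2\zeta}W\|\leq r')$ is non-increasing in $\|v\|$, so it suffices to treat the worst case $\|v\|=r'$. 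In that case, expanding $\|r'e_1+\sqrt{2\zeta}W\|^2\leq (r')^2$ reduces the event to $W_1+a\|W\|^2\leq 0$ with $a:=\sqrt{2\zeta}/(2r')\leq 1/(20(d+20))$.

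The main step is then to condition on $R:=\sum_{i\geq 2}W_i^2$, which is independent of $W_1$. When $4a^2 R\leq 1$, the quadratic $aW_1^2+W_1+aR\leq 0$ holds on the interval between the roots $W_{\pm}=(-1\pm\sqrt{1-4a^2 R})/(2a)$, and using the elementary inequality $\sqrt{1-y}\geq 1-y$ for $y\in[0,1]$ this interval contains $[-1/(2a),\,-2aR]$. Hence the conditional probability is at least
\[
\Phi(-2aR)-\Phi(-1/(2a))\;\geq\;\tfrac{1}{2}-\tfrac{2aR}{\sqrt{2\pi}}-\Phi(-1/(2a)).
\]

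Taking expectations over $R$ and using $\mathbb{E}[R]=d-1$ together with $a\leq 1/(20(d+20))$, the linear term contributes at most $2a(d-1)/\sqrt{2\pi}\leq 1/(10\sqrt{2\pi})<0.05$, while $\Phi(-1/(2a))\leq\Phi(-10(d+20))$ is negligible. The main obstacle is the tail event $\{4a^2 R>1\}$, on which the quadratic has no real roots and the bound above fails; but since $1/(4a^2)\geq 100(d+20)^2$, Markov's inequality (equivalently a Hanson--Wright tail bound for $\chi^2_{d-1}$) gives $\mathbb{P}(R>1/(4a^2))\leq (d-1)/(100(d+20)^2)\leq 1/8400$, so this tail contributes negligibly as well. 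Combining everything, the probability is at least $1/2-0.05-1/8400-\Phi(-200)>1/3$, which completes the proof.
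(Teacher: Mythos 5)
Your proof is correct, and it takes a genuinely different route from the paper's.

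Both arguments reduce the problem to the probability that a scaled Gaussian perturbation of a boundary point of a radius-$r'$ ball lands back inside that ball, but the two proofs handle this differently. The paper fixes $x$ at the origin on the boundary of a ball $B(a,r')\subseteq\hat{\mathcal{K}}_{r'}$ and argues in a rather ``by hand'' fashion: it shows $W_1\in[1/10,100]$ with probability $\geq 0.45$, manipulates this to the inequality $(\sqrt{2\zeta}W_1-r')^2+2\zeta(d+20)\leq (r')^2$, and separately controls $\sum_{j\geq 2}W_j^2\leq d+20$ via Hanson--Wright with probability $\geq 9/10$; a union bound gives $0.45-0.1=0.35\geq 1/3$. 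Your proof instead (i) formally reduces to the worst case $\|x-u\|=r'$ via Anderson's inequality — which the paper only asserts informally in its ``WLOG'' step — then (ii) conditions on $R=\sum_{i\geq 2}W_i^2$ and writes the event exactly as the quadratic $aW_1^2+W_1+aR\leq 0$, lower-bounding the conditional probability by the explicit interval $[-1/(2a),-2aR]$ between the roots. This is cleaner and more structured: the $\chi^2$ tail enters only as a mild Markov bound, and the dependence on $d$ flows transparently through $a\leq 1/(20(d+20))$ and $\mathbb{E}[R]=d-1$.

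One small remark: your Markov-inequality treatment of the tail event $\{4a^2R>1\}$ is actually unnecessary. On that event the conditional probability of the target event is exactly $0$ (the quadratic has no real roots and is strictly positive), while your expression $\Phi(-2aR)-\Phi(-1/(2a))$ is negative there; so the pointwise bound
$\mathbb{P}(\text{event}\mid R)\geq \Phi(-2aR)-\Phi(-1/(2a))$
holds for \emph{all} $R$, and taking unconditional expectations already gives $\geq 1/2 - 2a(d-1)/\sqrt{2\pi} - \Phi(-1/(2a)) > 1/3$ without any separate tail accounting. Subtracting the extra $1/8400$ is harmless but superfluous.
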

\begin{proof}
Without loss of generality, we may assume that $x$ is the origin and that $\hat{\mathcal{K}}_{r'}$ contains the ball $B(a,r')$ where $a=(r', 0,\ldots, 0)^\top$  (since $\hat{\mathcal{K}}_{r'} = \hat{\mathcal{K}}+B(0,r')$  implies that there is a ball contained in $\hat{\mathcal{K}}_{r'}$ that also contains $x$ on its boundary. 
 We can then translate and rotate $\hat{\mathcal{K}}_{r'}$ to put $x$ and $a$ in the desired position).

Since $\mathbb{P}(\frac{1}{10} \leq W_1 \leq 100) \geq 0.45$, with probability at least $0.45$ we have that
\be
\frac{1}{10} \leq W_1 \leq 100
\ee
but $\zeta_{\mathrm{max}} = (\frac{r'}{10\sqrt{2}(d+20)})^2$, and hence, with probability at least $0.45$,
\be
\frac{\sqrt{2\zeta}}{r'}(d + 20) \leq W_1 \leq \frac{r'}{\sqrt{2\zeta}}.
\ee
But our choice of $\zeta_{\mathrm{max}}$ implies that $\sqrt{\frac{(r')^2}{2\zeta}}  - \frac{1}{\sqrt{\frac{(r')^2}{2\zeta}}}(d + 20) >0$, implying that
\be
\frac{r'}{\sqrt{2\zeta}}- \left(\sqrt{\frac{(r')^2}{2\zeta}} - \frac{1}{\sqrt{\frac{(r')^2}{2\zeta}}}(d + 20) \right) \leq W_1 \leq \frac{r'}{\sqrt{2\zeta}}+ \left(\sqrt{\frac{(r')^2}{2\zeta}} - \frac{1}{\sqrt{\frac{(r')^2}{2\zeta}}}(d + 20) \right).
\ee
But for any $\mathsf{a}>0$, we have $\sqrt{\mathsf{a}}- \frac{t}{\sqrt{\mathsf{a}}} \leq \sqrt{\mathsf{a}-t}$ for every $t\in [0,\mathsf{a})$, which implies
\be
\frac{r'}{\sqrt{2\zeta}}- \sqrt{\frac{(r')^2}{2\zeta} - (d + 20)} \leq W_1 \leq \frac{r'}{\sqrt{2\zeta}}+ \sqrt{\frac{(r')^2}{2\zeta} - (d + 20)}.
\ee
Therefore
\be
r'- \sqrt{(r')^2 - 2\zeta(d + 20)} \leq \sqrt{2\zeta}W_1 \leq r'+ \sqrt{(r')^2 - 2\zeta(d + 20)}.
\ee
Hence,
\be
(\sqrt{2\zeta}W_1 -r')^2 \leq (r')^2 - 2\zeta(d + 20),
\ee
which implies that
\be \label{eq:roundness1}
(\sqrt{2\zeta}W_1 -r')^2 + 2\zeta(d + 20) \leq (r')^2.
\ee
But by the Hanson-Wright inequality
\be \label{eq:Hanson}
\mathbb{P}(\sum_{j=2}^d W_j^2  \geq d + 20) \leq e^{-\frac{21}{8}} < \frac{1}{10}.
\ee
Thus, Equations \eqref{eq:Hanson} and \eqref{eq:roundness1} imply that with probability at least $0.45 -\frac{1}{10} \geq \frac{1}{3}$ we have
\be
\|\sqrt{2\zeta}W-a\| &= \|\sqrt{2\zeta}W-(r', 0,\ldots, 0)^\top\|^2\\
&= (\sqrt{2\zeta}W_1 -r')^2 + 2\zeta \sum_{j=2}^d W_j^2\\
&\leq (\sqrt{2\zeta}W_1 -r')^2 + 2\zeta(d + 20)\\
&\stackrel{{\scriptsize \textrm{Eq. }} \eqref{eq:roundness1}}{\leq}  (r')^2,
\ee
implying that $W \in B(a,r') \subseteq \hat{\mathcal{K}}_{r'}$ with probability at least $\frac{1}{3}$. 
\end{proof}

\subsection{Proof of Main Result (Theorem \ref{thm:main})} \label{sec:proof_of_main_result}
In this section, we prove Theorem \ref{thm:main}.  We do so by applying the bounds on the smoothness of $f_\sigma$  of Section \ref{sec:smoothing} to Theorem \ref{thm:error}. 

We note that in this section we will use ``$\alpha$"  and ``$\beta$" exclusively to denote the multiplicative and additive noise levels of $F$.  We will then set the smooth oracle $\tilde{f}$ to be $\tilde{f} = \tilde{f}_\sigma$, where $\tilde{f}_\sigma$ is the smooth function obtained from $F$,  defined in Equation  \eqref{eq:smoother}. As an intermediate step in proving the main result, we show that $\tilde{f}_\sigma$ has multiplicative noise level $2\alpha$ and additive noise level $2\beta$.

\begin{proof}
We will assume that $\alpha<\frac{1}{800}$.  This assumption is consistent with the statment of Theorem \ref{thm:main}, which assumes that $\alpha = O(1)$. 

Define the following constants: $M = 2 \lambda R + 2 \beta$, $\hat{M}= 6 \lambda R + \beta$, $L= \frac{4 \hat{M}}{\sigma^2}$, $G= \frac{2\hat{M}}{\sigma}$, $b_{\textrm{max}}=1$, $\zeta_{\textrm{max}} =  \left(\frac{r'}{10\sqrt{2}(d+20)}\right)^2$, and $\tilde{\lambda} = \frac{\sqrt{2d}}{\sigma} (2 \lambda R(1+ 2\alpha) + 2\beta)$.

We set $\sigma = \frac{1}{2} \min\left(\frac{\beta}{\lambda(1+\alpha)\sqrt{d}}, \frac{\mathsf{r}}{\sqrt{\log(\frac{1}{\alpha})+d}}\right)$.  Recall from Section \ref{sec:smoothed_from_nonsmooth} that $\sigma$ determines the amount of smoothness in $\tilde{f}_\sigma$.  A larger value of $\sigma$ means that $\tilde{f}_\sigma$ will be smoother, decreasing the running time of the algorithm. On the other hand, a smaller value of $\sigma$ means that $\tilde{f}_\sigma$ will be a closer approximation to $F$, and consequently lead to a lower error.  We choose $\sigma$ in such a way so that the error $\hat{\varepsilon}$  will be bounded by the desired value $\hat{\varepsilon}$.

Set parameters of Algorithms \ref{alg:SGLD} and \ref{alg:annealing} as follows: 
\begin{itemize}
\item Fix $\varepsilon = \frac{1}{50}$.

\item Let $\mathfrak{D} = \frac{2}{3}\hat{\varepsilon}$.

\item Define $J_0 := \hat{F}(x_0)$ and set the number of epochs to be
\be\label{eq:k_max}
k_{\mathrm{max}}= \left\lceil \frac{\log(5 J_0/\mathfrak{D})}{\log(2)}\right\rceil+1.
\ee

\item For every $0\leq k \leq k_{\mathrm{max}}$, let $J_k := \hat{F}(x_k)$, and define $\hat{J}_k := \max(J_k, \mathfrak{D})$.

\item Fix $\delta = \frac{\delta'}{6( k_{\mathrm{max}}+1)}$.

\item At every $k \geq 0$, set the temperature to be
\be \label{eq:xi_k}
\xi_k =  \frac{4 d \log(R/\min(\frac{\varepsilon}{2\lambda} \mathfrak{D}, r'))}{\frac{1}{5} \varepsilon \hat{J}_k}.
\ee
  Define $\bar{\xi} := \frac{4 d \log(R/\min(\frac{\varepsilon}{2\lambda}  \mathfrak{D}, r'))}{\frac{1}{25}\varepsilon  \mathfrak{D}}$.

\item Set $r = \frac{\delta}{\bar{\xi}\tilde{\lambda}}$.

\item Define
\be
\bar{\eta}^{\dagger} := c\min \left\{\zeta_{\mathrm{max}}, d\frac{\omega^2}{\lambda^2}, \frac{b_{\mathrm{max}}^2}{d}, \frac{1}{Rd^3((\bar{\xi} G)^2 +\bar{\xi} L)^2}\right\}
\ee
and
\be
\mathfrak{B}':= \frac{\left(d\log(2\frac{R}{r}) +\delta + 1+ \log(\frac{1}{\delta})\right))}{2 d \log(R/\min(\frac{\varepsilon}{2\lambda} \mathfrak{D}, r'))}.
\ee

 \item Set the number of steps $i_{\mathrm{max}}$ for which we run the the Markov chain $X$ in each epoch to be
 \be \label{eq:i_max}
 \hspace{-12mm} i_\mathrm{max} = \left \lceil \left ( \frac{8R\tilde{\lambda}\xi_k + 4d(1+ \log(1+ \bar{\xi}) + \log(\frac{2R\tilde{\lambda}}{\delta})) + 4\log(\frac{1}{\delta})}{\left(\frac{1}{1536R}\sqrt{\bar{\eta}^{\dagger}/d} e^{-\frac{150d}{\varepsilon} \left[\frac{\alpha}{1-\alpha^{\dagger}}\left(3+\varepsilon \mathfrak{B}' + \frac{\beta }{\mathfrak{D}}\right) + \frac{\beta}{\mathfrak{D}} \right]\log(R/\min(\frac{\varepsilon}{2\lambda} \mathfrak{D}, r'))}\right)^2} \right )^{\frac{1}{1-\frac{150}{\varepsilon}\alpha}} \right \rceil +1.
 \ee
\item Define $\mathfrak{B}:= \frac{\left(d\log(2\frac{R}{r}) +\delta +  \log(i_{\max}+1) + \log(\frac{1}{\delta})\right))}{2 d \log(R/\min(\frac{\varepsilon}{2\lambda} \mathfrak{D}, r'))}$.

\item For every $\xi >0$ define
\be \label{eq:eta_k}
\hspace{-2mm} \eta(\xi) := c\min \left\{\zeta_{\mathrm{max}}, d\frac{\omega^2}{\lambda^2}, \frac{b_{\mathrm{max}}^2}{d}, \frac{(e^{-\frac{100d}{\varepsilon} \left[\frac{\alpha}{1-\alpha^{\dagger}}\left(3+\varepsilon \mathfrak{B} + \frac{\beta}{\mathfrak{D}}\right) + \frac{\beta}{\mathfrak{D}} \right]\log(R/\min(\frac{\varepsilon}{2\lambda} \mathfrak{D}, r'))})^2}{Rd^3((\xi G)^2 +\xi L)^2}\right\},
\ee
\noindent where $\omega = \varepsilon \mathfrak{D}$, and $c$ is the universal constant in Lemma 15 of \cite{hitting_times}.
\noindent We set the step size at each epoch $k$ to be $\eta_k = \eta(\xi_k)$.  We also define $\bar{\eta} := \eta(\bar{\xi})$.

\item Set $D = \sqrt{2\bar{\eta} d}$.

\end{itemize}

\noindent We determine the constants for which $\tilde{f}= \tilde{f}_\sigma$ satisfies the various assumptions of Theorem \ref{thm:error}.

Since $\sigma  = \frac{1}{2} \min\left(\frac{\beta}{\lambda(1+\alpha)\sqrt{d}}, \frac{\mathsf{r}}{8\sqrt{\log(\frac{1}{\alpha})+d}}\right)$, by Lemma \ref{lemma:noise_smooth}, we have that
\be
|\tilde{f}_\sigma(x) - F(x)| \leq 2\alpha F(x) + 2\beta \quad \quad \forall x \in \mathcal{K}.
\ee
So, with a slight abuse of notation, we may state that $\tilde{f} = \tilde{f}_\sigma$ has multiplicative noise of level $2\alpha$ and additive noise of level $2\beta$, if we use ``$\alpha$" and ``$\beta$" to denote the noise levels of $\hat{F}$.

 Hence, $M = 2 \lambda R + 2 \beta \geq \sup_{x\in \mathcal{K}} \tilde{f}_\sigma(x)$.  
  By Lemma \ref{lemma:contraint_round}, part 1 of Assumption \ref{assumption:A} is satisfied with constant $\zeta_{\textrm{max}}$.  
 By Lemma \ref{lemma:Hessian} and Proposition \ref{max_value},  $\tilde{f}_\sigma$ satisfies parts 2 and 3 of Assumption \ref{assumption:A} with constants $L$, $G$ and $b_{\textrm{max}}$, (recall that we defined these constants at the beginning of this proof).  
  By Proposition \ref{lemma:smooth_gradient},  $\|\nabla \tilde{f}_\sigma(x) \| \leq \tilde{\lambda}$ for all $x \in \mathcal{K}$. 
 Therefore, applying Theorem \ref{alg:annealing} with the above constants and the smoothed function $f_\sigma$, we have, 
\be
 F(\hat{x}) -F(x^\star) &\leq \frac{1}{1-2\alpha}(\mathfrak{D} + 2\beta) \leq \hat{\varepsilon},
\ee
with running time that is polynomial in $d$, $e^{\frac{8d}{\varepsilon} \left[\frac{\alpha}{1-\alpha^{\dagger}}\left(3+\varepsilon \mathfrak{B}' + \frac{\beta}{\mathfrak{D}}\right) + \frac{\beta}{\mathfrak{D}} \right] \log(R/\min(\frac{\varepsilon}{2\lambda} \mathfrak{D}, r'))}$, $R$, ${\lambda}$, ${\tilde{\lambda}}$, $L$, G, $\zeta_{\mathrm{max}}$, $b_{\mathrm{max}}$, and $\log(\frac{1}{\delta})$. This completes the proof of the Theorem.
\end{proof}

	\bibliographystyle{plain}
\bibliography{annealing} 
	
\end{document}